\titleformat*{\section}{\large\bfseries}
\titleformat*{\subsection}{\normalsize\bfseries}
\newcolumntype{?}{!{\vrule width 1pt}}
\definecolor{color1}{RGB}{127,201,127}
\definecolor{color2}{RGB}{190,174,212}
\definecolor{color3}{RGB}{253,192,134}
\definecolor{color4}{RGB}{255,255,153}
\newtheorem{lemma}{Lemma}[section]
\newtheorem{proposition}[lemma]{Proposition}
\newtheorem{theorem}[lemma]{Theorem}
\newtheorem{corollary}[lemma]{Corollary}
\theoremstyle{definition}
\newtheorem{remark}[lemma]{Remark}
\newtheorem{definition}[lemma]{Definition}
\newcommand{\Z}{\mathbb{Z}}
\newcommand{\C}{\mathbb{C}}
\newcommand{\R}{\mathbb{R}}
\renewcommand{\H}{\mathbb{H}}
\newcommand{\Q}{\mathcal{Q}}
\renewcommand{\i}{\mathbf{i}}
\renewcommand{\j}{\mathbf{j}}
\renewcommand{\k}{\mathbf{k}}
\newcommand{\D}{\mathcal{D}}
\title{Polygon recutting as a cluster integrable system}
\author{Anton Izosimov\thanks{
Department of Mathematics,
University of Arizona;
e-mail: {\tt izosimov@math.arizona.edu}
} }
\date{}
\begin{document}

\tikzset{->-/.style={decoration={
  markings,
  mark=at position .7 with {\arrow{>}}},postaction={decorate}}}
  
  \tikzset{-<-/.style={decoration={
  markings,
  mark=at position .3 with {\arrow{<}}},postaction={decorate}}}
  
\usetikzlibrary{angles, quotes}

\maketitle

\abstract{Recutting is an operation on planar polygons defined by cutting a polygon along a
 diagonal to remove a triangle, and then reattaching the triangle along the
same diagonal but with opposite orientation.  Recuttings along different diagonals generate an action of the affine symmetric group on the space of polygons. We show that this action is given by cluster transformations and is completely integrable. The integrability proof is based on interpretation of recutting as refactorization of quaternionic polynomials.



}

\tableofcontents

\section{Introduction}
Recent years have seen a spark of interest in discrete integrable systems, largely due to emerging connections with cluster algebras. Many of such systems are defined by iterating a certain geometric construction, with Schwartz's \textit{pentagram map}  \cite{OST} being the best known example. 
In the present paper we study another dynamical system of a somewhat similar nature: Adler's \textit{polygon recutting}~\cite{Adler}. 
Given a planar polygon, its recutting $\rho_i$ at a vertex $v_i$ is defined as follows. Detach the triangle formed by the vertices $v_{i-1}, v_i, v_{i+1}$ from the rest of the polygon by cutting along the diagonal $v_{i-1}v_{i+1}$. Then attach the triangle back  along the
same diagonal but with opposite orientation. Put differently, recutting $\rho_i$ at a vertex $v_i$ is reflection of $v_i$ in the perpendicular bisector of the diagonal $v_{i-1}v_{i+1}$, see Figure \ref{Fig2}. 

\begin{figure}[t]
\centering
\begin{tikzpicture}[scale = 1.1]
\coordinate (A) at (1,-0.5) {};
\coordinate[label=left:{$ v_{i-1}$}] (B) at (0,1) {};
\coordinate [label={[label distance=0pt]90:$v_i$}] (C) at (1,1.8) {};
\coordinate [label={[label distance=0pt]90:$v_i'$}] (Cp) at (2,1.8) {};
\coordinate [label=right:{$ v_{i+1}$}] (D) at (3,1) {};
\coordinate (E) at (2.5,0) {};
\coordinate (G) at (1.5,2) {};
\coordinate (H) at (1.5,0.5) {};

\draw [] (B) -- (D);
\draw [] (H) -- (G);
\draw [thick, red](B) -- (Cp) -- (D) ;
\draw [blue] (A) -- (B) -- (C) -- (D) -- (E)  -- cycle;
\end{tikzpicture}
\caption{Recutting $\rho_i$ at the vertex $v_i$ moves it to position $v_i'$. Other vertices remain intact.  }\label{Fig2}
\end{figure}
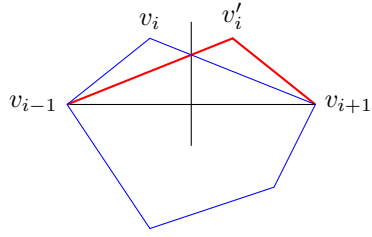
\par


Consider the group generated by recuttings $\rho_i$ at arbitrary vertices. The goal of the present paper is to understand the dynamics of that group as it acts on the space of polygons. It has long been known that this dynamics possesses many features of an integrable system:
\begin{itemize}
\item As observed in \cite{Adler} recuttings $\rho_i$ of a closed $n$-gon obey the relations of the affine symmetric group~$\tilde S_n$ (also known as the affine Weyl group $\tilde A_{n-1}$). Specifically, one has \begin{enumerate} \item $\rho_i^2 = \mathrm{id}$ for any vertex~$v_i$; \item $\rho_i \rho_j = \rho_j \rho_i$ for any non-consecutive vertices $v_i, v_j$ (here indices are considered modulo $n$, so that the vertices $v_n$ and $v_1$ are thought of as consecutive); \item the \textit{braid relation}
$
\rho_i \rho_{i+1}\rho_i = \rho_{i+1} \rho_{i}\rho_{i+1}
$ 
for any consecutive (modulo $n$) vertices $v_i$, $v_{i+1}$ (see  \cite{Adler2} and Proposition \ref{prop:braid} below).
\end{enumerate}

These relations in particular imply that the group generated by recuttings $\rho_i$ has {polynomial growth}. According to \cite{Veselov}, this is a necessary condition for integrability of a group action. In what follows, we refer to the action of  the affine symmetric group~$\tilde S_n$ on $n$-gons by recutting as the \textit{recutting action}, and to $\tilde S_n$ itself as the \textit{recutting group}.

\item Another result of  \cite{Adler}  is that recuttings constitute discrete symmetries of an integrable system known as the  \textit{dressing chain} \cite{veselov1993dressing}. This in particular provides a Lax (or zero curvature) representation and a number of invariants (first integrals) for  recutting dynamics. 
\item A different Lax representation (which also applies to recutting of polygons in spaces of dimension $d > 2$) is given in \cite{Adler2}.
\item The paper \cite{reshetikhin2005poisson} provides a Poisson-Lie group model for a discrete system which can be thought of as an extension of recutting. 
\item As shown in \cite{tabachnikov2012discrete}, recutting commutes with another conjecturally integrable system, the so-called \textit{discrete bicycle transformation}. The latter has a large number of invariants which are also preserved by recutting dynamics (see Remark \ref{rem:byc} below).
\item The paper \cite{glick2015devron} shows that recutting has the so-called \textit{Devron property}, a highly structured behavior of singularities common for cluster integrable systems.
\end{itemize}

 The main result of the present paper is that recutting of planar polygons is {indeed a completely integrable system}. Moreover, it is a \textit{cluster integrable system}, meaning that recutting at any vertex is a ($Y$-type) cluster transformation, and recutting invariants (first integrals) commute with respect to the log-canonical Poisson bracket associated with the corresponding quiver.

 \begin{table}[b!]
\centering
\begin{tabular}{?c?c|c|c?} \Xhline{2\arrayrulewidth} Space & \parbox{4cm}{\centering $\mathcal P_n^S  / S $, planar polygons closed up to similarity modulo similarities\\  } &  \parbox{4cm}{\centering \vskip 0.2cm $\mathcal P_n^E  / E$, planar polygons closed up to isometry modulo isometries \\ \vskip 0.2cm}  &  \parbox{4cm}{\centering $\mathcal P_n  / E$, closed planar polygons modulo isometries} \\\Xhline{2\arrayrulewidth}  
\parbox{3cm}{\centering{\quad \\Recutting-invariant Poisson structure\\ \vskip 0.3cm}} &     \parbox{4cm}{\centering Cluster Poisson structure} &   \parbox{4cm}{\centering Poisson structure on quaternionic polynomials}  & \parbox{4cm}{\centering -} \\\hline  \parbox{3cm}{\centering{\quad\\Recutting invariants (first integrals)\\\vskip 0.2cm}} & \parbox{4cm}{\centering -} &  \parbox{4cm}{\centering Yes} &  \parbox{4cm}{\centering Yes}  \\\hline   \parbox{3cm}{\centering{ \vskip 0.2cm  Recutting is ...\\ \vskip 0.2cm }} & \parbox{4cm}{\centering ... given by cluster transformations (Proposition \ref{prop:cluster})} &  \parbox{4cm}{\centering \vskip 0.2cm ... Arnold-Liouville integrable (Theorem \ref{thm1}) \vskip 0.2cm} & \parbox{4cm}{\centering \vskip 0.2cm ... integrable in the non-Hamiltonian sense (Theorem \ref{thm2}) \vskip 0.2cm}  \\   \Xhline{2\arrayrulewidth}\end{tabular}
\caption{ Polygon spaces and associated structures.}\label{table1}
\end{table}

We summarize our results in Table \ref{table1}. As can be seen from the table, different structures that arise in connection with recutting are defined on spaces of polygons with different periodicity conditions. The largest space that we consider is planar polygons closed up to an orientation-preserving similarity (i.e. a composition of rotations, translations, and homotheties). Such a polygon is understood as a bi-infinite sequence $v_i \in \C$ satisfying a quasi-periodicity condition $p_{i+n} = \psi(v_i)$ where $\psi$ is a similarity transformation $z \mapsto az + b$, called the \textit{monodromy} of the polygon. 
Let $\mathcal P_n^S \, / S$ be the space of planar $n$-gons closed up to similarity, modulo similarities. 
In Section \ref{sec:cluster} we interpret recutting on that space in terms of $Y$-type cluster mutations of a certain quiver $\Q_n$. Figure~\ref{Fig3} shows the quiver $\Q_5$ corresponding to pentagons. The general quiver $\Q_n$ has a similar structure, but with $2n$ vertices. Geometrically, the variables $y_i$ are the ratios of consecutive edges of the polygon, viewed as complex numbers, while $\bar y_i$ are complex conjugates of $y_i$. Recutting $\rho_i$ is achieved by mutation of quiver vertices $y_i$, $\bar y_i$ followed by interchanging those vertices.  This sequence of mutations is an example of a more general transformation known as a \textit{geometric R-matrix}~\cite{inoue2019cluster}. Geometric $R$-matrices are known to satisfy braid relations, which gives yet another proof of the braid relation for recutting. 

\begin{figure}[t]
\centering
\begin{tikzpicture}[
  ->,   
  thick,
  main node/.style={},
scale = 0.7]
  \newcommand*{\MainNum}{5}
  \newcommand*{\MainRadius}{3cm} 
    \newcommand*{\SecondRadius}{1.5cm} 
  \newcommand*{\MainStartAngle}{90}

  \path
    (0, 0) coordinate (M)
    \foreach \t [count=\i] in {1, ..., \MainNum} {
      +({1- \i)*360/\MainNum + \MainStartAngle}:\MainRadius)
      node[main node, align=center] (y_\i) {$ y_{\t}$}
    }
  ;  
    \path
    (0, 0) coordinate (M)
    \foreach \t [count=\i] in {1, ..., \MainNum} {
      +({1- \i)*360/\MainNum + \MainStartAngle}:\SecondRadius)
      node[main node, align=center] (bary_\i) {$\bar y_{\t}$}
    }
  ;  

  \foreach \i in {1, ..., \MainNum} {
    \pgfextracty{\dimen0 }{\pgfpointanchor{y_\i}{north}} 
    \pgfextracty{\dimen2 }{\pgfpointanchor{y_\i}{center}}
    \dimen0=\dimexpr\dimen2 - \dimen0\relax 
    \ifdim\dimen0<0pt \dimen0 = -\dimen0 \fi
    \pgfmathparse{2*asin(\the\dimen0/\MainRadius/2)}
    \global\expandafter\let\csname p\i-angle\endcsname\pgfmathresult
  }
  
    \foreach \i in {1, ..., \MainNum} {
    \pgfextracty{\dimen0 }{\pgfpointanchor{bary_\i}{north}} 
    \pgfextracty{\dimen2 }{\pgfpointanchor{bary_\i}{center}}
    \dimen0=\dimexpr\dimen2 - \dimen0\relax 
    \ifdim\dimen0<0pt \dimen0 = -\dimen0 \fi
    \pgfmathparse{2*asin(\the\dimen0/\SecondRadius/2)}
    \global\expandafter\let\csname q\i-angle\endcsname\pgfmathresult
  }

  \foreach \i [evaluate=\i as \nexti using {int(mod(\i, \MainNum)+1}]
  in {1, ..., \MainNum} {  
    \pgfmathsetmacro\StartAngle{   
      (\i-1)*360/\MainNum + \MainStartAngle
      + \csname p\i-angle\endcsname
    }
    \pgfmathsetmacro\EndAngle{
      (\nexti-1)*360/\MainNum + \MainStartAngle
      - \csname p\nexti-angle\endcsname
    }
    \ifdim\EndAngle pt < \StartAngle pt
      \pgfmathsetmacro\EndAngle{\EndAngle + 360}
    \fi
    \draw[<-]
      (M) ++(\StartAngle:\MainRadius)
      arc[start angle=\StartAngle, end angle=\EndAngle, radius=\MainRadius]
    ;
  }
  
    \foreach \i [evaluate=\i as \nexti using {int(mod(\i, \MainNum)+1}]
  in {1, ..., \MainNum} {  
    \pgfmathsetmacro\StartAngle{   
      (\i-1)*360/\MainNum + \MainStartAngle
      + \csname q\i-angle\endcsname
    }
    \pgfmathsetmacro\EndAngle{
      (\nexti-1)*360/\MainNum + \MainStartAngle
      - \csname q\nexti-angle\endcsname
    }
    \ifdim\EndAngle pt < \StartAngle pt
      \pgfmathsetmacro\EndAngle{\EndAngle + 360}
    \fi
    \draw[<-]
      (M) ++(\StartAngle:\SecondRadius)
      arc[start angle=\StartAngle, end angle=\EndAngle, radius=\SecondRadius]
    ;
    \draw [->] (bary_\nexti)to [bend right=30]   (y_\i);
     \draw [->] (y_\nexti)to [bend right=30]   (bary_\i);
  }

%
\end{tikzpicture}
\caption{ The quiver $\Q_5$ corresponding to recutting of pentagons.}\label{Fig3}
\end{figure}
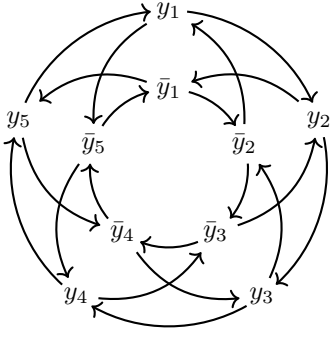

As a consequence of this cluster description, recutting on the space $\mathcal P_n^S  / S$ of planar polygons closed up to similarity has an invariant Poisson structure, namely the standard log-canonical structure defined by the quiver $\Q_n$. However, we are not aware of any invariant functions (first integrals) of recutting on $\mathcal P_n^S  / S$ besides the conjugacy class of the monodromy (whose preservation in particular means that the angle sum of the polygon is conserved) and, for even $n$, the sum of angles at every second vertex. To obtain additional invariants, we consider a smaller space $\mathcal P_n^E$ of planar polygons closed up to isometry (i.e. composition of rotations and translations). Our main result in that setting is the following:

  \begin{theorem}\label{thm1}
The recutting action of the group $\tilde S_n$ on the $2n$-dimensional space $\mathcal P_n^E  / E$ of planar $n$-gons closed up orientation-preserving isometry modulo said isometries is Arnold-Liouville integrable. Specifically, one has the following:
\begin{enumerate}
\item The recutting action on $\mathcal P_n^E  / E$  has an invariant Poisson structure and $\lfloor 3n/2\rfloor +~1$ independent first integrals (invariant functions). Out of those integrals, $2 \lfloor n/2 \rfloor + 2$ are Casimirs, so that the number of additional integrals is $\lceil n/2 \rceil - 1$, i.e. half of the dimension of symplectic leaves.
\item A generic joint level set of first integrals is a finite union of tori of dimension $\lceil n/2 \rceil - 1$. For each such torus $K$, the subgroup $G_{K} := \{ \omega \in \tilde S_n \mid \omega(K) \subset K\}$ of the recutting group elements preserving $K$  has a finite index in $\tilde S_n$. There is a flat structure on $K$ such that the action of~$G_{K}$ on $K$ is by translations. 
\end{enumerate}

 \end{theorem}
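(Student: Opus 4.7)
The plan is to exploit the refactorization-of-quaternionic-polynomials picture promised in the abstract. I would represent each edge $e_i = v_{i+1}-v_i$ of a polygon in $\mathcal P_n^E$ by a purely imaginary quaternion (or, after a standard rescaling, by a linear quaternionic polynomial $A_i(z) = z - q_i$), so that an $n$-gon corresponds to an ordered product $P(z) = A_n(z)\cdots A_1(z)$, with the isometric-closure condition translating into a fixed form for the lowest and highest coefficients of $P(z)$, and with the quotient by orientation-preserving isometries corresponding to simultaneous conjugation by a unit quaternion together with a translation. Under this dictionary, recutting $\rho_i$ becomes the elementary refactorization $A_{i+1}A_i = A'_iA'_{i+1}$, so the monodromy polynomial $P(z)$ is manifestly $\tilde S_n$-invariant and the coefficients of $P(z)$, considered modulo conjugation, are natural candidates for first integrals.

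Second, I would introduce the Poisson bracket on quaternionic polynomials referenced in Table~\ref{table1}. Concretely, I would put a Sklyanin-type bracket on the space of ordered factorizations (the one for which refactorization is automatically a Poisson map), verify that it descends to $\mathcal P_n^E/E$ once the closure and isometry actions are quotiented out, and take the resulting bracket as the invariant Poisson structure of the theorem. Counting independent functionally invariant coefficients of a degree-$n$ quaternionic polynomial modulo conjugation, and subtracting those constrained by closure, gives exactly $\lfloor 3n/2\rfloor + 1$ independent invariants; identifying which of these lie in the center of the bracket (the real/scalar coefficients of $P(z)$ together with an extra invariant in the even case) yields the claimed $2\lfloor n/2\rfloor + 2$ Casimirs. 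Commutativity of the remaining invariants should follow from the standard Sklyanin argument: any two functions depending only on the product $P(z)$ Poisson-commute with respect to a factorization bracket, just as traces of powers of a Lax matrix commute in the classical setting. This establishes part (1).

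For part (2), I would argue as follows. On a generic joint level set inside a symplectic leaf of dimension $2(\lceil n/2\rceil - 1)$, the Hamiltonian flows of the $\lceil n/2\rceil - 1$ non-Casimir integrals give a commuting $\R^{\lceil n/2\rceil-1}$-action. Compactness of level sets comes from the factorization description: fixing $P(z)$ and the closure data, the set of admissible factorizations into linear quaternionic factors is a compact abelian variety (a real form of a Jacobian of a spectral curve, in analogy with the Flaschka--Haine/Reyman--Semenov-Tian-Shansky picture for refactorization systems). Standard Arnold--Liouville arguments then force each component to be a torus of the asserted dimension and put a flat structure on it for which the commuting Hamiltonian flows act by translation. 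To conclude the translation statement for recutting itself, I would show that each generator $\rho_i$ is the time-$1$ map of a Hamiltonian flow generated by an integral (the key computation is that, in the refactorization picture, an elementary swap factors through a flow of a coadjoint-orbit Hamiltonian); this makes $\rho_i$ act by translation on every torus it preserves. Finally, since the integrals are $\tilde S_n$-invariant the full group $\tilde S_n$ permutes the finitely many tori in a level set, so the stabilizer $G_K$ has finite index, and its action on $K$ is by translations because it is generated by maps that already act by translations.

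The main obstacle I expect is the Poisson-theoretic content: identifying the correct Sklyanin bracket on quaternionic polynomials so that (i) refactorizations are Poisson automorphisms, (ii) the bracket descends to the quotient $\mathcal P_n^E/E$, and (iii) the monodromy invariants Poisson-commute. Because the quaternions are noncommutative, the classical $r$-matrix derivation of these three properties does not transfer verbatim and requires a careful choice of $r$-matrix adapted to the quaternionic multiplication and to the real structure coming from the planarity of the polygon. A secondary but more mechanical difficulty is the compactness statement underlying the torus structure, which I expect to reduce to analyzing the real points of a spectral curve attached to the refactorization Lax representation; this is essentially a computation once the Poisson side is in place.
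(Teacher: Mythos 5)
Your high-level route is the same as the paper's: edges as imaginary quaternions, recutting as refactorization of linear factors, conjugation-invariant coefficients of the product as integrals, and commutativity via multiplicativity of a bracket. But there are genuine gaps on exactly the points where the proof has to do work. You correctly flag the construction of the invariant Poisson structure as the main obstacle, and you leave it unresolved. The paper resolves it not by guessing an $r$-matrix on $\H$ directly but by identifying the algebra $\tilde\H[t]$ of \emph{special} quaternionic polynomials with the algebra of $2$-periodic difference operators $\mathcal D$ over $\C$ satisfying $\mathcal T\mathcal D\mathcal T^{-1}=\bar{\mathcal D}$ (Proposition~\ref{prop:h0DO}), transports a known multiplicative bracket on such operators, and handles the real structure via Lemma~\ref{lemma:cr}; the six properties in Proposition~\ref{prop:sppb} (notably that fixed-degree polynomials are a Poisson subspace, that central functions commute, and that $|f(0)|$ is a Casimir) are precisely what let the bracket descend to the gauge quotient $X_{1,\dots,1}/(\C^*)^n\simeq\tilde{\mathcal P}_n^E/E$. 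A second gap is independence of the integrals: you assert the count $\lfloor 3n/2\rfloor+1$ from coefficient-counting, but give no argument that $E_1,\dots,E_n,I_0,\dots,I_{\lfloor n/2\rfloor}$ are functionally independent. That is Lemma~\ref{lemma:ind1}, which factors the evaluation map through $(z_i,\alpha)\mapsto f(t)\mapsto(\bar ff,\,\mathrm{Re}\,f)$ and shows each stage is a submersion off a measure-zero set, using Proposition~\ref{prop:facspec} for the first stage and an explicit sum-of-two-squares construction for the second; nothing in your proposal supplies this.

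For part~(2), two of your ingredients also need repair. Your compactness argument (``real points of a Jacobian of a spectral curve'') is much heavier than needed and would itself require justification; the paper's observation is simply that all $|z_i|^2$ are Casimirs, so the symplectic leaves---and hence the joint level sets of the integrals---are automatically compact. More importantly, your mechanism for getting translations, namely that each $\rho_i$ is the time-$1$ map of a Hamiltonian flow of one of the integrals, is not established and is implausible as stated: $\rho_i$ is an involution, and it will typically not preserve a given Liouville torus (it can permute components), so one cannot reduce to Hamiltonian flows in this way. The paper instead appeals to the discrete version of the Arnold--Liouville theorem~\cite{Veselov}: any Poisson automorphism that preserves all first integrals acts as a translation on each Liouville torus it preserves; combined with compactness (hence finitely many components per level set), this gives both the finite index of $G_K$ and the translation action.
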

 

 We prove  Theorem \ref{thm1} in Section \ref{sec:ial}. Note that the second part of the theorem (quasi-periodic dynamics on tori) is a standard consequence of the first one, so most of the section is devoted to the proof of the first part (existence of invariant Poisson structure, invariants, and their independence). The idea of the proof is based on the connection between recutting and refactorization of quaternionic polynomials. Note that due to non-commutativity of the skew field $\H$ of quaternions, a typical polynomial over $\H$ can be factored into linear factors in many different ways. In particular, a generic quadratic polynomial over $\H$ has two different factorizations. What we show is that the map interchanging those two factorizations can be geometrically interpreted as recutting. 
 As a result, integrability of recutting comes as a consequence of algebraic properties of quaternionic polynomials combined with some Poisson-Lie theory.
 
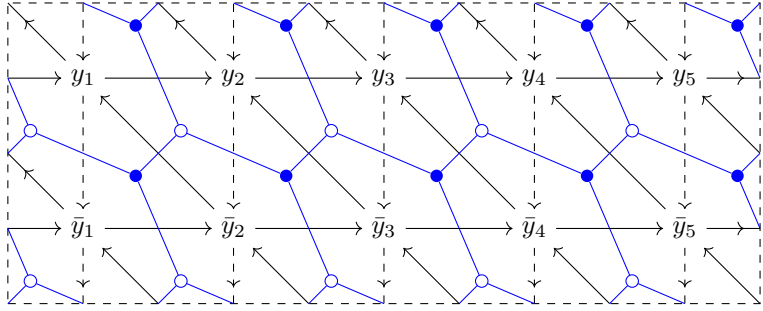
\begin{figure}
\centering
\begin{tikzpicture}
\node (A) at (0,0) {$y_{1}$};
\node (B) at (2,0) {$y_{2}$};
\node (C) at (4,0) {$y_{3}$};
\node (D) at (6,0) {$y_{4}$};
\node (E) at (8,0) {$y_{5}$};
\node (Ap) at (0,-2) {$\bar y_{1}$};
\node (Bp) at (2,-2) {$\bar y_{2}$};
\node (Cp) at (4,-2) {$\bar y_{3}$};
\node (Dp) at (6,-2) {$\bar y_{4}$};
\node (Ep) at (8,-2) {$\bar y_{5}$};
\draw [->, dashed] (A) -- (Ap);
\draw [->, dashed] (B) -- (Bp);
\draw [->, dashed] (C) -- (Cp);
\draw [->, dashed] (D) -- (Dp);
\draw [->, dashed] (E) -- (Ep);
\draw [->-, dashed] (Ap) -- +(0,-1);
\draw [->-, dashed] (Bp) -- +(0,-1);
\draw [->-, dashed] (Cp) -- +(0,-1);
\draw [->-, dashed] (Dp) -- +(0,-1);
\draw [->-, dashed] (Ep) -- +(0,-1);
\draw [<-, dashed] (A) -- +(0,1);
\draw [<-, dashed] (B) -- +(0,1);
\draw [<-, dashed] (C) -- +(0,1);
\draw [<-, dashed] (D) -- +(0,1);
\draw [<-, dashed] (E) -- +(0,1);
\draw [<-] (A) -- +(-1,0);
\draw [<-] (Ap) -- +(-1,0);
\draw [->] (A) -- (B);
\draw [->] (B) -- (C);
\draw [->] (C) -- (D);
\draw [->] (D) -- (E);
\draw [->-] (E) -- +(1,0);
\draw [->-] (Ep) -- +(1,0);
\draw [->] (Ap) -- (Bp);
\draw [->] (Bp) -- (Cp);
\draw [->] (Cp) -- (Dp);
\draw [->] (Dp) -- (Ep);
\draw [->-] (Ap) --+(-1,1);
\draw [->] (Bp) -- (A);
\draw [->] (Cp) -- (B);
\draw [->] (Dp) -- (C);
\draw [->] (Ep) -- (D);
\draw [<-] (E) -- +(1,-1);
\draw [->-] (A) --+(-1,1);
\draw [->-] (B) --+(-1,1);
\draw [->-] (C) --+(-1,1);
\draw [->-] (D) --+(-1,1);
\draw [->-] (E) --+(-1,1);
\draw [<-] (Ap) --+(1,-1);
\draw [<-] (Bp) --+(1,-1);
\draw [<-] (Cp) --+(1,-1);
\draw [<-] (Dp) --+(1,-1);
\draw [<-] (Ep) --+(1,-1);
\draw [dashed] (-1,1) -- (9,1) -- (9,-3) -- (-1,-3) -- cycle;

\node [circle, fill = blue, scale=0.5] (B1) at (0.7,-1.3) {};
\node [circle, fill = blue, scale=0.5] (B2) at (2.7,-1.3) {};
\node [circle, fill = blue, scale=0.5] (B3) at (4.7,-1.3) {};
\node [circle, fill = blue, scale=0.5] (B4) at (6.7,-1.3) {};
\node [circle, fill = blue, scale=0.5] (B5) at (8.7,-1.3) {};
\node [circle, fill = blue, scale=0.5] (B6) at (0.7,0.7) {};
\node [circle, fill = blue, scale=0.5] (B7) at (2.7,0.7) {};
\node [circle, fill = blue, scale=0.5] (B8) at (4.7,0.7) {};
\node [circle, fill = blue, scale=0.5] (B9) at (6.7,0.7) {};
\node [circle, fill = blue, scale=0.5] (B10) at (8.7,0.7) {};
\node [circle, draw = blue, scale=0.5] (W0) at (-0.7,-0.7) {};
\node [circle, draw = blue, scale=0.5] (W1) at (1.3,-0.7) {};
\node [circle, draw = blue, scale=0.5] (W2) at (3.3,-0.7) {};
\node [circle, draw = blue, scale=0.5] (W3) at (5.3,-0.7) {};
\node [circle, draw = blue, scale=0.5] (W4) at (7.3,-0.7) {};
\node [circle, draw = blue, scale=0.5] (W5) at (-0.7,-2.7) {};
\node [circle, draw = blue, scale=0.5] (W6) at (1.3,-2.7) {};
\node [circle, draw = blue, scale=0.5] (W7) at (3.3,-2.7) {};
\node [circle, draw = blue, scale=0.5] (W8) at (5.3,-2.7) {};
\node [circle, draw = blue, scale=0.5] (W9) at (7.3,-2.7) {};
\draw[blue] (W0) -- (B1) -- (W1) -- (B2) -- (W2) -- (B3) -- (W3) -- (B4) -- (W4) -- (B5);
\draw[blue] (B6) -- (W1);
\draw[blue] (B7) -- (W2);
\draw[blue] (B8) -- (W3);
\draw[blue] (B9) -- (W4);
\draw[blue] (B1) -- (W6);
\draw[blue] (B2) -- (W7);
\draw[blue] (B3) -- (W8);
\draw[blue] (B4) -- (W9);
\draw[blue] (B6) -- +(0.3,0.3);
\draw[blue] (B7) -- +(0.3,0.3);
\draw[blue] (B8) -- +(0.3,0.3);
\draw[blue] (B9) -- +(0.3,0.3);
\draw[blue] (B10) -- +(0.3,0.3);
\draw[blue] (B6) -- +(-0.7,0.3);
\draw[blue] (B7) -- +(-0.7,0.3);
\draw[blue] (B8) -- +(-0.7,0.3);
\draw[blue] (B9) -- +(-0.7,0.3);
\draw[blue] (B10) -- +(-0.7,0.3);
\draw[blue] (W5) -- +(-0.3,-0.3);
\draw[blue] (W6) -- +(-0.3,-0.3);
\draw[blue] (W7) -- +(-0.3,-0.3);
\draw[blue] (W8) -- +(-0.3,-0.3);
\draw[blue] (W9) -- +(-0.3,-0.3);
\draw[blue] (W5) -- +(+0.7,-0.3);
\draw[blue] (W5) -- +(-0.3,+0.7);
\draw[blue] (W6) -- +(+0.7,-0.3);
\draw[blue] (W7) -- +(+0.7,-0.3);
\draw[blue] (W8) -- +(+0.7,-0.3);
\draw[blue] (W9) -- +(+0.7,-0.3);
\draw[blue] (W0) -- +(-0.3,-0.3);
\draw[blue] (W0) -- +(-0.3,+0.7);
\draw[blue] (B5) -- +(0.3,0.3);
\draw[blue] (B5) -- +(0.3,-0.7);
\draw[blue] (B10) -- +(0.3,-0.7);
\end{tikzpicture}
\caption{ Embedding of the quiver $\Q_5$ into a torus and its dual graph.}\label{FigT}
\end{figure}
 \begin{remark}
 The Poisson structure on polygons closed up to isometry (coming from quaternionic polynomials) is compatible with the cluster structure on polygons closed up to similarity in the sense that the natural map $\mathcal P_n^E  / E \to \mathcal P_n^S  / S $ is Poisson. Furthermore, one can relate those structures as follows. As can be seen in Figure~\ref{FigT}, the quiver $\Q_n$ embeds in a torus (the quiver is represented by solid black arrows, and the opposite sides of the dashed rectangle are identified). Furthermore, by enhancing the quiver $\Q_n$ with obsolete arrows from each $y_i$ to the corresponding $\bar y_i$ and back (dashed black arrows in the figure), one gets a quiver with bipartite dual $\Q_n^*$ (the blue graph in the figure). As a result, one can use the techniques of \cite{GSTV, GK} to build a Poisson structure on the space $\Sigma(\Q_n^*)$ of edge weights of $\Q_n^*$ modulo gauge transformations. Furthermore, there is a natural way to identify the space $\mathcal P_n^E  / E $ with a Poisson hypersurface in $\Sigma(\Q_n^*)$, so that the Poisson property of the map $\mathcal P_n^E  / E \to \mathcal P_n^S  / S $ becomes a consequence of the Poisson property of the map from edge weights to face weights.

Interpreted in terms of the dual graph  $\Q_n^*$, recutting becomes a certain non-local transformation of a weighted bipartite graph on a torus known as the \textit{plabic $R$-matrix} \cite{chepuri2020plabic}. Invariants of such a transformation can be constructed by using either the dimer partition function \cite{GK}, or the boundary measurement matrix \cite{GSTV} (as shown in \cite{GAFA}, those two approaches give the same invariants). This gives an alternative route to proving Theorem \ref{thm1}. We choose  not to pursue this approach since our construction based on quaternionic polynomials seems more direct and also adapts better to the case of closed polygons that we consider next.

 \end{remark}
 
 Our Poisson structure on polygons closed up to isometry restricts to polygons closed up to translation, so one can prove integrability of recutting in the latter setting following the lines of the proof of Theorem \ref{thm1} (note that Theorem \ref{thm1} does not directly imply integrability for any smaller class of polygons since it only describes the behavior of recutting on \textit{generic} level sets of invariants). However, this approach does not work for closed polygons, since such polygons {do not} constitute a Poisson submanifold. One way to overcome this difficulty is by using Dirac reduction. Here we take a different approach, namely we show that although one cannot restrict the Poisson structure to closed polygons, one can still restrict the Hamiltonian vector fields generated by the invariants, which is sufficient to establish integrability in the non-Hamiltonian setting. Our main result for closed polygons is the following:

 \begin{theorem}\label{thm2}
 
 Assume that $n \geq 3$. Then the recutting action of the group $\tilde S_n$ on the $2n-3$-dimensional space $\mathcal P_n  / E$ of closed planar $n$-gons modulo orientation-preserving isometries is integrable in the non-Hamiltonian sense. Specifically, one has the following:
\begin{enumerate}
\item The recutting action on $\mathcal P_n  / E$  has $\lfloor 3n/2\rfloor -~1$ independent first integrals and a complementary number $\lceil n/2\rceil -~2$ of independent invariant commuting vector fields tangent to  level sets of first integrals.

\item A generic joint level set of first integrals is a finite union of tori of dimension $\lceil n/2 \rceil - 2$. For each such torus $K$, the subgroup $G_{K} := \{ \omega \in \tilde S_n \mid \omega(K) \subset K\}$ of the recutting group elements preserving $K$ has a finite index in $\tilde S_n$. There is a flat structure on $K$ such that the action of~$G_{K}$ on $K$ is given by translations. 
 \end{enumerate}

 \end{theorem}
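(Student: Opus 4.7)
The strategy is to deduce Theorem \ref{thm2} from Theorem \ref{thm1} by restricting the integrable structure to the codimension-3 submanifold $\mathcal{P}_n/E \subset \mathcal{P}_n^E/E$. As the author flags, $\mathcal{P}_n/E$ is not a Poisson submanifold, so one cannot restrict the Poisson bracket directly; instead one restricts the first integrals themselves and retains those Hamiltonian vector fields from Theorem \ref{thm1} that happen to be tangent to the closed locus.

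That $\mathcal{P}_n/E$ is invariant under the recutting action is immediate: the reflection $\rho_i$ in the perpendicular bisector of $v_{i-1}v_{i+1}$ preserves the sum of the two edges at $v_i$, and hence the total displacement $v_{i+n} - v_i$. Restricting the $\lfloor 3n/2\rfloor + 1$ integrals of Theorem \ref{thm1} to $\mathcal{P}_n/E$, two of them---Casimirs detecting how the monodromy $\psi$ fails to be trivial---become constant on $\mathcal{P}_n/E$ and drop out, leaving $\lfloor 3n/2\rfloor - 1$ restricted integrals. Functional independence at a generic closed polygon can be verified by explicit differentiation in the quaternionic polynomial model of $\mathcal{P}_n^E/E$ introduced for Theorem \ref{thm1}: the $f_i$ are spectral invariants of a polynomial whose differentials remain independent under perturbations preserving closure.

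The central technical step is producing the $\lceil n/2\rceil - 2$ commuting invariant vector fields. I would show that each of the $\lceil n/2\rceil - 1$ Hamiltonian vector fields $X_{f_i}$ of the non-Casimir integrals from Theorem \ref{thm1} is tangent to $\mathcal{P}_n/E$, which amounts to an identity $\{f_i, g\}|_{\mathcal{P}_n/E} = 0$ for every function $g$ cutting out the closed locus. This is where the real work lies: with no Dirac reduction available, the identity should be established algebraically in the quaternionic polynomial framework, using that closure is expressible as a Poisson-Lie-natural constraint on the polynomial while the $f_i$ are spectral invariants Poisson-commuting with this constraint on the constraint locus. Once tangency is in hand, the $\lceil n/2\rceil - 1$ restricted fields live on joint integral level sets of dimension only $\lceil n/2 \rceil - 2$, so they satisfy a (generically unique) linear dependence, leaving exactly $\lceil n/2\rceil - 2$ independent commuting fields. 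Commutativity is inherited from $\{f_i,f_j\}=0$, recutting-invariance from the recutting-invariance of each $f_i$ and of the locus $\mathcal{P}_n/E$, and tangency to integral level sets from $X_{f_i}(f_j) = \{f_j, f_i\} = 0$.

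Part (2) then follows from part (1) by a standard argument mirroring its counterpart in Theorem \ref{thm1}: the commuting vector fields generate a locally free $\mathbb{R}^{\lceil n/2\rceil - 2}$-action on generic joint level sets, compactness of these level sets upgrades orbits to tori of matching dimension, recutting commutes with the flows and hence acts by translations in the resulting flat structure, and $G_K$ has finite index in $\tilde S_n$ because each level set has finitely many connected components. The main obstacle throughout is the tangency step: without a Poisson structure available on the closed locus, one must argue directly and algebraically that the relevant Hamiltonian flows do not escape $\mathcal{P}_n/E$, and verify that the resulting dimensional count of surviving vector fields works out exactly as predicted.
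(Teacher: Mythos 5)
Your proposal follows the same overall strategy as the paper: restrict the integrals of Theorem \ref{thm1} to $\mathcal P_n/E$, verify independence in the quaternionic polynomial model (the paper does this in Lemma \ref{lemma:ind2}, a genuine sibling of Lemma \ref{lemma:ind1} with modified sets $\mathcal A,\mathcal B,\mathcal C$ and a parallel positivity argument), show that the relevant Hamiltonian vector fields are tangent to the closed locus, and invoke the non-Hamiltonian Arnold-Liouville theorem. However, the step you correctly flag as ``where the real work lies'' --- tangency --- is left as a plausible expectation rather than established, and the missing ingredient is quite specific. The paper's pivot is Proposition \ref{prop:geom}(1): for a polygon with monodromy $z\mapsto z+\beta$ one has $I_1 = \tfrac12(|\beta|^2-E_1)$, so the translational part of the monodromy is detected by the \emph{integrals themselves}, and the closed locus is exactly the level set $\{I_0^2=1,\; I_1+\tfrac12 E_1=0\}$. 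Once one sees this, tangency of the fields $X_j = X_{I_j^2}$ ($j\geq 2$) to $\mathcal P_n/E$ is automatic from $\{I_j,I_k\}=\{I_j,E_k\}=0$, with no need for a separate Poisson-Lie argument; the paper therefore never needs to hunt for an identity $\{f_i,g\}|_{\mathcal P_n/E}=0$ for an externally supplied defining function $g$. Your phrase ``closure is expressible as a Poisson-Lie-natural constraint... Poisson-commuting with this constraint'' is the right intuition, but the proof requires the concrete identification via equation \eqref{eq:cc}, and without it the tangency argument does not close.

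Two smaller inaccuracies. First, of the two integrals that drop when passing from $\mathcal P_n^E/E$ to $\mathcal P_n/E$, only $I_0^2$ is a Casimir; $I_1^2$ is a non-Casimir integral of the ambient system which becomes redundant because it equals $\tfrac14 E_1^2$ on the closed locus. Calling both ``Casimirs detecting the monodromy'' conflates the two mechanisms, though it does get the count $\lfloor 3n/2\rfloor+1-2$ right. Second, you keep all $\lceil n/2\rceil-1$ vector fields and then argue for a ``generically unique linear dependence'' among them. This is more roundabout than the paper, which simply drops $X_1$ from the start: since $I_1$ is constrained by $E_1$ on $\mathcal P_n/E$, its Hamiltonian vector field contributes nothing new, and the remaining $X_2,\dots,X_{\lceil n/2\rceil-1}$ are already the complementary number $\lceil n/2\rceil-2$ of independent, commuting, tangent fields. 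Your route would need an extra argument that the dependency exists and is of corank exactly one; the direct selection avoids this.
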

  
 For example, for triangles and quadrilaterals we get $\lceil n/2 \rceil - 2 = 0$, so the orbits consist of finitely many points, cf. Remark \ref{rem:quad} below. For $n \geq 5$ the orbits are likely to be infinite.
 
Out of $\lfloor 3n/2\rfloor -~1$ invariants, $n + 1$ have a clear geometric meaning: they are symmetric functions of the squared lengths of sides, and the area of the polygon. In addition, when $n$ is even, one of the invariants is the sum of angles at every second vertex. The remaining invariants do not seem to have such a transparent interpretation.\par
 We prove Theorem \ref{thm2} in Section \ref{sec:closed}.

\medskip

\medskip
{\bf Acknowledgments.} The author is grateful to Vsevolod Adler, Maxim Arnold, Michael Gekhtman, Boris Khesin, Pavlo Pylyavskyy, Sanjay Ramassamy, Richard Schwartz, Alexander Shapiro, Sergei Tabachnikov, Alexander Veselov, and the anonymous referee for fruitful discussions and useful remarks. This work was supported by NSF grant DMS-2008021.


\section{Polygon spaces and recutting}\label{sec:ps}
For the purposes of the present paper, a \textit{polygon} is a bi-infinite sequence $(v_i \in \C)_{i \in \Z}$ such that $v_i \neq v_{i+1}$ for all $i \in \Z$. \textit{Recutting} $\rho_j$ of a polygon  $(v_i)$ at a vertex $v_j$ is defined as reflection of $v_j$  in the perpendicular bisector of the interval $(v_{j-1}, v_{j+1})$. It is well-defined as long as $v_{j-1} \neq v_{j+1}$.\par
The space $\mathcal P$ of polygons carries the action of the group $S := \{\psi \colon \C \to \C \mid \psi(z) = \alpha z + \beta,  \alpha \in \C^*, \beta \in \C \}$ of orientation-preserving similarities, as well as of its normal subgroups $E := \{\psi \colon \C \to \C \mid \psi(z) = \alpha z + \beta,  \alpha \in S^1, \beta \in \C \}$ of orientation-preserving isometries and $T :=  \{\psi \colon \C \to \C \mid \psi(z) =  z + \beta,  \beta \in \C \}$ of translations. All these actions commute with recutting. \begin{remark}Throughout the paper, all similarities and isometries are assumed to be orientation-preserving, so we refer to elements of $S$ and $E$ as simply similarities and isometries. \end{remark}
For a polygon $(p_i) \in \mathcal P$, define its \textit{edge vectors} $z_i \in \C$ by $z_i := v_i - v_{i-1}$. Let also $y_i := z_{i+1}/z_i$ be the ratios of consecutive edge vectors, and let $\phi_i := \mathrm{arg}(y_i)$ be the angles between consecutive edge vectors (for a convex counter-clockwise oriented polygon, $\phi_i$ can be understood as exterior angles, so will we often refer to them in that way). Then $y_i$ parametrize polygons modulo similarities (i.e. the action of $S$), edge vectors $z_i$ parametrize polygons modulo translations (i.e. the action of $T$), while $\phi_i$ and~$|z_i|$ parametrize polygons modulo  isometries (i.e. the action of $E$). As another parametrization of polygons modulo isometries we will use the sequence of edge vectors $z_i$ modulo simultaneous rotations.

The following results express recutting in terms of coordinates $z_i$ and $y_i$.

 \begin{lemma}\label{lemma:zrels}

Assume that a polygon $(v_i')$  is the image of a polygon $(v_i)$ under recutting $\rho_j$ at $v_j$. Let  $z_i = v_i - v_{i-1}$ be the edge vectors of $(v_i)$ and  $z_i' := v_i' - v_{i-1}'$ be the edge vectors of $(v_i')$. Then
\begin{equation}\label{eq:zrels}
\begin{gathered}
z_j' + z_{j+1}' = z_j+z_{j+1},\\
z_j'\bar z_{j+1}' = z_j\bar z_{j+1},
\end{gathered}
\end{equation}
where $\bar z$ stands for the complex conjugate of $z$.
%
\end{lemma}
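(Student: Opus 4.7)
The plan is to prove the two identities by direct computation, exploiting the analytic form of a planar reflection. The first identity is essentially trivial; the second requires a short calculation using the fact that a reflection in $\C$ acts antiholomorphically.

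For the first identity, I would observe that the only vertex moved by $\rho_j$ is $v_j$ itself, so $v_{j-1}' = v_{j-1}$ and $v_{j+1}' = v_{j+1}$. Both $z_j + z_{j+1}$ and $z_j' + z_{j+1}'$ then telescope to the common value $v_{j+1} - v_{j-1}$.

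For the second identity, I would write the reflection $R$ in the perpendicular bisector of $[v_{j-1}, v_{j+1}]$ in the standard form
\begin{equation}
R(z) = e^{2i\theta}\bar z + c,
\end{equation}
where $\theta$ is the angle made by the bisector line with the real axis. By definition of $\rho_j$, we have $v_j' = R(v_j)$, while the fact that $R$ is reflection in the perpendicular bisector of $[v_{j-1}, v_{j+1}]$ means $R$ interchanges $v_{j-1}$ and $v_{j+1}$. Using $v_{j-1} = R(v_{j+1})$ and $v_{j+1} = R(v_{j-1})$ together with the identity $R(z) - R(w) = e^{2i\theta}\,\overline{z - w}$, I compute
\begin{equation}
z_j' = R(v_j) - R(v_{j+1}) = -e^{2i\theta}\,\bar z_{j+1}, \qquad z_{j+1}' = R(v_{j-1}) - R(v_j) = -e^{2i\theta}\,\bar z_j.
\end{equation}
Multiplying $z_j'$ by $\bar z_{j+1}'$ the factors $e^{2i\theta}$ and $e^{-2i\theta}$ cancel, and one obtains $z_j' \bar z_{j+1}' = z_j \bar z_{j+1}$, as desired.

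There is no real obstacle here; the only thing to be careful about is getting the signs right in the swap $R(v_{j-1}) \leftrightarrow R(v_{j+1})$ so that the two minus signs produced by writing $R(v_j) - R(v_{j+1})$ and $R(v_{j-1}) - R(v_j)$ cancel correctly in the product. As a sanity check, one can also verify the first identity from these formulas: $z_j' + z_{j+1}' = -e^{2i\theta}\overline{z_j + z_{j+1}}$, and since $z_j + z_{j+1} = v_{j+1} - v_{j-1}$ is orthogonal to the bisector line, it is fixed by the antiholomorphic map $w \mapsto -e^{2i\theta}\bar w$, recovering the telescoping identity.
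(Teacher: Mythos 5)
Your proof is correct, and it takes a genuinely different route from the paper's. The paper gives a purely geometric argument: it reads off that $z_j + z_{j+1} = v_{j+1} - v_{j-1}$ is the diagonal (fixed by recutting), and that $|z_j \bar z_{j+1}|$ is the product of the two side lengths of the detached triangle while $\arg(z_j \bar z_{j+1})$ is the exterior angle at $v_j$, both of which are unchanged when the triangle is flipped and reattached along the same diagonal. You instead write the reflection $R$ explicitly in antiholomorphic form $R(z)=e^{2i\theta}\bar z + c$, use that $R$ interchanges $v_{j-1}$ and $v_{j+1}$ and moves $v_j$ to $v_j'$, and compute $z_j'=-e^{2i\theta}\bar z_{j+1}$, $z_{j+1}'=-e^{2i\theta}\bar z_j$; the phases and signs cancel in $z_j'\bar z_{j+1}'$. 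The geometric argument is coordinate-free and immediately explains \emph{why} the two quantities are invariants, whereas your computation has the side benefit of producing the explicit formulas for $z_j'$ and $z_{j+1}'$ in terms of the reflection angle $\theta$ (which the paper derives separately in Corollary~\ref{cor:first}). Both are complete and correct.
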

\begin{proof}
The complex number $z_j+z_{j+1}$ represents the side $(v_{j-1}, v_{j+1})$ of the triangle $(v_{j-1},v_j, v_{j+1})$. As for $z_j\bar z_{j+1}$, its absolute value is the product of lengths of  $(v_{j-1}, v_{j})$ and  $(v_{j}, v_{j+1})$, while its argument is the exterior angle of the triangle $(v_{j-1},v_j, v_{j+1})$ at $v_j$. None of these change when the triangle is cut and then reattached with opposite orientation, hence the result. 
\end{proof}
\begin{corollary}\label{cor:first}
As a transformation of the space of polygons modulo translations, recutting $\rho_j$ is given by
\begin{equation}\label{eq:zrels2}
\begin{gathered}
z_j' = \bar z_{j+1}\frac{z_j + z_{j+1}}{\bar z_j + \bar z_{j+1}},\\
z_{j+1}' = \bar z_j\frac{z_j + z_{j+1}}{\bar z_j + \bar z_{j+1}},
\end{gathered}
\end{equation}
and $z_i' = z_i$  for $i \neq j, j+1$.
\end{corollary}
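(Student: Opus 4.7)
The plan is to solve the pair of conservation laws (\ref{eq:zrels}) from Lemma \ref{lemma:zrels} directly for the unknowns $z_j'$ and $z_{j+1}'$, which is purely algebraic once one pins down the correct branch. Conjugating the first relation gives $\bar z_j' + \bar z_{j+1}' = \bar z_j + \bar z_{j+1}$. Multiplying this by $z_j'$ and substituting the second relation $z_j' \bar z_{j+1}' = z_j \bar z_{j+1}$ produces
\[
|z_j'|^2 + z_j \bar z_{j+1} \;=\; z_j'(\bar z_j + \bar z_{j+1}),
\]
so that
\[
z_j' \;=\; \frac{|z_j'|^2 + z_j\bar z_{j+1}}{\bar z_j + \bar z_{j+1}}.
\]
Thus the whole computation reduces to identifying $|z_j'|^2$.

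To pin it down, I would take squared moduli of both equations in (\ref{eq:zrels}). Since the cross term $2\mathrm{Re}(z_j' \bar z_{j+1}')$ in $|z_j' + z_{j+1}'|^2$ equals $2\mathrm{Re}(z_j\bar z_{j+1})$ by the second relation, one obtains $|z_j'|^2 + |z_{j+1}'|^2 = |z_j|^2 + |z_{j+1}|^2$. Combined with $|z_j'|\,|z_{j+1}'| = |z_j|\,|z_{j+1}|$ from the second relation, this forces $\{|z_j'|, |z_{j+1}'|\} = \{|z_j|, |z_{j+1}|\}$ as multisets. Once one selects $|z_j'|^2 = z_{j+1}\bar z_{j+1}$, the numerator factors as $z_{j+1}\bar z_{j+1} + z_j \bar z_{j+1} = \bar z_{j+1}(z_j + z_{j+1})$, which yields exactly the first formula in (\ref{eq:zrels2}); the second follows by the symmetric argument (or by swapping $j$ and $j+1$).

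The only delicate point in the plan is to rule out the spurious alternative $|z_j'|^2 = |z_j|^2$, which corresponds to the trivial solution $(z_j', z_{j+1}') = (z_j, z_{j+1})$ of the system (\ref{eq:zrels}). I would dispose of this by appealing to the geometric definition: recutting reflects $v_j$ in the perpendicular bisector of the diagonal $(v_{j-1}, v_{j+1})$, which interchanges the two triangle sides incident to $v_j$ and therefore swaps the edge lengths $|z_j|$ and $|z_{j+1}|$. Hence the correct root of the quadratic conservation law is $|z_j'| = |z_{j+1}|$, and no other obstacle remains. The coordinates $z_i$ for $i\neq j, j+1$ are left intact by definition of $\rho_j$, completing the verification.
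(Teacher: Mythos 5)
Your proposal is correct and follows essentially the same route as the paper: both solve the conserved quantities of Lemma \ref{lemma:zrels} for $z_j'$ after invoking the geometric fact that the reflection swaps the two side lengths, i.e.\ $|z_j'| = |z_{j+1}|$. Your additional derivation of the multiset equality $\{|z_j'|,|z_{j+1}'|\}=\{|z_j|,|z_{j+1}|\}$ from the conservation laws alone is a slight elaboration not present in the paper's proof, but the decisive step (choosing the non-identity branch $|z_j'|=|z_{j+1}|$ via the geometry of the perpendicular bisector) is the same in both.
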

\begin{proof}
Using Lemma \ref{lemma:zrels} along with the relation $|z_j'| = |z_{j+1}|$, one gets
\begin{equation*}
z_j + z_{j+1} = z_j' + z_{j+1}' = z_j' + \frac{\bar z_j z_{j+1}}{\bar z_j'} =  \frac{z_j'\bar z_j' + \bar z_j z_{j+1}}{\bar z_j'} =  \frac{z_{j+1}\bar z_{j+1}+ \bar z_j z_{j+1}}{\bar z_j'} ,
\end{equation*}
which implies the desired formula for $z_j'$. The formula for $z_{j+1}'$ now follows from any of the relations \eqref{eq:zrels}. Other~$z_i$ do not change under recutting $\rho_j$ since they do not depend on the vertex $v_j$.
\end{proof}

 \begin{corollary}\label{cor:second}

As a transformation of the space of polygons modulo similarities, recutting $\rho_j$ is given by
\begin{equation}\label{eq:yrels}
\begin{gathered}
 y_{j-1}' = \frac{y_{j-1}(1+y_j)}{1 + \bar y_j^{-1}},\\
y_j' = \bar y_j^{-1},\\
y_{j+1}' = \frac{y_{j+1}(1+\bar y_j)}{1 +  y_j^{-1}},
\end{gathered}
\end{equation}
and $y_i' = y_i$ for $i \neq j-1, j, j+1$.
 \end{corollary}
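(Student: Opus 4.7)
The plan is to deduce all four cases directly from the edge-vector formulas in Corollary \ref{cor:first}, using the identity $y_i = z_{i+1}/z_i$ and standard algebraic manipulation with complex conjugation.

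First I would handle the case $y_j'$. From Corollary \ref{cor:first}, $z_{j+1}'/z_j' = \bar z_j/\bar z_{j+1}$ (the factor $(z_j+z_{j+1})/(\bar z_j + \bar z_{j+1})$ cancels), and $\bar z_j/\bar z_{j+1} = \overline{z_j/z_{j+1}} = \bar y_j^{-1}$, which is the second formula.

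Next, for $y_{j-1}'$ and $y_{j+1}'$, the key trick is to pull out $z_j$ and $\bar z_j$ as common factors: write $z_j + z_{j+1} = z_j(1 + y_j)$ and $\bar z_j + \bar z_{j+1} = \bar z_j(1 + \bar y_j)$. Then I would compute the ``transfer ratio'' $z_j'/z_j$. Using Corollary \ref{cor:first} and $\bar z_{j+1} = \bar y_j \bar z_j$, this gives
\begin{equation*}
\frac{z_j'}{z_j} \;=\; \frac{\bar z_{j+1}}{z_j}\cdot\frac{z_j(1+y_j)}{\bar z_j(1+\bar y_j)} \;=\; \frac{\bar y_j(1+y_j)}{1+\bar y_j} \;=\; \frac{1+y_j}{1+\bar y_j^{-1}},
\end{equation*}
so $y_{j-1}' = z_j'/z_{j-1} = y_{j-1}\cdot(z_j'/z_j)$ equals the stated expression. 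The computation of $z_{j+1}'/z_{j+1}$ is entirely analogous and yields $y_{j+1}' = y_{j+1}\cdot(z_{j+1}/z_{j+1}')$ in the claimed form, after rewriting $\bar y_j/(1+y_j) = 1/(1+y_j^{-1})\cdot(1+\bar y_j)/(1+y_j)$ appropriately.

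Finally, for $i \neq j-1, j, j+1$, both $z_{i+1}$ and $z_i$ are unchanged by $\rho_j$, so $y_i' = y_i$ trivially. I do not anticipate any real obstacle here: the entire argument is a short, bookkeeping-style manipulation, and the only subtlety is keeping track of conjugates when factoring out $z_j$ versus $\bar z_j$.
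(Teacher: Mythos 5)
Your argument is correct and is exactly the paper's (unwritten-out) route: the paper's proof is the one-line "straightforward calculation using \eqref{eq:zrels2} and $y_i = z_{i+1}/z_i$," and you have simply supplied the details. The only blemish is the final ``rewriting'' identity you quote for the $y_{j+1}'$ case, which is garbled as written ($\bar y_j/(1+y_j)$ should be $y_j/(1+y_j) = 1/(1+y_j^{-1})$, and the stated right-hand side doesn't match); but this is a typo in a bookkeeping remark, and the computation $z_{j+1}/z_{j+1}' = (1+\bar y_j)/(1+y_j^{-1})$ that you assert does in fact hold.
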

 \begin{proof}
 Straightforward calculation using \eqref{eq:zrels2} along with the definitions $y_i = z_{i+1}/z_i$ and  $y_i' := z_{i+1}'/z_i'$ of $y$ coordinates.
 \end{proof}
 
 In the rest of the paper, we only consider polygons satisfying certain periodicity-type conditions. Namely, let $\psi \in S$ be a similarity transformation. A polygon $(v_i)$ is said to be an \textit{$n$-gon closed up to $\psi$} (or an $n$-gon with \textit{monodromy}~$\psi$) if $p_{i+n} = \psi(v_i)$ for all $i \in \Z$. 
 Recutting $\rho_j$ of an $n$-gon $(v_i)$ with monodromy $\psi$ is a polygon with the same monodromy obtained from $(v_i)$ by recutting at all vertices $v_i$ with $i \equiv j \mod n$. 
 
  For a subgroup $G \subset  S$, denote by $\mathcal P_n^{G}$ the space of $n$-gons with monodromy $\psi \in G$ (we also use the notation $\mathcal P_n$ for the space of closed $n$-gons corresponding to the trivial group $G$). Note that since the recutting action on $\mathcal P_n^{G}$ preserves the monodromy and commutes with similarity transformations, it descends to a self-map of the quotient $ \mathcal P_n^{G} / H$ where $H \subset  S$ is any subgroup normalizing $G$. In what follows, we will be particularly interested in the action of recutting on the spaces $ \mathcal P_n^{S} / S$ and $ \mathcal P_n^{E} / E$. The following results are straightforward: \begin{proposition}\label{prop:ycoords}
 The assignment $(v_i) \mapsto (y_i)$ taking a polygon to the ratios of its consecutive edge vectors gives a bijection
 $$
\mbox{\raisebox{-.3em}{$\faktor{ \mathcal P_n^{S}}{ S}$}}\simeq \{\mbox{$n$-periodic sequences $y_i \in \C^*$}\}.
 $$
  Written in coordinates $y_i$, recutting $\rho_j $  on $P_n^{S}/ S$ is given by \eqref{eq:yrels}. 
 \end{proposition}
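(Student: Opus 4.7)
The proof has two independent ingredients: establishing the bijection, and verifying the recutting formula. The second is immediate from Corollary \ref{cor:second}, which already gives \eqref{eq:yrels} as the transformation of the $y$-coordinates under $\rho_j$ on polygons modulo translations; since $y_i$ is invariant under the full similarity group $S$ (as I verify next), the same formulas descend to the quotient by $S$.

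For the bijection, the plan is as follows. First I would check that the map $(v_i)\mapsto (y_i)$ is well defined. Under a similarity $\psi(z)=\alpha z+\beta$, the edge vectors $z_i=v_i-v_{i-1}$ transform as $z_i\mapsto\alpha z_i$, hence $y_i=z_{i+1}/z_i$ is $S$-invariant. For an $n$-gon closed up to a similarity $\psi$ with linear part $\alpha$, the relation $v_{i+n}=\psi(v_i)$ yields $z_{i+n}=\alpha z_i$, so $y_{i+n}=y_i$. Thus $(y_i)$ is an $n$-periodic sequence in $\C^*$ depending only on the $S$-orbit of $(v_i)$.

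Next I would construct the inverse. Given an $n$-periodic sequence $(y_i)\in(\C^*)^{\Z}$, set $z_1:=1$, define $z_i$ recursively in both directions by $z_{i+1}=y_iz_i$, and put $v_0:=0$, $v_i:=v_{i-1}+z_i$. The resulting sequence $(v_i)$ is a polygon (no two consecutive vertices coincide because $z_i\neq 0$), and using $n$-periodicity of $(y_i)$ one checks $z_{i+n}=\alpha z_i$ for $\alpha:=y_1\cdots y_n$ and every $i$. Consequently the telescoping difference $v_{i+n}-\alpha v_i$ has constant increment zero in $i$, so it equals some $\beta\in\C$ independent of $i$, and $(v_i)$ is closed up to the similarity $z\mapsto\alpha z+\beta$.

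Finally I would verify that the two constructions are mutually inverse modulo $S$. Different choices of the normalizations $z_1$ and $v_0$ in the reconstruction correspond precisely to postcomposition with an element of $S$, so the reconstruction is well defined as a map into $\mathcal P_n^{S}/S$. Starting from a polygon, extracting $(y_i)$, and reconstructing produces a polygon with the same edge-vector ratios, hence the same $S$-orbit; starting from $(y_i)$ the opposite composition trivially recovers $(y_i)$. No step presents a serious obstacle; the only point that requires attention is consistency of the reconstructed monodromy, and this is precisely what the $n$-periodicity of $(y_i)$ guarantees.
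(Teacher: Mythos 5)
Your proof is correct and is essentially the ``straightforward'' argument the paper has in mind (the paper offers no written proof for this proposition, merely asserting it after Corollary~\ref{cor:second}). One small slip: you attribute to Corollary~\ref{cor:second} the transformation of $y$-coordinates ``on polygons modulo translations,'' but cor:second is already stated modulo similarities (cor:first is the one in $z$-coordinates, modulo translations), so the final descent to $\mathcal P_n^S/S$ you sketch is in fact already packaged in that corollary. The remaining content --- $S$-invariance of $y_i$, $n$-periodicity of $y_i$ from the monodromy condition $z_{i+n}=\alpha z_i$, and the reconstruction from a periodic $(y_i)$ with $\alpha=y_1\cdots y_n$ and the telescoping argument for $\beta$ --- is exactly what is needed and is handled cleanly.
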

 
 \begin{proposition}\label{prop:zphicoords}
 The assignment $(v_i) \mapsto (|z_i|, \phi_i)$ taking a polygon to its side lengths and exterior angles gives a bijection
 $$
\mbox{\raisebox{-.3em}{$\faktor{ \mathcal P_n^{E}}{ E}$}}\simeq \{\mbox{pairs of $n$-periodic sequences $|z_i| \in \R_+$, $\phi_i \in \R / 2\pi \Z$}\},
 $$
 while the assignment $(v_i) \mapsto (z_i)$ taking a polygon to its edge vectors gives a bijection
  $$
\mbox{\raisebox{-.3em}{$\faktor{ \mathcal P_n^{E}}{ E}$}}\simeq \faktor{\{\mbox{sequences $z_i \in \C^* \mid z_{i+n} = \alpha z_i$ for some $\alpha \in S^1$}\}}{S^1},
 $$
 where $S^1$ stands for the set of complex numbers of absolute value $1$. Written in terms of $z_i$, recutting $\rho_j $ on $  \mathcal P_n^{E} / E $ is given by \eqref{eq:zrels2}.
 \end{proposition}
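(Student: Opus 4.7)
The proof is a routine verification that splits into three pieces: the two bijections, and the recutting formula. I would treat them in the stated order.

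For the $(|z_i|, \phi_i)$ parametrization, the plan is to observe that any isometry $\psi(z) = \alpha z + \beta$ with $|\alpha| = 1$ acts on edge vectors by $z_i \mapsto \alpha z_i$, hence fixes all lengths $|z_i|$ and all ratios $y_i = z_{i+1}/z_i$ (and so all angles $\phi_i = \arg y_i$). The monodromy condition $v_{i+n} = \psi(v_i)$ for $\psi \in E$ translates to $z_{i+n} = \alpha z_i$ with $|\alpha| = 1$, which in turn forces $|z_{i+n}| = |z_i|$ and $\phi_{i+n} = \phi_i$. For the converse, given $n$-periodic sequences $(|z_i|, \phi_i)$, I would reconstruct the edge vectors by choosing any $z_1$ of the prescribed modulus and then setting $z_{i+1}$ to be the complex number with modulus $|z_{i+1}|$ and argument $\arg z_i + \phi_i$; any two choices of initial $z_1$ differ by a global rotation, and the resulting polygon $v_i := v_0 + z_1 + \cdots + z_i$ is determined up to the additional translation fixing $v_0$. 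Together these freedoms exhaust $E$, and the $n$-periodicity of the input data forces the ratio $z_{i+n}/z_i$ to be a unimodular constant, producing a well-defined monodromy in $E$.

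For the $(z_i)$ parametrization, the same computation yields $z_{i+n} = \alpha z_i$ with $|\alpha| = 1$, while the action of $E$ on the sequence $(z_i)$ reduces to the common $S^1$ rotation $z_i \mapsto \alpha z_i$ (translations act trivially on edge vectors). Reconstruction of the vertices from $(z_i)$ is then unique up to the translation fixing $v_0$, completing the bijection.

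For the recutting formula, Corollary \ref{cor:first} already establishes \eqref{eq:zrels2} on the space $\mathcal{P}/T$ of polygons modulo translations. To descend the formula to $\mathcal{P}_n^E/E$ it suffices to check two items: (i) the right-hand sides of \eqref{eq:zrels2} are homogeneous of degree one under simultaneous scaling $z_\bullet \mapsto \alpha z_\bullet$, $\bar z_\bullet \mapsto \bar\alpha \bar z_\bullet$ with $|\alpha| = 1$, so the formula is $S^1$-equivariant; and (ii) performing $\rho_j$ simultaneously at all vertices congruent to $j \pmod n$ preserves the monodromy class, which follows from the identity $\psi \circ \rho_j = \rho_{j+n} \circ \psi$ for $\psi \in E$, itself an immediate consequence of the fact that recutting is defined by reflection in a perpendicular bisector and isometries send perpendicular bisectors to perpendicular bisectors. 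There is no substantive obstacle; the only care required is bookkeeping the several group quotients and periodicity conditions so that everything descends consistently, which is the reason the paper labels the result as straightforward.
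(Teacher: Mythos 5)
Your verification is correct: the paper states this proposition without proof, flagging it (together with Proposition~\ref{prop:ycoords}) as ``straightforward,'' and your three-part argument (the two parametrizations and the descent of the recutting formula) is precisely the routine check the paper has in mind. In particular you correctly identify the two points that require a moment's thought: that $n$-periodicity of $(|z_i|,\phi_i)$ forces $z_{i+n}/z_i$ to be a unimodular constant, and that formula~\eqref{eq:zrels2} is $S^1$-equivariant so that recutting all vertices $\equiv j\pmod n$ is compatible with the quasi-periodicity $z_{i+n}=\alpha z_i$.
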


%
 \section{Recutting of polygons closed up to similarity: cluster structure}\label{sec:cluster}
 
\subsection{Quivers, mutations, and real structures}\label{sec:qmrs}
In this section we discuss the general notion of a $Y$-type (also known as $\mathcal X$-type) cluster mutation, with an emphasis on quivers endowed with an involution (a \textit{real structure}). Recall that a \textit{quiver} is a directed graph without loops or oriented cycles of length $2$. For simplicity, in what follows we also prohibit multiple edges. Given a quiver $\Q$ with the vertex set $\{1, \dots, n\}$, denote by $Y_{\Q}$ the space of functions $\{1, \dots, n\} \to \C^*$. The space $Y_{\Q}$ is a complex torus of dimension $n$. It comes equipped with canonical coordinates $y_1, \dots, y_{n}$ given by evaluation of functions at vertices of $\Q$: for $\xi \in Y_{\Q}$, one defines $y_i(\xi) := \xi(i)$. Since the variables $y_i$ are indexed by vertices of $\Q$, in what follows we often identify vertices with the corresponding $y$ variables.\par

The torus $Y_{\Q}$ carries a Poisson structure. In terms of $y_i$ coordinates, it has a \textit{log-canonical} form
\begin{equation}\label{lcpb}
\{y_i, y_j\} = a_{ij}y_iy_j,
\end{equation}
where $(a_{ij})$ is the signed adjacency matrix of $\Q$, i.e. 
\begin{equation*}
a_{ij} = \left[\begin{aligned}
1, \quad &\mbox{if there is an arrow (a directed edge) from vertex $i$ to vertex $j$}, \\
-1, \quad &\mbox{if there is an arrow vertex $j$ to vertex $i$},\\
0, \quad &\mbox{if the vertices $i$, $j$ are not connected by an arrow}.
\end{aligned} \right.
\end{equation*}
 The Poisson structure on $Y_{\Q}$ is natural in the following sense: any isomorphism of quivers $\Q \to \Q'$ induces a Poisson isomorphism $Y_{\Q } \to Y_{\Q'} $.\par
Given a quiver $\Q$, the \textit{quiver mutation} of $\Q$ at its $i$'th vertex is the following modification of $\Q$:
\begin{enumerate}
\item For every pair of vertices $j,k $ of $\Q$ such that there is an arrow from $j$ to $i$ and an arrow from $i$ to $k$, add an arrow from $j$ to $k$.
\item Reverse all arrows adjacent to the vertex $i$.
\item Remove all newly formed oriented cycles of length $2$.
\end{enumerate}
The result of a quiver mutation of $\Q$ is a new quiver $\Q'$ with the same vertex set as $\Q$. Note that in general a quiver mutation produces a quiver with multiple edges. This, however, does not happen for quivers relevant to the present paper.\par
We now define the notion of a $Y$-mutation. Assume that a quiver $\Q'$ is obtained from  $\Q$ by means of mutation at vertex $i$. The corresponding \textit{$Y$-mutation} is a birational map $Y_{\Q } \to Y_{\Q'} $ defined as follows. Let $y_j$ be the canonical coordinates in $Y_{\Q }$, and $y_j'$ be the canonical coordinates in $Y_{\Q'}$. Expressed in these coordinates, the $Y$-mutation $Y_{\Q } \to Y_{\Q'} $ at $i$ is given by
\begin{equation*}
y_j' = \left[\begin{aligned} &y_j^{-1}, \quad \mbox{if $j = i$},\\
&y_j(1 + y_i^{-1})^{-1}, \quad \mbox{if there is an arrow from vertex $i$ to vertex $j$},\\
&y_j(1 + y_i), \quad \mbox{if there is an arrow from vertex $j$ to vertex $i$},\\
&y_j,\quad \mbox{in all other cases}.
\end{aligned} \right.
\end{equation*}
A $Y$-mutation $Y_{\Q } \to Y_{\Q'} $ is a Poisson map.\par
\begin{figure}[t]
\centering
\begin{tikzpicture}[]
\node () at (0,0)
{
\begin{tikzpicture}[]
\node  (A) at (0,0) {$y_1$};
\node (B) at (1,0) {$y_2$};
\node (C) at (2,0) {$y_3$};
\node(D) at (3,0) {$y_4$};
\draw [->] (A) -- (B);
\draw [->] (B) -- (C);
\draw [->] (C) -- (D);
\draw [->, dashed] (4.2,0) -- (5.2,0);
\end{tikzpicture}
};
\node () at (7,-0.2)
{
\begin{tikzpicture}[]
\node (A) at (0,0) {$ {y_1}(1+y_2)$};
\node (B) at (2,0) { $y_2^{-1}$};
\node (C) at (4,0) {$\displaystyle \frac {y_3}{1 + y_2^{-1}}$};
\node (D) at (6,0) {$y_4$};
\draw [->] (B) -- (A);
\draw [->] (C) -- (B);
\draw [->] (C) -- (D);
\draw [->] (A) .. controls (1,-0.7) and (3,-0.7) .. (C);
\end{tikzpicture}
};
\end{tikzpicture}
\caption{A $Y$-mutation at $y_2$.}\label{Fig5}
\end{figure}
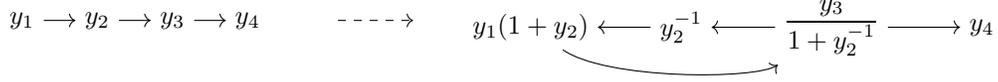
We depict $Y$-mutations as shown in Figure \ref{Fig5}. The labels at vertices of the initial quiver $\Q$ are the corresponding $Y$-variables $y_i$ while the labels at vertices of the mutated quiver $\Q'$ are pull-backs of the corresponding $Y$-variables $y_i'$ by the $Y$-mutation map $Y_{\Q } \to Y_{\Q'} $.

Now assume we have sequence of quivers $\Q  \to \dots \to \tilde \Q$ where each quiver is obtained from the previous one by mutation. Suppose also that we have an isomorphism $\psi \colon \tilde \Q \to \Q$. Then the composition $Y_{\Q} \to \dots \to Y_{\tilde \Q}$ of $Y$-mutations, followed by the map $Y_{\tilde \Q} \to Y_{\Q}$ induced by the isomorphism $\psi$, is a birational Poisson map of $Y_{\Q}$ onto itself. We call such a map a ($Y$-type) \textit{cluster transformation}. Put differently, a cluster transformation is such a sequence of mutations which, after permutation of vertices, restores the initial quiver. 
\par
We now add a real structure to the picture. Let $\tau \colon \Q \to \Q$ be an involution (i.e. a graph automorphism such that $\tau^2 = \mathrm{id}$).  Then $\tau$ defines a \textit{real structure} (i.e. an anti-holomorphic involution) $ \hat \tau$ on $Y_{\Q}$ by the rule  $\hat \tau(\xi) := \overline{\tau^*\xi}.$
In terms of coordinates $y_i$, the involution $\hat \tau$ is given by $\hat \tau^*y_i = \bar y_{\tau(i)}$. 
The \textit{real part} $ Y_{\Q}^\R$ of $ Y_{\Q}$   is the fixed point set of the involution $\hat \tau$. It is a real manifold whose complexification is  the complex torus $Y_{\Q}$ (in particular, $\dim_\R Y_{\Q}^\R = \dim_\C Y_{\Q}$ is the number of vertices of $\Q$). A function $\xi \in Y_{\Q}$ belongs to $ Y_{\Q}^\R$ if and only if it takes real values at vertices fixed by $\tau$ and complex conjugate values at vertices switched by $\tau$. 
The manifold  $ Y_{\Q}^\R$ is parametrized by $y_i$'s subject to relations $\bar y_i=  y_{\tau(i)}$ (in particular, $y_i$ is real if the vertex $i$ is fixed by $\tau$). \par
\begin{proposition}\label{prop:cr}
The Poisson structure on $ Y_{\Q}$ restricts to its real part $ Y_{\Q}^\R$. 
\end{proposition}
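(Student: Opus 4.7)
The plan is to reduce the statement to the general fact that the fixed-point set of a Poisson involution carries a natural induced Poisson structure, applied here to the involution $\hat\tau$ on the real Poisson manifold underlying $Y_\Q$.

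First I would spell out the real Poisson structure on $Y_\Q$ (viewed as a real manifold of dimension $2n$) coming from the holomorphic bivector $\pi$ of \eqref{lcpb}: the real bivector $\pi + \bar\pi$ gives bracket relations $\{y_i, y_j\} = a_{ij} y_i y_j$, $\{\bar y_i, \bar y_j\} = a_{ij} \bar y_i \bar y_j$, and $\{y_i, \bar y_j\} = 0$, with the integrality of the entries $a_{ij}$ ensuring that the second relation really has the same coefficients as the first.

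The central calculation is to check that $\hat\tau$ is a Poisson map with respect to this real structure. By the Leibniz rule it suffices to verify this on coordinate functions, and using $\hat\tau^* y_i = \bar y_{\tau(i)}$ one computes
\begin{equation*}
\{\hat\tau^* y_i,\, \hat\tau^* y_j\} \;=\; \{\bar y_{\tau(i)},\, \bar y_{\tau(j)}\} \;=\; a_{\tau(i),\tau(j)}\,\bar y_{\tau(i)} \bar y_{\tau(j)} \;=\; a_{ij}\,\bar y_{\tau(i)} \bar y_{\tau(j)} \;=\; \hat\tau^*\{y_i, y_j\},
\end{equation*}
the crucial middle equality being the quiver-automorphism identity $a_{\tau(i), \tau(j)} = a_{ij}$. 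The analogous computations for the $(\bar y_i, \bar y_j)$ and mixed brackets are immediate, and show that $\hat\tau$ is a Poisson involution of $(Y_\Q, \pi + \bar\pi)$.

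To finish, I would invoke the general fact that for any Poisson involution $\sigma$ on a Poisson manifold $(M, \pi)$, the fixed locus $M^\sigma$ inherits a Poisson bracket: the $\sigma$-invariant smooth functions form a Poisson subalgebra of $C^\infty(M)$ that surjects onto $C^\infty(M^\sigma)$ via restriction, and the descent is well-defined because if a $\sigma$-invariant $f$ vanishes on $M^\sigma$ then $df_p$ lies simultaneously in the $+1$-eigenspace of $\sigma^*$ on $T_p^* M$ and in the annihilator of $T_p M^\sigma$, and these two subspaces meet trivially, so $df|_{T M^\sigma} = 0$. Applied to $\sigma = \hat\tau$ this yields the proposition. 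I do not foresee any real obstacle; the whole argument essentially translates the combinatorial condition that $\tau$ is a quiver automorphism into the geometric condition that $\hat\tau$ is Poisson.
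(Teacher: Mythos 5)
Your argument is correct, and it rests on the same underlying idea as the paper's — that the anti-holomorphic involution $\hat\tau$ is a Poisson involution whose fixed locus inherits a bracket — but the packaging is genuinely different. The paper isolates an algebraic statement (Lemma~\ref{lemma:cr}) about real forms of complexified Poisson vector spaces: one identifies $\R[V_\sigma]$ with the fixed subalgebra of the involution $\bar\sigma^*$ on $\C[V_\C]$, and shows $\bar\sigma^*$ is Poisson by writing it as a composition of two commuting Poisson involutions ($f\mapsto f\circ\sigma$ and coefficient conjugation $f(x)\mapsto\overline{f(\bar x)}$). You instead work geometrically on the underlying real manifold of $Y_\Q$ with the real bivector $\pi+\bar\pi$, verify directly in coordinates that $\hat\tau$ preserves it using $a_{\tau(i),\tau(j)}=a_{ij}$, and invoke the fixed-point descent for a Poisson involution on a manifold (your eigenspace argument showing that an invariant function vanishing on $M^\sigma$ has vanishing differential there is exactly right). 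The paper's route has the advantage that Lemma~\ref{lemma:cr} is formulated so as to be reused verbatim in Section~\ref{subsec:pg} for the Poisson structure on special quaternionic polynomials; your route is more self-contained and does not require setting up the polynomial-algebra machinery, but would have to be redone (or the abstract lemma proved anyway) at that later point. One small notational caution: in the context of the general Proposition~\ref{prop:cr}, your $\bar y_i$ denotes the genuine complex conjugate of the holomorphic coordinate $y_i$ on $Y_\Q$; this is consistent with your computation but should not be confused with the paper's use of $\bar y_i$ in Section~\ref{subsec:cluster} as a label for an independent vertex $i'$ of the quiver $\Q_n$.
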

The proof is based on the following general lemma, which is also used later in the paper.
\begin{lemma}\label{lemma:cr}
Let $V$ be a real vector space endowed with a polynomial Poisson structure, and let $\sigma \colon V \to V$ be a linear Poisson involution. Define an anti-linear involution $\bar \sigma$ on $V_\C := V \otimes \C$ by  $\bar \sigma(x) :=  \sigma(\bar x)$, where $\sigma$ is extended from $V$ to $V_\C$ by $\C$-linearity. Let $V_\sigma := \mathrm{Fix}(\bar \sigma)$ be the fixed point set of $\bar \sigma$. Then there is a unique Poisson structure on the real vector space $V_\sigma$ whose complexification coincides with the complexification of the Poisson structure on $V$.
\end{lemma}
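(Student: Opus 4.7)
The uniqueness claim is immediate: complexification turns a real bivector field on $V_\sigma$ into its $\C$-linear extension to $V_\sigma\otimes_\R\C = V_\C$, and this operation is injective, so two real Poisson structures on $V_\sigma$ whose complexifications agree on $V_\C$ must coincide. The content of the lemma is thus existence.

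For existence I would work in coordinates adapted to the $\pm 1$-eigenspace decomposition $V = V^+\oplus V^-$ of $\sigma$. Choose real bases of $V^\pm$ to obtain real linear coordinates $(x_a, y_b)$ on $V$, which extend by $\C$-linearity to complex linear coordinates on $V_\C$; in them $\sigma^*$ acts as $x_a\mapsto x_a$ and $y_b\mapsto -y_b$. A short direct check shows that the fixed-point locus of $\bar\sigma$ is $V_\sigma = V^+\oplus iV^-$, so real-valued linear coordinates on $V_\sigma$ are $x_a$ together with $\tilde y_b := -i\,y_b$.

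Since the given Poisson bracket on $V$ is polynomial, the coordinate brackets $\{x_a, x_b\}$, $\{x_a, y_b\}$, $\{y_a, y_b\}$ are given by polynomials $P^{++}_{ab}, P^{+-}_{ab}, P^{--}_{ab}$ with real coefficients in $(x, y)$; the same polynomials describe the $\C$-linear extension of the bracket on $V_\C$. The hypothesis that $\sigma$ is Poisson translates directly into parity conditions: $P^{++}$ and $P^{--}$ are even in the $y$ variables, while $P^{+-}$ is odd. I would then rewrite everything in the $V_\sigma$-coordinates by substituting $y_b = i\tilde y_b$ and using bilinearity of the bracket. The computation is pure bookkeeping: each monomial picks up a factor $i^{\deg_y}$ which is real whenever the $y$-degree is even, while in the odd case the extra $-i$ from $\tilde y_b = -i\,y_b$ pairs with the leftover $i$ to again yield a real coefficient. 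Consequently $\{x_a, x_b\}$, $\{x_a, \tilde y_b\}$, $\{\tilde y_a, \tilde y_b\}$, computed in $V_\C$, are polynomials in $(x, \tilde y)$ with real coefficients, so the complex bracket on $V_\C$ restricts to a real polynomial bracket on $V_\sigma$. The Jacobi identity is inherited from $V_\C$, and the complexification of this new bracket equals the original one on $V_\C$ by construction.

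The only ``obstacle'' is keeping track of the powers of $i$; there is no conceptual difficulty. Structurally, the argument expresses the fact that the holomorphic Poisson bivector on $V_\C$ is invariant under the antiholomorphic involution $\bar\sigma = \sigma\circ c$, where $c$ is ordinary complex conjugation with respect to the real form $V$: it is invariant under $\sigma$ by hypothesis and under $c$ because its coefficients are real in the $V$-coordinates. Invariance under a real structure is precisely the condition for a holomorphic tensor on $V_\C$ to descend to a real tensor on its fixed-point set, so this viewpoint gives a coordinate-free reformulation of the same argument.
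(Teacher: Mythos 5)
Your proof is correct, and your main argument is a genuinely different, more explicit route than the paper's. The paper works purely at the algebra level: it identifies $\R[V_\sigma]$ with $\mathrm{Fix}(\bar\sigma^*)$, observes that $\bar\sigma^*$ is a composition of two commuting Poisson involutions ($f\mapsto f\circ\sigma$ and $f\mapsto \overline{f(\bar{\,\cdot\,})}$), and concludes that its fixed point set is a Poisson subalgebra. You instead decompose $V=V^+\oplus V^-$ into $\sigma$-eigenspaces, identify $V_\sigma=V^+\oplus iV^-$, and run a direct parity bookkeeping in the coordinates $(x_a,\tilde y_b)=(x_a,-iy_b)$, using the fact that the Poisson condition on $\sigma$ forces the coordinate brackets $P^{++},P^{--}$ to be even and $P^{+-}$ odd in $y$. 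This buys concreteness and makes the cancellation of $i$'s transparent, at the cost of choosing bases; the paper's argument is basis-free and automatically handles any (not a priori split) $\sigma$. Your closing coordinate-free remark---that the holomorphic Poisson bivector on $V_\C$ is fixed by $\bar\sigma=\sigma\circ c$ because it is $\sigma$-invariant and has real coefficients in the $V$-coordinates, hence descends to the real form $V_\sigma$---is essentially the paper's argument restated in the dual language of tensors rather than function algebras, so the two routes meet there.
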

\begin{proof}[Proof of Lemma \ref{lemma:cr}]
The space $V_\sigma$ is a real form of the complex vector space $V_\C$, so there is at most one Poisson structure on $V_\sigma$ extending to the Poisson structure on $V_\C$. To prove existence, notice that we have an isomorphism of $\R$-algebras   $\R[V_\sigma] \simeq  \mathrm{Fix}(\bar \sigma^*)$, where  $ \bar \sigma^*\colon \C[V_\C] \to \C[V_\C]$ is an involution given by $(\bar \sigma^*f)(x) = \overline{f( \bar \sigma( x))}.$
So, to obtain the desired Poisson bracket on $\R[V_\sigma]$, it suffices to show that the $\R$-subalgebra $\mathrm{Fix}(\bar \sigma^*)$ is closed under the Poisson bracket on $\C[V_\C]$. To that end, observe that $ \bar \sigma^*$ is a composition of two commuting Poisson involutions: $f(x) \mapsto {f( \sigma(x))}$ and $ f(x) \mapsto \overline{f(\bar x)}$. So, $ \bar \sigma^*$ is itself a Poisson involution and its fixed point set $\mathrm{Fix}(\bar \sigma^*) = \R[V_\sigma]$ is indeed closed under the Poisson bracket, as desired.
\end{proof}
\begin{proof}[Proof of Proposition \ref{prop:cr}]
Let $V$ be the space of real-valued functions on the vertex set of the quiver $\Q$. Then $V$ carries an involution $\sigma := \tau^*$ (pull-back by $\tau$) and a Poisson bracket defined by \eqref{lcpb}, where the coordinates $y_i \colon V \to \R$ on $V$ are defined by $y_i(\xi) := \xi(i)$. 
So, by Lemma \ref{lemma:cr}, the extension of the Poisson structure from $V$ to $V \otimes \C$ restricts to the space $V_\sigma = \{\xi \in V \otimes \C \mid \tau^*\xi = \bar \xi \}$. But $ Y_{\Q}$ is an open dense subset of  $V \otimes \C$, while $ Y_{\Q}^\R$  is an open dense subset of  $V_\sigma$, so the Poisson structure on $ Y_{\Q}$ restricts to $ Y_{\Q}^\R$. 
\end{proof}

For a quiver $\Q$ with an involution $\tau$, a \textit{real quiver mutation} is either a quiver mutation at a vertex fixed by~$\tau$, or a composition of two quiver mutations at vertices switched by $\tau$. In what follows, we assume that no vertices of $\Q$ are fixed by $\tau$. In that case, a real quiver mutation is necessarily a composition of two mutations. The order of those mutations does not matter because two vertices switched by a quiver involution are necessarily disjoint.

If a quiver $\Q'$ is obtained from a quiver $\Q$ with involution $\tau$ by means of a real quiver mutation, then $\tau$ is also an involution of $\Q'$. The corresponding \textit{real $Y$-mutation} is the composition of two $Y$-mutations corresponding to quiver mutations producing $\Q'$ from $\Q$. A real $Y$-mutation commutes with the anti-holomorphic involution $\bar \tau$ and hence can be viewed a birational Poisson map  $ Y_{\Q}^\R \to  Y_{\Q'}^\R$.
%

A real isomorphism $\Q \to \Q'$ of quivers with involutions is an isomorphism respecting the involutions. Such an isomorphism induces a real Poisson isomorphism $Y_{\Q }^\R \to Y_{\Q'}^\R $.
 A \textit{real cluster transformation} for a quiver $\Q$ with involution is a composition of real $Y$-mutations and a map induced by a real isomorphism of the resulting quiver onto the initial one. Such a transformation is a birational Poisson  map $ Y_{\Q}^\R $ onto itself.

\subsection{Recutting as a real cluster transformation}\label{subsec:cluster}

 Here we apply the formalism developed in the previous section to provide a cluster description of polygon recutting on the space $\mathcal P_n^S / S$ of similarity classes of polygons closed up to similarity. To that end we build a quiver $\Q_n$ with an involution $\tau$ such that  \eqref{eq:yrels} is a real cluster transformation as defined in Section~\ref{sec:qmrs}. In terminology of \cite{galashin2019quivers}, the quiver  $\Q_n$ is the \textit{twist} of the affine Dynkin diagram $\tilde A_{n-1}$. It has $2n$ vertices which we label as $1, \dots, n, 1', \dots, n'$. There is an arrow from vertex $i$ to vertex $j$ if and only if $j - i \equiv 1 \mod n$, an arrow from vertex $i'$ to vertex $j'$ if and only if $j - i \equiv 1 \mod n$, an arrow from vertex $i$ to vertex $j'$ if and only if $j - i \equiv -1 \mod n$, and an arrow from vertex $i'$ to vertex $j$ if and only if $j - i \equiv -1 \mod n$. 
The involution $\tau \colon \Q_n \to \Q_n$ is given by $\tau(i) = i'$. Since on $Y_{\Q_n}^\R$ we have $y_{i'} = \bar y_i$, we denote the $y$-variables corresponding to $i'$ vertices by $\bar y_i$. Thus, the $y_i$ and $\bar y_i$ variables are independent on $Y_{\Q_n}$ but complex conjugate to each other on $Y_{\Q_n}^\R$. Figure \ref{Fig3} depicts the quiver $\Q_5$ (while the top left part of Figure \ref{FigQR} shows the local structure of the general quiver $\Q_n$). The labels at vertices are the corresponding $y$-variables. 

Since the space  $Y_{\Q_n}^\R$ is parametrized by variables $y_i \in \C^*$, Proposition \ref{prop:ycoords} gives a way to identify the space $Y_{\Q_n}^\R$ with $ \mathcal P_n^{S} \! / S$. Namely, one takes an $n$-tuple $(y_1, \dots, y_n) \in Y_{\Q_n}^\R$ and extends it by periodicity. Under this identification recutting becomes a real $Y$-mutation: 


\begin{figure}[b!]
\centering
\begin{tikzpicture}

\node (D1) at (0,0) {
\begin{tikzpicture}
\node (A) at (0,0) {$y_{j-1}$};
\node (B) at (2,0) {$y_{j}$};
\node (C) at (4,0) {$y_{j+1}$};
\node (D) at (0,-2) {$\bar y_{j-1}$};
\node (E) at (2,-2) {$\bar y_j$};
\node (F) at (4,-2) {$\bar y_{j+1}$};
\draw [->] (A) -- (B);
\draw [<-] (C) -- (B);
\draw [<-] (E) -- (D);
\draw [->] (E) -- (F);
\draw [->] (B) -- (D);
\draw [->] (C) -- (E);
\draw [->] (E) -- (A);
\draw [->] (F) -- (B);
\draw [<-] (A.west) -- +(-0.5,0);
\draw [->] (C.east) -- +(0.5,0);
\draw [<-] (D.west) -- +(-0.5,0);
\draw [->] (F.east) -- +(0.5,0);
\end{tikzpicture}
};

\node (D2) at (8,0) {
\begin{tikzpicture}
\node (A) at (0,0) {$\displaystyle{y_{j-1}}(1 + y_j)$};
\node (B) at (2,0) {$y_{j}^{-1}$};
\node (C) at (4,0) {$\displaystyle \frac{y_{j+1}}{1+y_j^{-1}}$};
\node (D) at (0,-2) {$\displaystyle \frac{\bar y_{j-1}}{1+y_j^{-1}}$};
\node (E) at (2,-2) {$\bar y_j$};
\node (F) at (4,-2) {$\displaystyle{\bar y_{j+1}}(1 + y_j) $};
\draw [<-] (A) -- (B);
\draw [->] (C) -- (B);
\draw [<-] (E) -- (D);
\draw [->] (E) -- (F);
\draw [<-] (B) -- (D);
\draw [->] (C) -- (E);
\draw [->] (E) -- (A);
\draw [<-] (F) -- (B);
\draw [->] (A) -- (D);
\draw [->] (F) -- (C);
\draw [->] (A)  .. controls (1,1) and (3,1) .. (C);
\draw [->] (F)  .. controls (3,-3) and (1,-3) .. (D);
\draw [<-] (A.west) -- (-1.5,0);
\draw [->] (C.east) -- (5.5,0);
\draw [<-] (D.west) --  (-1.5,-2);
\draw [->] (F.east) --  (5.5,-2);
\end{tikzpicture}
};

\node (D3) at (4,-5) {
\begin{tikzpicture}
\node (A) at (0,0) {$\displaystyle\frac{y_{j-1}(1 + y_j)}{1+\bar y_j^{-1}}$};
\node (B) at (2,0) {$y_{j}^{-1}$};
\node (C) at (4,0) {$\displaystyle \frac{y_{j+1}(1+ \bar y_j)}{1+y_j^{-1}}$};
\node (D) at (0,-2) {$\displaystyle \frac{\bar y_{j-1}(1 + \bar y_j)}{1+y_j^{-1}}$};
\node (E) at (2,-2) {$\bar y_j^{-1}$};
\node (F) at (4,-2) {$\displaystyle\frac{\bar y_{j+1}(1 + y_j)}{1+\bar y_j^{-1}} $};
\draw [<-] (A) -- (B);
\draw [->] (C) -- (B);
\draw [->] (E) -- (D);
\draw [<-] (E) -- (F);
\draw [<-] (B) -- (D);
\draw [<-] (C) -- (E);
\draw [<-] (E) -- (A);
\draw [<-] (F) -- (B);
\draw [<-] (A.west) -- +(-0.5,0);
\draw [->] (C.east) -- +(0.5,0);
\draw [<-] (D.west) -- +(-0.5,0);
\draw [->] (F.east) -- +(0.5,0);
\end{tikzpicture}
};
\draw [->, dashed] (D1) -- (D2)node[midway, above] {mutation at $y_j$};;
\draw [->, dashed] (D2) -- (D3)node[midway, right] {\,\,mutation at $\bar y_j$};;
\draw [->, dashed] (D3) -- (D1)node[midway, left] {isomorphism\,};;
\node () at (D1.west) {$\Q_n$};
\node () at (D3.west) {$\Q_n'$};
\end{tikzpicture}
\caption{Recutting as a real cluster transformation.}\label{FigQR}
\end{figure}
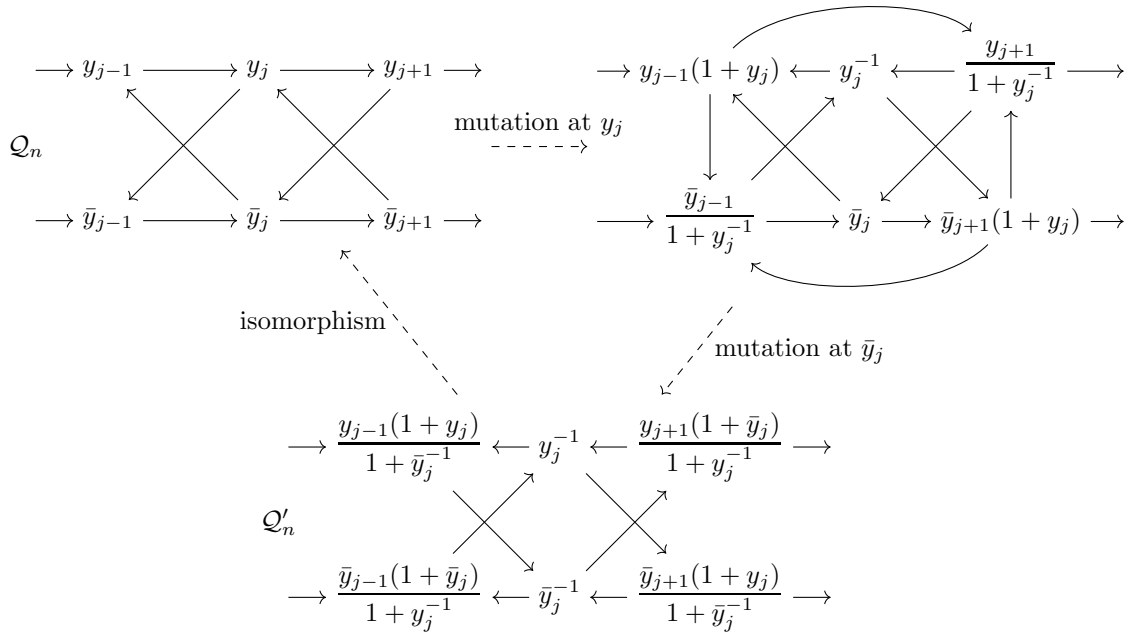

\begin{proposition}\label{prop:cluster}
Consider a real quiver mutation of $\Q_n$ given by mutating $y_j$ and $\bar y_j$. Then the resulting quiver $\Q_n'$ is real isomorphic to $\Q_n$. The cluster transformation given by composition of the real $Y$-mutation $Y_{\Q_n}^\R \to Y_{\Q_n'}^\R$ and the mapping $Y_{\Q'_n}^\R \to Y_{\Q_n}^\R$ induced by the isomorphism $\Q_n'\simeq\Q_n$ coincides with recutting $\rho_j$.
\end{proposition}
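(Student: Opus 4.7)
My plan is to verify the proposition by a direct computation with the $Y$-mutation rules of Section \ref{sec:qmrs}, essentially unpacking Figure \ref{FigQR}. There are three steps: determine the local structure of $\Q_n$ at the two vertices being mutated, perform the two mutations, and then exhibit a real quiver isomorphism from the resulting quiver back to $\Q_n$.

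First I will read off the local structure of $\Q_n$ near $y_j$ and $\bar y_j$ from the defining adjacency rules. At $y_j$ the incoming arrows come from $y_{j-1}$ and $\bar y_{j+1}$ while the outgoing ones go to $y_{j+1}$ and $\bar y_{j-1}$; the neighborhood of $\bar y_j$ is obtained by swapping $y \leftrightarrow \bar y$. The key combinatorial observation is that $y_j$ and $\bar y_j$ are \emph{not} joined by an arrow in $\Q_n$. Consequently the mutation at $y_j$ changes no edge incident to $\bar y_j$, and vice versa; the two mutations act on disjoint neighborhoods and may be composed in either order, producing the intermediate quiver of panel $D_2$ and then the quiver $\Q_n'$ of panel $D_3$ after the second mutation.

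Next I will apply the $Y$-mutation formulas of Section \ref{sec:qmrs} to track the coordinate changes. Mutation at $y_j$ inverts $y_j$, multiplies each incoming neighbor ($y_{j-1}$, $\bar y_{j+1}$) by $(1+y_j)$, and divides each outgoing neighbor ($y_{j+1}$, $\bar y_{j-1}$) by $(1 + y_j^{-1})$; the subsequent mutation at $\bar y_j$ performs the analogous operation with $\bar y_j$ in place of $y_j$. Composing these transformations at $y_{j-1}$, $y_{j+1}$ and at $y_j$, $\bar y_j$ directly produces the six variables displayed in panel $D_3$.

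Finally I will exhibit the real isomorphism $\psi \colon \Q_n' \to \Q_n$ that transposes the vertices $j$ and $j'$ (where the variables $y_j, \bar y_j$ sit) and fixes every other vertex. Checking that $\psi$ is a graph isomorphism reduces to comparing the edges around $\{j,j'\}$ in $D_3$ and $D_1$; the reality condition $\psi\circ\tau=\tau\circ\psi$ is automatic since both involutions are supported on the pair $\{j,j'\}$. Pulling back coordinates under $\psi$ interchanges the variables at $j$ and $j'$, so the value of $y_j$ after the cluster transformation equals the post-mutation value at the $\bar y_j$ vertex, namely $\bar y_j^{-1}$; reading off the remaining positions gives exactly the formulas \eqref{eq:yrels}. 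Thus the composition coincides with $\rho_j$.

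The main obstacle is the quiver-isomorphism step: one has to verify that the pair of mutations truly restores the original combinatorial quiver up to the transposition $\psi$. This is bookkeeping rather than conceptual work — one must track which arrows at the neighbors of $y_j$ and $\bar y_j$ are reversed, which new arrows arise from length-two paths through the mutated vertices, and which pairs of opposite arrows cancel — but it must be carried out carefully, case by case on the six relevant vertices. Once this check is in place, everything else is a mechanical substitution into the $Y$-mutation formula.
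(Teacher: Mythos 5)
Your proposal follows the same route as the paper: mutate at $y_j$ and $\bar y_j$ (noting they are non-adjacent so the order is immaterial), check that the resulting quiver is restored by the transposition of those two vertices, and read off the displaced variable labels to recover \eqref{eq:yrels}. This is precisely the bookkeeping carried out in Figure \ref{FigQR}, which is what the paper's proof points to, so your approach matches the paper essentially verbatim.
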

\begin{proof}
Consider Figure \ref{FigQR}. Observe that the mapping $\Q_n' \to \Q_n$  given by $y_j^{-1} \mapsto \bar y_j$ and $\bar y_j^{-1} \mapsto  y_j$ and keeping the other vertices in place is an isomorphism.  By moving the labels from $\Q_n'$ to $\Q_n$ as prescribed by the isomorphism, one gets formulas \eqref{eq:yrels}, as desired.
\end{proof}
\begin{corollary}
Recutting on the space ${ \mathcal P_n^{S}}\!/{ S}$ preserves the following Poisson bracket:
\begin{equation}\label{eq:cpb}
\begin{gathered}
\{y_i, y_{j}\} = (\delta_{i+1, j} - \delta_{i-1, j})y_iy_j, \quad \{y_i, \bar y_{j}\} = (\delta_{i-1, j} - \delta_{i+1, j})y_i \bar y_{j},\\
\{\bar y_i, \bar y_{j}\} = (\delta_{i+1, j} - \delta_{i-1, j})\bar y_i\bar y_j, \quad \{\bar y_i,  y_{j}\} = (\delta_{i-1, j} - \delta_{i+1, j})\bar y_i y_{j}.
\end{gathered}
\end{equation}
where $\delta_{i,j} = 1$ if $i \equiv j \mod n$ and $\delta_{i,j} = 0$ if $i \not\equiv j \mod n$. (Note that the last two formulas are determined by the first two since the bracket must be real.)
\end{corollary}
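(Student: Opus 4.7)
The plan is short because nearly all of the work is already contained in Proposition \ref{prop:cluster} together with the general properties of $Y$-mutations established in Section \ref{sec:qmrs}.

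The first step is to verify that the bracket \eqref{eq:cpb} is simply the log-canonical bracket \eqref{lcpb} associated with the quiver $\Q_n$, restricted to the real part $Y_{\Q_n}^\R$ via Proposition \ref{prop:cr}. To do this I would read off the signed adjacency matrix of $\Q_n$ from the arrow rules given just above Proposition \ref{prop:cluster}: an arrow $i \to j$ exists iff $j \equiv i+1 \pmod n$, contributing $a_{ij} = \delta_{i+1,j} - \delta_{i-1,j}$; the same rule for primed vertices gives $a_{i'j'} = \delta_{i+1,j} - \delta_{i-1,j}$; and the cross-arrows $i \to j'$ (resp.\ $i' \to j$) exist iff $j \equiv i-1 \pmod n$, contributing $a_{ij'} = \delta_{i-1,j} - \delta_{i+1,j}$. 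Substituting these coefficients into \eqref{lcpb} and using the identification $y_{i'} = \bar y_i$ on $Y_{\Q_n}^\R$ reproduces the four formulas in \eqref{eq:cpb}.

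The second step is to invoke invariance. By Proposition \ref{prop:cluster}, recutting $\rho_j$ on $\mathcal P_n^S/S \simeq Y_{\Q_n}^\R$ is a real cluster transformation, namely the composition of the real $Y$-mutation at the pair $\{y_j, \bar y_j\}$ with the real Poisson isomorphism $Y_{\Q_n'}^\R \to Y_{\Q_n}^\R$ induced by the quiver isomorphism $\Q_n' \simeq \Q_n$. Both constituents are Poisson: $Y$-mutations are Poisson maps between log-canonical tori (stated in Section \ref{sec:qmrs}), so their real restrictions are Poisson with respect to the brackets on $Y_{\Q_n}^\R$ and $Y_{\Q_n'}^\R$ furnished by Proposition \ref{prop:cr}; and isomorphisms of (real) quivers induce (real) Poisson isomorphisms of the associated tori. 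Hence $\rho_j$ preserves \eqref{eq:cpb}.

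There is no real obstacle here — the only point that deserves a sentence of care is the matching between the arrow data of $\Q_n$ and the Kronecker-delta coefficients in \eqref{eq:cpb}, and the observation that the last two formulas are forced by reality (as already noted parenthetically in the statement), so it suffices to check the first two.
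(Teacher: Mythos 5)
Your proposal is correct and follows exactly the same route as the paper, which dispatches the corollary in one line by observing that \eqref{eq:cpb} is the canonical bracket on $Y_{\Q_n}^\R$ and that real cluster transformations are Poisson; you have simply unpacked the identification of the adjacency matrix with the Kronecker-delta coefficients and the two Poisson ingredients (real $Y$-mutation and real quiver isomorphism) that the paper leaves implicit.
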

\begin{proof}
This is the canonical Poisson bracket on $Y_{\Q_n}^\R$, preserved by all real cluster transformations.
\end{proof}
\begin{remark}\label{rem:phibr}
Brackets \eqref{eq:cpb} take a particular nice form when written in terms of  $|y_i|$ and $\phi_i = \mathrm{arg}(y_i)$, i.e. ratios of lengths of consecutive sides and exterior angles of the polygon. Namely, $|y_i|$ are Casimirs, while \begin{equation}\label{eq:phibr}\{\phi_i, \phi_j\} = \delta_{i+1, j} - \delta_{i-1, j}.\end{equation}
\end{remark}
\begin{remark}\label{rem:manyCas}
The bracket \eqref{eq:cpb} has a large number of Casimirs, namely all $|y_i|$, the product $y_1 \cdots y_n$ (equal to the coefficient $\alpha \in \C^*$ of the monodromy transformation $z \mapsto \alpha z + \beta$), and, for even $n$, the product $y_1y_3 \cdots y_{n-1}$. Most of these Casimirs are not preserved by recutting transformations  \eqref{eq:yrels}. The only ones that are preserved are the function $y_1 \cdots y_n$ (in particular, the angle sum $\sum \phi_i = \mathrm{arg}(y_1 \cdots y_n)$) and, for even $n$, the function $\mathrm{arg}(y_1y_3 \cdots y_{n-1}) = \phi_1 + \phi_3 + \dots + \phi_{n-1}$.

It follows from this description of Casimirs that the spaces $\mathcal P_n^E / S$, $\mathcal P_n^T / S$ of similarity classes of polygons closed up to isometry or translation are Poisson submanifolds. Indeed, the defining equation of $\mathcal P_n^E / S$ inside $\mathcal P_n^S / S$ is $|y_1 \cdots y_n| = 1$, while the defining equation of $\mathcal P_n^T / S$ is $y_1 \cdots y_n = 1$. So both are level sets of Casimirs and hence Poisson submanifolds. As for the submanifold $\mathcal P_n / S$ of similarity classes of closed polygons, it is defined by equations
$
y_1 \cdots y_n = 1$ and $1 + y_1 + y_1y_2 + \dots + y_1 \cdots y_{n-1} = 0,
$
and is therefore not Poisson.
\end{remark}
\begin{remark}
The cluster transformation described in Proposition \ref{prop:cluster} can be regarded a particular case of a more general transformation, known as the \textit{geometric $R$-matrix}. The cluster geometric $R$-matrix is defined in~\cite{inoue2019cluster} for triangular grid quivers and in \cite{chepuri2020plabic} for a more general class of \textit{spider web quivers}. The quiver $\Q_n$ is not a triangular grid quiver or spider web quiver but can be seen as such if we add obsolete arrows from each $y_i$ to $\bar y_i$ and from each $\bar y_i$ to $y_i$.
\end{remark}

%

%

\section{Recutting of polygons closed up to isometry: Arnold-Liouville integrability}\label{sec:ial}
 \subsection{Quaternionic polynomials}\label{subsec:qp}
This section is a brief introduction into the theory of polynomials over quaternions. We begin by reviewing their general properties. All these results are well known but seem to be scattered in the literature, so we sketch proofs. We then move on to define what we call \textit{special quaternionic polynomials}  and present a criterion for factorization of such polynomials into linear factors. This result plays an instrumental role in our proof of integrability of recutting.\par

First, let us fix some terminology. Let $\H = \mathrm{span}_\R\langle 1, \i, \j, \k \rangle$ be the skew-field of quaternions. There are two different operations in $\H$ that are usually referred to as \textit{conjugation}: $\alpha = a + b \i + c\j + d \k \mapsto \bar\alpha = a - b \i - c\j - d \k$ and $\alpha \mapsto \beta \alpha \beta^{-1}$. To avoid confusion, we only use the term ``conjugation'' for the former operation. Quaternions of the form $\alpha$, $\beta \alpha \beta^{-1}$ will be called \textit{similar}. 

 Let $\H[t] := \H \otimes \R[t]$ be the $\R$-algebra of \textit{unilateral}  quaternionic polynomials in the indeterminate $t$. Those can be thought as polynomials over quaternions whose coefficients commute with the variable. For 
$
f = \sum_{i=0}^n \alpha_i t^i \in \H[t]$ (where $\alpha_i \in \H$) and a quaternion $\beta \in \H$, define the \textit{right evaluation of $f$ at $\beta$} as $\mathrm{ev}^r_f (\beta) := \sum_{i=0}^n \alpha_i \beta^i$,  and 
 \textit{left evaluation of $f$ at $\beta$} as
$
\mathrm{ev}^l_f (\beta) := \sum_{i=0}^n  \beta^i\alpha_i
$
(for real $\beta$ one has $\mathrm{ev}^r_f (\beta)= \mathrm{ev}^l_f (\beta)$, in which case we just write it as $f(\beta)$). Say that $\beta$ is a right (left) root of $f$ if $\mathrm{ev}^r_f (\beta) = 0$ (respectively, $\mathrm{ev}^l_f (\beta) = 0$).  
The following summarizes basic facts about quaternionic polynomials and their roots.
\begin{proposition}\label{prop:qp} 
\begin{enumerate}
\item Let $f, g \in \H[t]$ and let $\alpha \in \H$ be a right root of $g$. Then $\alpha$ is a right root of the product $fg$.
\item Let $f \in \H[t]$ and let $\alpha \in \H$. Then $\alpha$ is a right root of $f$ if and only if $t - \alpha \in \H[t]$ is a right divisor of $f$.
\item Let $f, g \in \H[t]$ and assume that $\alpha \in \H$ is not a right root of $g$. Then
\begin{equation}\label{eq:evfg}
\mathrm{ev}^r_{fg} (\alpha) = \mathrm{ev}^r_{f} ( \mathrm{ev}^r_g (\alpha)\cdot \alpha\cdot \mathrm{ev}^r_g (\alpha)^{-1}) \mathrm{ev}^r_g (\alpha).
\end{equation}
\item  Let $f \in \H[t]$ and let $[\alpha] := \{ \beta \alpha \beta^{-1} \mid \beta \in \H \setminus \{0\}\} \subset \H$ be a similarity class of quaternions. Then one of the following is true:
\begin{enumerate}
\item The class $[\alpha]$ contains neither right nor left roots of $f$.
\item The class $[\alpha]$ contains a unique right root of $f$ and a unique left root of $f$. Such roots are called \textbf{isolated}.
\item Any element of $[\alpha]$ is both right and left root of $f$. Such roots are called \textbf{spherical}. 
\end{enumerate}
\item A similarity class $[\alpha] \subset \H$ contains a root of $f \in \H[t]$ if and only if it contains the root of its \textbf{companion polynomial} $f \bar f = \bar f f \in \R[t]$. In particular, any non-zero quaternionic polynomial has at least one root, and hence can be factored into linear factors. (By the previous part, here we do not need to distinguish between right and left roots.)
\item A similarity class  $[\alpha] \subset \H$ consists entirely of roots of $f$ if and only if $f$ is divisible by the \textbf{characteristic polynomial} of $\alpha$, given by
\begin{equation}\label{eq:charp}
\chi_\alpha = t^2 - 2(\mathrm{Re} \,\alpha) t + |\alpha|^2.
\end{equation}
(Note that since $\chi_\alpha$ has real coefficients, divisibility of $f$ by $\chi_\alpha$ on the right is equivalent to divisibility of $f$ by $\chi_\alpha$ on the left).
\item For any non-zero $f \in \H[t]$, the total number of similarity classes  $[\alpha] \subset \H$ containing roots of $f$ does not exceed the degree of $f$.
\end{enumerate}
\end{proposition}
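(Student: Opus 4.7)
The plan is to develop parts (1)--(7) in order, with (3) as the workhorse identity and a short calculation underlying the main structural statements in (4) and (6). Parts (1)--(3) rest on the right Euclidean division algorithm in $\H[t]$, which is valid because $t$ is central and $t-\alpha$ is monic. For (2), write $f = q(t-\alpha) + r$ with $r \in \H$, and verify directly from $q = \sum q_i t^i$ that $\mathrm{ev}^r_{q(t-\alpha)}(\alpha) = \sum q_i\alpha^{i+1} - \sum q_i\alpha\cdot\alpha^i = 0$, so $\mathrm{ev}^r_f(\alpha) = r$; then (1) is immediate. For (3), write $g = q(t-\alpha) + r$ with $r = \mathrm{ev}^r_g(\alpha) \in \H\setminus\{0\}$; by (1) the $fq(t-\alpha)$ piece vanishes at $\alpha$, and since $r$ commutes with $t$ we have $fr = \sum f_i r\, t^i$, yielding $\mathrm{ev}^r_{fr}(\alpha) = \sum f_i r\alpha^i = \sum f_i (r\alpha r^{-1})^i\, r = \mathrm{ev}^r_f(r\alpha r^{-1})\cdot r$, which is \eqref{eq:evfg}.

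The heart of the proposition is the following lemma extracted from (3): if $\alpha_1 \neq \alpha_2$ are both right roots of $f$ and lie in the same similarity class $[\alpha]$, then $\chi_\alpha \mid f$. To prove it, right-divide $f = q(t-\alpha_1)$; by (3), $0 = \mathrm{ev}^r_f(\alpha_2) = \mathrm{ev}^r_q(\alpha_3)\cdot(\alpha_2-\alpha_1)$, where $\alpha_3 := (\alpha_2-\alpha_1)\alpha_2(\alpha_2-\alpha_1)^{-1}$. A short manipulation, using only that $\chi_{\alpha_1}(\alpha_2) = 0$ (automatic since $\alpha_2 \in [\alpha_1]$), shows that $\alpha_3 = \bar\alpha_1$. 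Since $\alpha_3$ is a right root of $q$, part (2) gives $q = q''(t-\bar\alpha_1)$, and $f = q''(t-\bar\alpha_1)(t-\alpha_1) = q''\chi_\alpha$. This calculation is the main obstacle in the proof: once it is in place, the trichotomy in (4) for right roots follows directly (no root, unique root, or $\chi_\alpha\mid f$); (6) follows because $f = q''\chi_\alpha$ together with (1) makes every $\beta \in [\alpha]$ a right root; and the coincidence of isolated left and right roots follows because a class $[\alpha]$ with $\chi_\alpha\nmid f$ contributes only a simple (degree-$2$) factor of $\chi_\alpha$ to the real polynomial $f\bar f$, which cannot accommodate more than one quaternion in $[\alpha]$ as either a left or right root.

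For (5), one direction is (1). Conversely, given $\alpha$ a root of the real polynomial $f\bar f$, apply (3) to the product $f\cdot\bar f$: if $\mathrm{ev}^r_{\bar f}(\alpha)\neq 0$ then some element of $[\alpha]$ is a right root of $f$, while if $\mathrm{ev}^r_{\bar f}(\alpha)=0$ the duality $\mathrm{ev}^l_f(\beta) = \overline{\mathrm{ev}^r_{\bar f}(\bar\beta)}$ (immediate from the definitions) produces a left root of $f$ at $\bar\alpha \in [\alpha]$. Finally, (7) is an induction on $\deg f$: by (4) and (6), each similarity class containing a root of $f$ permits peeling off from $f$ either the linear factor $t-\alpha$ (isolated case) or the quadratic factor $\chi_\alpha$ (spherical case), strictly decreasing $\deg f$ while eliminating that class from the root set of the quotient; thus the number of root-bearing similarity classes is at most $\deg f$.
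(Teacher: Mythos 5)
Your parts (1)--(3) are essentially the paper's argument done carefully via right division, and your $\alpha_3 = \bar\alpha_1$ calculation in the ``workhorse lemma'' is correct and pretty: for distinct $\alpha_1,\alpha_2 \in [\alpha]$ the identity $(\alpha_2-\alpha_1)\alpha_2(\alpha_2-\alpha_1)^{-1} = \bar\alpha_1$ is precisely $\chi_{\alpha_1}(\alpha_2)=0$, and it yields $\chi_\alpha \mid f$ as you say. Your part (5) is essentially the paper's, and your induction for (7) by peeling off $(t-\alpha)$ or $\chi_\alpha$ is a genuinely different (and shorter) route than the paper's degree count on $\bar f f$; be aware, though, that peeling off $(t-\alpha)$ does \emph{not} ``eliminate $[\alpha]$ from the root set of the quotient'' (take $f=(t-\j)(t-\i)$, peel $t-\i$, and $q=t-\j$ still has a root in $[\i]$). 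The induction survives because part (3) and its left analogue give the containment $\{\text{root classes of }f\} \subseteq \{[\alpha]\}\cup\{\text{root classes of }q\}$, which is all you need; just don't claim elimination.

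The serious gap is in part (4). Your workhorse lemma and its left-handed twin give: if $\chi_\alpha \nmid f$ then $[\alpha]$ contains at most one right root and at most one left root. That is \emph{uniqueness}. But case (b) of the statement also asserts \emph{coexistence}: a class with an isolated right root must also contain a left root, and vice versa. Your sentence about the ``simple degree-$2$ factor of $\chi_\alpha$ in $f\bar f$ not accommodating more than one quaternion'' again argues an upper bound and does not produce the missing left root. And this cannot be deferred to part (5): as stated, part (5) leans on part (4) to make ``a root'' unambiguous, while your proof of (5) only produces \emph{either} a right \emph{or} a left root depending on whether $\mathrm{ev}^r_{\bar f}(\alpha)$ vanishes. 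The paper closes this cleanly by right-dividing $f$ by the real polynomial $\chi_\alpha$: the remainder $q_0 + q_1 t$ satisfies $\mathrm{ev}^r_f(\alpha') = q_0 + q_1\alpha'$ and $\mathrm{ev}^l_f(\alpha') = q_0 + \alpha' q_1$ for all $\alpha' \in [\alpha]$ (use your part (1) on $f = p\,\chi_\alpha + r$ and the centrality of $\chi_\alpha$). Then the right-root condition $\alpha' = -q_1^{-1}q_0 \in [\alpha]$ and the left-root condition $\alpha' = -q_0 q_1^{-1}\in [\alpha]$ are simultaneously satisfiable or not, since $-q_1^{-1}q_0$ and $-q_0 q_1^{-1}$ are similar. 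You should add this remainder computation (which fits naturally into your division-based framework) to complete part (4); everything downstream then stands.
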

\begin{proof}
Parts 1-3 hold for polynomials over any division ring, cf. \cite[Theorem 1]{gordon1965zeros}. To prove part 1 we need  to establish the implication $\mathrm{ev}^r_g (\alpha) = 0\implies\mathrm{ev}^r_{fg} (\alpha) = 0$. Since the map  $\H[t] \to \H$ given by $f \mapsto \mathrm{ev}^r_{fg} (\alpha)$ is a homomorphism of left $\H$-modules, it suffices to consider the case $f = t^m$. For such $f$ we have $fg = t^m g = gt^m$, so
$
\mathrm{ev}^r_{fg} (\alpha) = \mathrm{ev}^r_g (\alpha)\alpha^m = 0,
$
as needed. Now that we established part~1, part 2 can be proved in the same way as for a field, using long division. To prove part 3, let $\beta := \mathrm{ev}^r_g (\alpha)$. Again, it suffices to consider the case $f = t^m$. In that case, we get 
$
\mathrm{ev}^r_{fg} (\alpha) = \beta \alpha^m = (\beta \alpha \beta^{-1})^m \beta = \mathrm{ev}^r_{f}(\beta \alpha \beta^{-1})\beta,
$
as desired. \par To prove part 4, observe that any quaternion $\alpha$ is a root of its {characteristic polynomial} \eqref{eq:charp}. From this it follows that any positive power of a quaternion $\alpha$ can be expressed as $\alpha^m = r + s \alpha$, where  $r, s$ are polynomials with real coefficients in terms of $\mathrm{Re} \,\alpha$ and $|\alpha|^2$, and in particular only depend on the similarity class of $\alpha$. This in turn implies that for any polynomial $f \in \H[t]$ and a similarity class $[\alpha] \subset \H$ there exist $\lambda, \mu \in \H$ such that for any $\alpha' \in [\alpha]$ we have
$
\mathrm{ev}^r_{f} (\alpha') = \lambda + \mu \alpha'$ and $\mathrm{ev}^l_{f} (\alpha') = \lambda +  \alpha' \mu.
$
Now it is easy to see that (a) holds when $|\lambda| \neq |\alpha| |\mu|$,  (b) holds when $|\lambda| = |\alpha| |\mu| \neq 0$, while possibility~(c) holds when $|\lambda| = |\alpha| |\mu| = 0$.
\par
To prove part 5, notice that by part 1 any right root of $f$ is also a right root of $\bar f f$. So it suffices to show that if $\alpha$ is a right root of $\bar f f$, then the similarity class of $\alpha$ contains a root of $f$. Assume that $\alpha$ is a right root of~$\bar f f$. If $\alpha$ is also a right root of $f$, then we are done. If not, then by \eqref{eq:evfg} we get that $\alpha' : = \mathrm{ev}^r_{f}(\alpha) \cdot \alpha \cdot \mathrm{ev}^r_f(\alpha)^{-1}$ is a right root of $\bar f$, which is equivalent to saying that $\bar \alpha'$ is a left root of $f$. But any quaternion is similar to its conjugate (since conjugation preserves the real part and absolute value, and two quaternions $\alpha, \beta \in \H$ are similar if and only if $\mathrm{Re}\, \alpha =\mathrm{Re}\,  \beta$ and $|\alpha| = |\beta|$), so $\bar \alpha'$ is similar to $\alpha'$ and hence to $\alpha$. So indeed $f$ has a root in the similarity class of $\alpha$, as needed.

To prove part 6 note that any element of the class $[\alpha]$ is a root of $\chi_\alpha$. 
So the class $[\alpha]$ is annihilated by $f$ if and only if is annihilated by the remainder of right division of $f$ by $\chi_\alpha$. But that remainder is at most linear, so it can only annihilate the class $[\alpha]$ if it vanishes, which means that $f$ is divisible by $\chi_\alpha$, as needed. 


Part 7 is also true for any division ring, see \cite[Theorem 2]{gordon1965zeros}. Let us sketch a quaternion-specific proof. By part 5 it suffices to show that the number of similarity classes containing roots of the companion polynomial~$\bar f f$ does not exceed the degree $n$ of $f$. To that end observe that since the polynomial~$\bar f f$ has real coefficients, the similarity class of any of its roots consists entirely of roots. So, any similarity class containing a root of  $\bar f f$ contains a complex root of ~$\bar f f$. Therefore it suffices to show that the number of similarity classes of complex roots of $\bar f f$ is at most $n$. 
To prove that write $f$ as $f_1 + f_2 \i + f_3 \j + f_4 \k$, where the polynomials $f_i$ are real. Then~$\bar f f = \sum f_i^2$ so all its real roots have multiplicity at least $2$. As for non-real roots, any such root $\alpha$ has its complex conjugate counterpart $\bar \alpha$ which is similar to $\alpha$. So any  similarity class  of complex roots of $\bar f f$ contains at least two roots (counted with multiplicity), and the total number of classes cannot exceed $n$, q.e.d.
\end{proof}

\begin{definition}
We say that a quaternionic polynomial $f \in \H[t]$ is \textit{special} if it satisfies one of the following equivalent conditions:
\begin{enumerate}
\item
$
f(-t) = \i f(t) \i^{-1}.
$
\item $f$ can be written as a polynomial in $\j t$ with complex coefficients.
\item  All even coefficients of $f$ are complex numbers, while all odd coefficients belong to the complementary subspace $\mathrm{span}_\R\langle \j, \k \rangle$. \end{enumerate}
\end{definition}
Special quaternionic polynomials form a subalgebra of $\H[t]$ which we denote by $\tilde \H[t]$. More generally, one can take an arbitrary non-zero quaternion $\alpha$ with zero real part and consider polynomials such that 
$f(-t) = \alpha f(t) \alpha^{-1}$. This always gives a subalgebra isomorphic to $\tilde \H[t]$. 
The following property of special quaternionic polynomials will be used to prove integrability of polygon recutting:
\begin{proposition}\label{prop:facspec}
A special quaternionic polynomial $f \in \tilde \H[t]$ can be written as a product of linear special quaternionic polynomials $f_i \in  \tilde \H[t]$ if and only if all complex roots of the companion polynomial $\bar f f$ of $f$ are on the imaginary axis. 
\end{proposition}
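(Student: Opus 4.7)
The plan is to prove the two directions separately: the ``only if'' direction by a direct computation, and the converse by induction on $n := \deg f$, where the inductive step peels off a linear special right factor. The main obstacle is to locate, inside $f$, a right root lying in the plane $\mathbb{J} := \mathrm{span}_\R\langle \j, \k \rangle$; the key observation is that the symmetry $f(-t) = \i f(t) \i^{-1}$, combined with uniqueness of right roots in generic similarity classes (Proposition \ref{prop:qp} part 4), forces any purely imaginary right root of $f$ into $\mathbb{J}$.

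For the ``only if'' direction, suppose $f = \prod_i f_i$ with each $f_i = \alpha_0^{(i)} + \alpha_1^{(i)} t$ linear special (so $\alpha_0^{(i)} \in \C$ and $\alpha_1^{(i)} \in \mathbb{J}$). The identity $\alpha_0 \bar\alpha_1 + \alpha_1 \bar\alpha_0 = 0$, which one verifies directly for $\alpha_0 \in \C$, $\alpha_1 \in \mathbb{J}$, gives $\bar f_i f_i = |\alpha_0^{(i)}|^2 + |\alpha_1^{(i)}|^2 t^2 \in \R[t]$, whose complex roots lie on the imaginary axis. Since real polynomials are central in $\H[t]$, iterating $\overline{gh} = \bar h \bar g$ yields $\bar f f = \prod_i \bar f_i f_i$, whose complex roots are the union of those of the factors and therefore all on the imaginary axis.

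For the ``if'' direction, the base case $n = 1$ is immediate. For $n \geq 2$, I first locate a right root $\beta \in \mathbb{J}$. By Proposition \ref{prop:qp} part 5, $f$ has a right root in some similarity class $[\alpha]$ coming from a complex root $iy$ ($y \in \R$) of $\bar f f$; this class consists of purely imaginary quaternions of norm $|y|$. By part 4, either (c) every element of $[\alpha]$ is a right root, in which case $[\alpha] \cap \mathbb{J}$ (a circle in $\mathrm{Im}\,\H$, or $\{0\}$ when $y = 0$) provides $\beta$; or (b) $[\alpha]$ contains a unique right root $\beta$. In case (b), specialness of $f$ implies, by applying $\mathrm{ev}^r$ at $\gamma$ to both sides of $f(-t) = \i f(t) \i^{-1}$ and substituting $\gamma \mapsto -\gamma$, that the map $\gamma \mapsto \i\gamma\i$ sends right roots of $f$ to right roots. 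A direct computation gives $\i\gamma\i = -\gamma_\C + \gamma_\mathbb{J}$ under the decomposition $\H = \C \oplus \mathbb{J}$; for purely imaginary $\gamma$ this preserves both the norm and the zero real part, so $\i\beta\i \in [\alpha]$. Uniqueness of the right root in $[\alpha]$ then gives $\i\beta\i = \beta$, forcing $\beta_\C = 0$, i.e., $\beta \in \mathbb{J}$.

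Writing $\beta = b\j$ with $b \in \C$, define $h := \j t - \j\beta = \bar b + \j t$: its constant $\bar b$ lies in $\C$ and its $t$-coefficient $\j$ lies in $\mathbb{J}$, so $h$ is linear special with invertible leading coefficient. Since $t - \beta$ is a right divisor of $f$ (Proposition \ref{prop:qp} part 2), writing $f = g(t - \beta) = -g\j \cdot h$ and setting $\tilde g := -g\j$ gives $f = \tilde g h$. To check $\tilde g \in \tilde\H[t]$, apply the automorphism $\sigma \colon p(t) \mapsto \i p(t) \i^{-1}$ of $\H[t]$ to both sides: since $f, h$ are special, $\sigma(f)(t) = f(-t)$ and $\sigma(h)(t) = h(-t)$, so $f(-t) = \sigma(\tilde g)(t) \cdot h(-t)$; substituting $t \mapsto -t$ gives $f(t) = \sigma(\tilde g)(-t) \cdot h(t)$, and unique right divisibility (since $h$'s leading coefficient is a unit) forces $\tilde g(t) = \sigma(\tilde g)(-t)$, which is exactly the specialness condition for $\tilde g$. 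Finally, centrality of real polynomials gives $\bar f f = \bar{\tilde g}\tilde g \cdot (t^2 + |\beta|^2)$, so the complex roots of $\bar{\tilde g}\tilde g$ form a subset of those of $\bar f f$ and therefore remain on the imaginary axis. Applying the inductive hypothesis to $\tilde g$ completes the proof.
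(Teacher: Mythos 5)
Your proof is correct and follows essentially the same route as the paper: multiplicativity of the companion polynomial handles the forward direction, and the inductive step locates a right root $\beta \in \mathrm{span}_\R\langle\j,\k\rangle$ by exploiting the specialness symmetry $f(-t)=\i f(t)\i^{-1}$ (your evaluation-based derivation of the map $\beta\mapsto\i\beta\i$ is equivalent to the paper's Lemma~\ref{lemma:isroot}, which factors $f=g(t-\beta)$ and rearranges), then peels off a linear special right factor and inducts. Your explicit check that the quotient $\tilde g$ remains in $\tilde\H[t]$ via the automorphism $p(t)\mapsto\i p(t)\i^{-1}$ and right cancellation by the unit-leading-coefficient factor $h$ is a small but genuine refinement the paper leaves implicit (there it would follow from performing long division within the subalgebra), and your normalization $h=\j(t-\beta)$ is slightly more robust than the paper's $\beta^{-1}t-1$ since it also covers the degenerate case $\beta=0$.
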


We first prove a lemma, which will also be useful by itself.

\begin{lemma}\label{lemma:isroot}
Assume that the companion polynomial $\bar f f$ of a special quaternionic polynomial $f \in \tilde \H[t]$ has all its complex roots on the imaginary axis. Let $\alpha$ be a root of $f$. Then:
\begin{enumerate}
\item There exists $\beta \in \mathrm{span}_\R\langle \j, \k \rangle$ which is similar to $\alpha$.
\item Moreover, if $\alpha$ is isolated, then $\alpha \in \mathrm{span}_\R\langle \j, \k \rangle$.
\end{enumerate}
\end{lemma}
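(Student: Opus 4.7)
The plan is to prove parts (1) and (2) in order, since the isolation argument for (2) will depend on the fact, established in (1), that $\alpha$ is pure imaginary.

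For part (1), the goal is to show $\mathrm{Re}(\alpha) = 0$. Once this is known, any quaternion $\beta \in \mathrm{span}_\R\langle\j,\k\rangle$ with $|\beta| = |\alpha|$ (for instance $\beta := |\alpha|\j$) is pure imaginary of the same norm as $\alpha$, hence similar to it. To show $\mathrm{Re}(\alpha) = 0$, I would invoke Proposition \ref{prop:qp}(5) to produce a root $\gamma \in [\alpha]$ of the real polynomial $\bar f f$. Because $\bar f f$ has real coefficients, conjugation by any nonzero quaternion commutes with evaluation, so the entire similarity class $[\gamma]$ consists of roots of $\bar f f$; in particular the complex representative $z := \mathrm{Re}(\gamma) + |\mathrm{Im}(\gamma)|\,\i \in [\gamma]$ is a complex root of $\bar f f$, and by hypothesis lies on the imaginary axis. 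Since the real part is constant on similarity classes, $\mathrm{Re}(\alpha) = \mathrm{Re}(z) = 0$.

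For part (2), the strategy is to exhibit the involution $\gamma \mapsto -\i\gamma\i^{-1}$ as a symmetry sending roots of $f$ to roots of $f$ of the same type (right or left), and then use isolation to force $\alpha$ to be fixed by this involution. Writing $f = \sum_k\alpha_k t^k$, the defining condition $f(-t) = \i f(t)\i^{-1}$ is equivalent to the commutation relation $\alpha_k\i = (-1)^k\i\alpha_k$, which yields
\begin{equation*}
\mathrm{ev}^r_f(-\i\gamma\i^{-1}) = \sum_k \alpha_k(-1)^k \i\gamma^k\i^{-1} = \i\sum_k \alpha_k\gamma^k\,\i^{-1} = \i\,\mathrm{ev}^r_f(\gamma)\,\i^{-1},
\end{equation*}
and an analogous identity for left evaluation. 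Thus $-\i\alpha\i^{-1}$ is a right (resp.\ left) root of $f$ whenever $\alpha$ is. By part (1), $\alpha$ is pure imaginary, so $-\i\alpha\i^{-1}$ is also pure imaginary of the same norm, and hence lies in $[\alpha]$. Proposition \ref{prop:qp}(4) then forces $\alpha = -\i\alpha\i^{-1}$, i.e.\ $\alpha$ anticommutes with $\i$. Writing $\alpha = a + b\i + c\j + d\k$ and comparing $\i\alpha$ with $-\alpha\i$ gives $a = b = 0$, so $\alpha \in \mathrm{span}_\R\langle\j,\k\rangle$, as desired.

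No step poses a real obstacle; the only subtleties are the correct sign-tracking in the identity $\alpha_k\i = (-1)^k\i\alpha_k$, and the observation that part (1) is essential for part (2) because the map $\gamma \mapsto -\i\gamma\i^{-1}$ flips the real part and therefore only lands back in $[\gamma]$ when $\mathrm{Re}(\gamma) = 0$.
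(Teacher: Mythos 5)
Your proof is correct and follows essentially the same approach as the paper's. The only cosmetic difference is in part (2): where you verify directly that $\gamma \mapsto -\i\gamma\i^{-1}$ permutes roots via the evaluation identity $\mathrm{ev}^r_f(-\i\gamma\i^{-1}) = \i\,\mathrm{ev}^r_f(\gamma)\,\i^{-1}$, the paper reaches the same conclusion by factoring $f(t) = g(t)(t-\alpha)$ and applying the specialty condition $f(-t)=\i f(t)\i^{-1}$ to produce the right divisor $t + \i\alpha\i^{-1}$; both are direct consequences of the same commutation relation on the coefficients.
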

\begin{proof}[Proof of Lemma \ref{lemma:isroot}]
Since all roots of the companion polynomial of $f$ are on the imaginary axis, by part 5 of Proposition \ref{prop:qp} we have $\mathrm{Re}\, \alpha = 0$. So, $\alpha$ must be similar to some element of $\mathrm{span}_\R\langle \j, \k \rangle$, which establishes the first statement of the lemma. To prove the second statement, assume that $\alpha$ is isolated.
Using part 2 of Proposition \ref{prop:qp} write $f(t)$  as $ g(t) (t - \alpha)$ where $g(t) \in \H[t]$. Then, using that $f$ is special, we get
$$
f(t) = \i f(-t) \i^{-1} =  \i g(-t)(-t-\alpha) \i^{-1} =  -\i g(-t) \i^{-1}\cdot (t + \i \alpha \i^{-1}),
$$
and applying once again part 2 of Proposition \ref{prop:qp} we see that $\alpha' := - \i \alpha \i^{-1}$ is also a root of $f$.  Furthermore, since $\mathrm{Re}\, \alpha = 0$, we have $\mathrm{Re}\, \alpha' = 0$, and since $|\alpha'| = |\alpha|$, it follows that $\alpha'$ is similar to $\alpha$. Therefore, since $\alpha$ is isolated, we have $\alpha'= - \i \alpha \i^{-1} = \alpha$, which is equivalent to $\alpha \in \mathrm{span}_\R\langle \j, \k \rangle$, as needed.
\end{proof}

\begin{proof}[Proof of Proposition \ref{prop:facspec}]
For a linear  special quaternionic polynomial $a + b\j t$, where $a,b \in \C$, its companion polynomial $a\bar a + b \bar b t^2$ has roots on the imaginary axis. Furthermore, since the companion polynomial of a product is the product of companion polynomials, it follows that if $f$ is a product of linear special quaternionic polynomial, then all complex roots of the companion polynomial of $f$ are on the imaginary axis.  Conversely, assume that $f$ is special and all roots of $\bar f f$ are on the imaginary axis. Let $\alpha \in \H$ be an arbitrary root of $f$. Then $\alpha$ is either isolated or spherical.  In the former case, by Lemma \ref{lemma:isroot}, we have $\alpha \in \mathrm{span}_\R\langle \j, \k \rangle$, so $f$ is divisible on the right by the special quaternionic polynomial $\alpha^{-1}t - 1$. 
In the later case, by Lemma \ref{lemma:isroot} we can find a root $\alpha' \in \mathrm{span}_\R\langle \j, \k \rangle$ of $f$ similar to $\alpha$, so $f$ is divisible on the right by the special quaternionic polynomial $(\alpha')^{-1}t - 1$. So, in either case, $f$ is divisible on the right by a  linear special quaternionic polynomial, and proceeding by induction one shows that $f$ can be written as a product of such polynomials. 
\end{proof}

\subsection{Recutting as refactorization}\label{subsec:refac}
In this section we establish a connection between recutting and quaternionic polynomials, which is then used to derive invariants of recutting and prove its complete integrability. Let $(v_i)$ be a polygon, and $(v_i')$ be the result of its recutting at a vertex $v_j$. Consider the edge vectors $z_i = v_i - v_{i-1}$ and $z_i' = v_i' - v_{i-1}'$. Then $z_j, z_{j+1}, z_j', z'_{j+1}$ satisfy relations  \eqref{eq:zrels}.
\begin{proposition}\label{prop:rqp}
Recutting relations \eqref{eq:zrels} are equivalent to the following relation between special quaternionic polynomials:
\begin{equation}\label{eq:refac}
(1 + z_j {\j}t)(1 + z_{j+1} {\j}t) = (1 + z'_j {\j}t)(1 + z'_{j+1} {\j}t).
\end{equation}
\end{proposition}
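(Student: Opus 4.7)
The plan is to prove the equivalence by direct expansion of both sides of \eqref{eq:refac} and comparison of coefficients. The key algebraic fact I will use is that $\j$ anticommutes with $\i$, so for any complex number $z = a + b\i \in \C$ one has $\j z = a\j + b\j\i = a\j - b\i\j = \bar z \j$. In particular, $z \j = \j \bar z$ holds for all $z \in \C$.

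Using that $t$ commutes with all quaternion coefficients, and that $\j^2 = -1$, I will expand
\begin{equation}
(1 + z_j \j t)(1 + z_{j+1}\j t) = 1 + (z_j + z_{j+1})\j t + z_j \j z_{j+1} \j t^2.
\end{equation}
The quadratic term simplifies via $\j z_{j+1} = \bar z_{j+1} \j$ to $z_j \bar z_{j+1} \j^2 t^2 = -z_j \bar z_{j+1} t^2$, so the product becomes the special quaternionic polynomial
\begin{equation}
1 + (z_j + z_{j+1})\j t - z_j \bar z_{j+1}\, t^2.
\end{equation}
The same computation applied to the primed variables yields $1 + (z_j' + z_{j+1}')\j t - z_j' \bar z_{j+1}'\, t^2$.

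Since $\{1, \j t, t^2\}$ are linearly independent in $\H[t]$ over $\R$ (in fact, the coefficient of $t$ lies in $\mathrm{span}_\R\langle \j,\k\rangle$ and the coefficient of $t^2$ lies in $\C$, so equating quaternionic coefficients is unambiguous), equality of the two products is equivalent to the simultaneous equalities $z_j + z_{j+1} = z_j' + z_{j+1}'$ and $z_j \bar z_{j+1} = z_j' \bar z_{j+1}'$, which are exactly the recutting relations \eqref{eq:zrels}. No genuine obstacle arises: the only thing to be careful about is keeping track of the order of quaternionic factors when moving $\j$ past complex numbers, but once $\j z = \bar z \j$ is in hand the verification is purely mechanical.
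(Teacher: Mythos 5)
Your proof is correct and follows essentially the same route as the paper: both expand the quadratic product $(1+z\j t)(1+w\j t) = 1 + (z+w)\j t - z\bar w\, t^2$ using $\j w = \bar w \j$ and then compare coefficients. You have merely spelled out the anticommutation step and the coefficient-comparison argument in more detail than the paper's one-line computation.
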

\begin{proof}
Indeed, for any $z, w \in \C$ one has
$(1 + z {\j}t)(1 + w {\j}t) = (1 + (z+w){\j}t - z\bar w t^2),
$
so \eqref{eq:refac} is equivalent to  \eqref{eq:zrels}.
\end{proof}
As a result, one can interpret recutting of a polygon $(v_i)$ at a vertex $v_j$ as refactorization of the quadratic quaternionic polynomial $g(t) := (1 + z_j {\j}t)(1 + z_{j+1} {\j}t)$. 
\begin{remark}\label{rem:rqp}
Note that:
\begin{enumerate}
\item The polynomial $g(t)$ is only divisible by a real polynomial when $z_j = -z_{j+1}$ (equivalently,  $v_{j-1} = v_{j+1}$), which is the case when recutting at $v_j$ is impossible. So, as long as recutting is possible, it follows from Proposition \ref{prop:qp} that the polynomial $g(t)$ has at most two right roots (both isolated) and hence at most two factorizations of the form $(1 + z {\j}t)(1 + w {\j}t) $. \item The companion polynomial $\bar g(t) g(t)$ of $g(t)$ is $(1 + |z_j|^2t^2)(1 + |z_{j+1}|^2t^2)$. So, if $|z_j| \neq |z_{j+1}|$, the polynomial $g(t)$ has exactly two right roots (both isolated) and hence, by Lemma~\ref{lemma:isroot}, exactly two factorizations of the form $(1 + z {\j}t)(1 + w {\j}t) $. In this case, recutting at $v_j$ can be seen as switching between these two factorizations.
\item If $|z_j| = |z_{j+1}|$, then recutting at $v_j$ is the identity transformation. In this case, the polynomial $g(t)$ has a unique right root and hence a unique factorization. 
\end{enumerate}
Summing up, unless the vertices $v_{j-1}$ and $v_{j+1}$ coincide, the polynomial $g(t) = (1 + z_j {\j}t)(1 + z_{j+1} {\j}t) $ has two (possibly identical) factorizations of the form $(1 + z {\j}t)(1 + w {\j}t) $, and recutting can be thought as switching from one factorization to another.
\end{remark}

\subsection{Recutting invariants for polygons closed up to translation}\label{subsec:inv1}
We begin our description of recutting invariants with the case of polygons closed up to translation. In this case, the invariants have a particularly simple form. In the next section generalize these results to polygons closed up to isometry.

Given a polygon $(v_i) \in \mathcal P_n^{T}$ closed up to translation, let $z_i$ be its edge vectors. Consider a special quaternionic polynomial
\begin{equation}\label{eq:fpoly}
f(t) := (1 + z_1 {\j}t)\cdots(1 + z_{n} {\j}t) \in \tilde \H[t].
\end{equation}
\begin{proposition}\label{prop:Lax}
The similarity class of the polynomial $f(t)$ in the skew-field $\tilde \H[[t]]$ is invariant under both the action of the group $E$ of isometries, and the recutting action of $\tilde S_n$.
\end{proposition}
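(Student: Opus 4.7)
The plan is to establish the two invariance assertions separately, in each case by exhibiting an explicit element of $\tilde\H[[t]]^\times$ that conjugates $f(t)$ to the polynomial built from the transformed edge vectors.

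For the $E$-action, an isometry $z\mapsto \alpha z+\beta$ acts on edge vectors as $z_i\mapsto \alpha z_i$ with $|\alpha|=1$, so the task is to compare $f(t)$ with $f^{\alpha}(t):=\prod_{i=1}^n(1+\alpha z_i\j t)$. I would fix $\mu\in S^1$ with $\mu^2=\alpha$ and use the elementary identity $\j c=\bar c\j$ (valid for any $c\in\C$) to verify factor by factor that $\mu(1+z_i\j t)\mu^{-1}=1+\alpha z_i\j t$; this gives $\mu f(t)\mu^{-1}=f^\alpha(t)$ and hence the desired similarity. Since translations do not affect edge vectors, this handles all of $E$.

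For the recutting action, let $f'(t)$ denote the polynomial formed from the recut edge vectors, and let $v_j$ be the vertex at which we recut. If $1\le j\le n-1$, then Proposition~\ref{prop:rqp} applied to the adjacent factors $(1+z_j\j t)(1+z_{j+1}\j t)$ inside $f(t)$ yields $f'(t)=f(t)$ on the nose, so no conjugation is needed. The genuinely new case is the wrap-around $j=n$, where, since $z_{n+1}=z_1$ for polygons closed up to translation, the two relevant factors $(1+z_n\j t)$ and $(1+z_1\j t)$ sit at opposite ends of $f$. The strategy there is to cyclically move one of them across: the factor $u:=(1+z_1\j t)$ has invertible constant term in $\tilde\H[[t]]$ and is therefore a unit, so $u^{-1}f(t)u=(1+z_2\j t)\cdots(1+z_n\j t)(1+z_1\j t)$ is similar to $f(t)$ and has $(1+z_n\j t)(1+z_1\j t)$ as adjacent trailing factors. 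Applying Proposition~\ref{prop:rqp} to that pair and then conjugating back by $u$ produces $f'(t)$.

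The only subtle point I anticipate is the wrap-around case $j=n$, which forces the argument to take place in $\tilde\H[[t]]$ rather than in $\tilde\H[t]$ in order to invert the leading factor; away from that case both invariance claims reduce to the refactorization identity \eqref{eq:refac} and the quaternion identity $\j c=\bar c\j$. Finally, since the full recutting action of $\tilde S_n$ on $\mathcal P_n^T$ is generated by the individual $\rho_j$ considered above, invariance of the similarity class under $\tilde S_n$ follows immediately.
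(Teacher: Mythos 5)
Your proposal is correct and follows essentially the same route as the paper: the $E$-invariance comes from conjugating by a square root of the rotational factor, and recutting-invariance comes from the refactorization identity \eqref{eq:refac} combined with the fact that cyclic permutations of a product in $\tilde\H[[t]]$ produce similar elements. The paper packages the cyclic-shift step uniformly (introducing $f_i(t):=(1+z_i\j t)\cdots(1+z_{i+n-1}\j t)$ and noting $f\sim f_i$ for each $i$), while you single out the wrap-around case $j=n$; note also the minor slip that the conjugation returning you to $f'(t)$ at the end should be by $u':=(1+z_1'\j t)$ rather than by $u$, though this does not affect the conclusion that the similarity class is preserved.
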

\begin{remark}
The skew-field  $\tilde \H[[t]]$ of special quaternionic power series is defined analogously to $\tilde \H[t]$: a quaternionic power series is special if and only if all its even coefficients are complex numbers, while all odd coefficients belong to the complementary subspace $\mathrm{span}_\R\langle \j, \k \rangle$.  We say that $f,g \in \tilde \H[t]$ are similar if there exists an invertible formal power series $h \in \tilde \H[[t]]$ such that $hfh^{-1} = g$.
\end{remark}
\begin{proof}[Proof of Proposition \ref{prop:Lax}]
The action of the group $E$ of isometries amounts to multiplying all $z_i$ by the same complex number $\alpha$ of absolute value $1$. This is equivalent to a similarity transformation $f \mapsto \alpha^{1/2} f \alpha^{-1/2}$. So the action of $E$ indeed preserves the similarity class of $f$.\par
To prove the invariance of the similarity class of $f$ under recutting, observe that by Proposition~\ref{prop:rqp} the polynomial 
$
f_i(t) := (1 + z_i {\j}t)\dots(1 + z_{i+n - 1} {\j}t)
$
 does not change under recutting $\rho_i$. Furthermore, due to $n$-periodicity of the sequence $z_j$, the polynomial $f(t)$ is similar to $f_i(t)$, so its similarity class is preserved by any recutting $\rho_i$ and hence by the whole recutting group.
 \end{proof}
It follows from Proposition \ref{prop:Lax} that any central function of $f(t)$ descends to the space $\mathcal P_n^{T} / {E}$ and is invariant under recutting action on that space. As such functions we take the coefficients of the real polynomials $\bar f(t) f(t)$ and $\mathrm{Re}\, f(t)$.

\begin{proposition}\label{prop:inv}
For a polygon closed up to translation, one has
\begin{gather}
\bar f(t) f(t)= \prod_{i} (1 + |z_i|^2t^2) = 1 + \sum_k E_k t^{2k} ,\\
\mathrm{Re}\, f(t) = 1 + \sum_{k = 1}^{\lfloor n/2 \rfloor} (-1)^k I_{k}t^{2k}, \label{eq:trinv0}
\end{gather}
where
\begin{equation}\label{eq:trinv}
\begin{gathered}
E_k :=\sum_{i_1 <  \dots < i_{k}} |z_{i_1}|^2 \dots |z_{i_k}|^2,  \\
I_{k} := \mathrm{Re}\!\!\!\sum_{i_1 <  \dots < i_{2k}} z_{i_1}\bar z_{i_2} \dots z_{i_{2k - 1}} \bar z_{i_{2k}}.
\end{gathered}
\end{equation}
 Here and in the rest of this section all summation indices run from $1$ to $n$ unless otherwise specified.
\end{proposition}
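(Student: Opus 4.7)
The plan is to compute both identities by direct expansion, exploiting the key algebraic fact that $z \j = \j \bar z$ for complex $z$, together with $\j^2 = -1$.

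For the companion polynomial $\bar f(t) f(t)$, I would first use that conjugation on $\H$ is an anti-involution, which gives
\[
\bar f(t) = (1 - \j \bar z_n t) \cdots (1 - \j \bar z_1 t),
\]
since $\overline{z_i \j} = -\j \bar z_i$. The crucial observation is that for each $i$, the innermost pair collapses to a real polynomial:
\[
(1 - \j \bar z_i t)(1 + z_i \j t) = 1 + (z_i \j - \j \bar z_i) t + |z_i|^2 t^2 = 1 + |z_i|^2 t^2,
\]
where the linear term cancels because $z_i \j = \j \bar z_i$. Since this factor lies in $\R[t]$, it commutes with every quaternionic polynomial, so one can peel off pairs from the middle outward. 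Iterating yields $\bar f(t) f(t) = \prod_i (1 + |z_i|^2 t^2)$; expanding gives the $E_k$ as elementary symmetric polynomials of the $|z_i|^2$.

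For $\mathrm{Re}\, f(t)$, I would expand $f(t)$ as a sum over subsets. Each subset $S = \{i_1 < \cdots < i_k\}$ contributes the monomial $z_{i_1} \j z_{i_2} \j \cdots z_{i_k} \j \cdot t^k$. Using $z \j \cdot w \j = -z \bar w$ repeatedly, one collapses such a product to an alternating expression in the $z_{i_\ell}$ and $\bar z_{i_\ell}$. Concretely: for $k = 2m$ even the contribution is $(-1)^m z_{i_1} \bar z_{i_2} \cdots z_{i_{2m-1}} \bar z_{i_{2m}} \in \C$, while for $k$ odd it is a complex number times $\j$, which lies in $\mathrm{span}_\R\langle \j, \k\rangle$. (This matches the fact that $f$ is special: even-degree coefficients are complex, odd-degree coefficients live in the $\j,\k$-plane.) Since $\mathrm{Re}$ annihilates $\mathrm{span}_\R\langle \i, \j, \k \rangle$ and restricts to the usual real part on $\C$, only even $k$ survive, and taking real parts of the complex coefficients gives exactly $(-1)^m I_m$, proving \eqref{eq:trinv0}.

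The only step requiring any care is the bookkeeping of signs and quaternion orderings when reducing $z_{i_1}\j \cdots z_{i_k}\j$; everything else follows from termwise expansion. There is no substantial obstacle, and the argument is independent of the recutting dynamics (the invariance being already supplied by Proposition \ref{prop:Lax}).
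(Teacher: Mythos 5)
Your argument is correct and follows essentially the same route as the paper: the reduction $(1 - \j\bar z_i t)(1 + z_i\j t) = 1 + |z_i|^2 t^2$ combined with peeling real (hence central) factors from the middle is exactly the reason the companion polynomial is multiplicative, which is the fact the paper invokes for the first identity, and your subset expansion with the collapse $z\j\cdot w\j = -z\bar w$ is the "straightforward computation" the paper alludes to for the second. The only difference is that you spell out explicitly what the paper leaves as a one-line citation.
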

\begin{proof}
The first equality follows from multiplicativity of the companion polynomial, while the second one is obtained by a straightforward computation.
\end{proof}
It follows that the recutting action on $n$-gons closed up to translation has $\lfloor 3n/2\rfloor$ invariants, namely $n$ elementary symmetric polynomials $E_1, \dots, E_n$ of squared side lengths $|z_i|^2$ (whose invariance is obvious from the geometric definition of recutting), and $\lfloor n/2 \rfloor$ additional invariants $I_1, \dots, I_{\lfloor n/2 \rfloor}$. 
The following result explains the geometric meaning of some of the invariants $I_{k}$:
\begin{proposition}\label{prop:geom}
\begin{enumerate}
\item For a polygon with monodromy $z \mapsto z + \beta$, the invariant $I_1$ is a function of squared side lengths and squared length of $\beta$: 
$
I_1 = \frac{1}{2} (|\beta|^2 - E_1).
$
In particular, a polygon is closed if and only if \begin{equation}\label{eq:cc}I_1 = -\frac{1}{2}  E_1.\end{equation}
\item Let $n$ be even. Then
\begin{equation}\label{eq:cyclic}
I_{n/2} = \sqrt{E_n}\cos(\phi_1 + \phi_3 + \dots + \phi_{n-1}) =  \sqrt{E_n}\cos(\phi_2 + \phi_4 + \dots + \phi_{n}),
\end{equation}
where $\phi_i$ are exterior angles of the polygon.
\item For closed polygons, the invariant $I_2$ is a function of squared side lengths and the area $A$ of the polygon: 
\begin{equation}\label{eq:area}
I_2 = \frac{1}{2}E_2 - \frac{1}{8}E_1^2 - 2A^2.
\end{equation}

\end{enumerate}
\end{proposition}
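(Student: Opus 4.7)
The plan is to handle the three parts in turn, setting $P_m := \sum_{i_1<\cdots<i_{2m}} z_{i_1}\bar z_{i_2}\cdots z_{i_{2m-1}}\bar z_{i_{2m}}$ so that $I_m = \mathrm{Re}\,P_m$.

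Part 1 will be immediate by expanding $|\beta|^2 = \bigl|\sum_i z_i\bigr|^2 = \sum_i |z_i|^2 + \sum_{i\neq j}z_i\bar z_j = E_1 + 2 I_1$, which gives the stated formula, and the closure condition then reads $\beta = 0$. For part 2, when $2k = n$ the sum defining $I_{n/2}$ collapses to the single term $z_1\bar z_2 z_3\bar z_4\cdots z_{n-1}\bar z_n$. Writing $\theta_j := \arg z_j$ and using $\phi_j = \theta_{j+1}-\theta_j$, I will note that its modulus is $\sqrt{E_n}$ and that its argument is $\theta_1-\theta_2+\theta_3-\cdots-\theta_n = -(\phi_1+\phi_3+\cdots+\phi_{n-1})$, so the first equality in \eqref{eq:cyclic} follows by taking the real part. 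The second equality will come from $n$-periodicity of the $z_i$: this forces $\sum_j\phi_j\in 2\pi\Z$, so $\phi_2+\phi_4+\cdots+\phi_n$ and $\phi_1+\phi_3+\cdots+\phi_{n-1}$ are negatives of each other modulo $2\pi$, and cosine is even.

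For part 3, the cleanest approach is to exploit the special quaternionic polynomial $f(t)$ from \eqref{eq:fpoly}. Decomposing $f(t) = p(t) + \j\,q(t)$ with $p, q\in\C[t]$ and expanding $f(t) = \prod_i(1+z_i\j t)$ via the identities $\j\alpha = \bar\alpha\j$ for $\alpha\in\C$ and $\j^2 = -1$, I get
$$p(t) = \sum_m (-1)^m P_m\, t^{2m}, \qquad q(t) = \sum_m (-1)^m X_m\, t^{2m+1},$$
where $X_m := \sum_{i_1<\cdots<i_{2m+1}} z_{i_1}\bar z_{i_2}\cdots z_{i_{2m+1}}$; in particular $X_0 = \sum_i z_i$. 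The same quaternionic identities will give $\bar f(t)\,f(t) = \bar p(t)\,p(t) + \bar q(t)\,q(t)$ (the cross terms cancel because $p,q$ commute as complex polynomials in $t$), and by Proposition \ref{prop:inv} this equals $\prod_i(1+|z_i|^2 t^2)$. Comparing $t^4$ coefficients will yield
$$E_2 \;=\; 2\,\mathrm{Re}\,P_2 + |P_1|^2 - 2\,\mathrm{Re}(\bar X_0 X_1) \;=\; 2 I_2 + |P_1|^2 - 2\,\mathrm{Re}(\bar X_0 X_1).$$

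To conclude, I will specialize to closed polygons, where $X_0 = 0$ so the last term drops out. The remaining quantity $|P_1|^2$ will be identified via its real part $\mathrm{Re}\,P_1 = I_1 = -E_1/2$ (from part 1) together with the shoelace identity $2A = \mathrm{Im}\sum_{i<j}\bar z_i z_j = -\mathrm{Im}\,P_1$, giving $|P_1|^2 = E_1^2/4 + 4 A^2$; solving for $I_2$ then produces \eqref{eq:area}. The only real technical obstacle, already addressed by Section \ref{subsec:qp}, is keeping track of signs and conjugates when moving $\j$ past complex coefficients; once that bookkeeping is in place, each step reduces to expanding a product and comparing coefficients.
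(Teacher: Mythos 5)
Your proof is correct and, for part 3, follows essentially the same route as the paper: in both cases one computes low-order coefficients of $\bar f(t) f(t)$, uses that the first-degree coefficient $\alpha_1 = (\sum_i z_i)\j$ vanishes for closed polygons, and compares with $\prod_i(1+|z_i|^2 t^2)$. The paper works directly with the quaternionic coefficients $\alpha_k$ (writing $\alpha_2 = -I_1 - 2A\i$), while you organize the same bookkeeping via the decomposition $f = p + \j q$ with $p,q\in\C[t]$ and the identity $\bar f f = \bar p\, p + \bar q\, q$, which is a clean way to package it; you also supply the short calculations for parts 1 and 2 that the paper dismisses as ``straightforward.'' One small slip that does not affect the result: with your chosen convention $f=p+\j q$, the odd coefficient $\alpha_{2m+1}=(-1)^m X_m\j = \j\,(-1)^m\bar X_m$, so $q(t)=\sum_m(-1)^m\bar X_m\,t^{2m+1}$ (conjugates of your $X_m$'s); equivalently you could write $f = p + q\j$ to get exactly $q=\sum(-1)^m X_m t^{2m+1}$. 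Since the ensuing computation only uses $|P_1|^2$ and $\mathrm{Re}(\bar X_0 X_1)$, which are invariant under conjugating $X_m$, this changes nothing.
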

\begin{remark}
Note that for $n = 3$ we have $I_2 = 0$, so relation \eqref{eq:area} becomes $A^2 =\frac{1}{4}E_2 - \frac{1}{16}E_1^2  $ which is nothing but Heron's formula for the area of a triangle. When $n = 4$, formulas \eqref{eq:cyclic} and \eqref{eq:area} combined together give $A^2 =\frac{1}{4}E_2 - \frac{1}{16}E_1^2  -\frac{1}{2}\sqrt{E_n}\cos(\phi_1 + \phi_3) $ which is equivalent to Bretschneider's formula for the area of a quadrilateral.
\end{remark}
\begin{proof}[Proof of Proposition \ref{prop:geom}]
The first two parts are proved by a straightforward computation, so we only prove the last part. Denote by $\alpha_k$ be the coefficient of $t^k$ of the polynomial $f(t)$. Then
$
\alpha_1 = \j\sum_{i} z_i$, $\alpha_2 = -\sum_{i < j} z_i \bar z_j .
$
For closed polygons, this gives $\alpha_1  = 0$, 
$\alpha_2 =  -I_1 - 2A\i,
$
where \begin{equation}A =\frac{1}{2}\,\mathrm{Im} \sum_{ i < j } z_i \bar z_j \end{equation} is the signed area.  Using also that $f \in \tilde \H[t]$ and so $\mathrm{Re} \,\alpha_k = 0$ for any odd $k$, we get
\begin{equation}\label{eq:ft2}
\begin{gathered}
\bar f(t) f(t)= 1 + 2(\mathrm{Re}\,\alpha_2)t^2 + 2(\mathrm{Re}\,\alpha_4 + \alpha_2\bar \alpha_2)t^4 + O(t^6)  =  1 - 2I_1 t^2 + (2I_2 + I_1^2 +4A^2)t^4 + O(t^6).
\end{gathered}
\end{equation}
So, by definition of $E_2$ as the coefficient of $t^4$ in this expansion, we have $E_2 = 2I_2 + I_1^2 +4A^2$. Combined with~\eqref{eq:cc}, this gives the desired formula.
\end{proof}

\begin{remark}
It follows from Proposition \ref{prop:Lax} that invariants $I_{k}$ are well-defined on the quotient $\mathcal P_n^{T} \! / {E}$, i.e.  are invariant under simultaneous rotation of all $z_i$. This is also easy to see from the explicit form of those invariants.
\end{remark}

\begin{remark}\label{rem:byc}
Let us show that our invariants  $I_k$ coincide with invariants $c_{2k}$ constructed in \cite[Proposition 4.3]{tabachnikov2012discrete}. Consider a representation $  \H \to GL_2(\C)$ given by
$$
\i \mapsto \left(\begin{array}{cc}0 & 1 \\-1 & 0\end{array}\right), \quad \j \mapsto \left(\begin{array}{cc}-\i & 0 \\0 & \i\end{array}\right), \quad \k \mapsto \left(\begin{array}{cc}0 & \i \\ \i & 0\end{array}\right).
$$
The image of the polynomial $f(t)$ given by \eqref{eq:fpoly} under this representation is the matrix polynomial
$$
F(t) = \left(\begin{array}{cc}1 - a_1 t \cos(\psi_1) \i & a_1 t \sin(\psi_1)\i \\ a_1 t \sin(\psi_1)\i & 1 + a_1 t \cos(\psi_1) \i\end{array}\right) \cdots \left(\begin{array}{cc}1 - a_n t \cos(\psi_n) \i & a_n t \sin(\psi_n)\i \\ a_n t \sin(\psi_n)\i & 1 + a_n t \cos(\psi_n) \i\end{array}\right),
$$
where $a_i := |z_i|$ and $\psi_i := \mathrm{arg}(z_i)$. So
\begin{equation}\label{eq:retr}
\mathrm{Re}\, f(t) = \frac{1}{2} \mathrm{Tr}\, F(t) = \frac{1}{2}\ell^{-n} \mathrm{Tr}\, M_1 \cdots M_n, 
\end{equation}
where $\ell := (-t \i)^{-1}$ and
$$
M_i :=  \left(\begin{array}{cc} \ell + a_i \cos(\psi_i) & -a_i \sin(\psi_i) \\ -a_i \sin(\psi_1) & \ell- a_i  \cos(\psi_i) \end{array}\right).
$$
The invariants $c_k$ of \cite{tabachnikov2012discrete} are defined by the relation
$$
\mathrm{Tr}\, M_1 \cdots M_n = 2(\ell^n + c_2 \ell^{n-2} + c_4 \ell^{n-4} + \dots),
$$
so by \eqref{eq:retr} we have
$$
\mathrm{Re}\, f(t) = 1 + c_2 \ell^{-2} + \dots = 1 - c_2 t^2 + c_4t^4 - \dots,
$$
and hence $I_k = c_{2k}$.
\end{remark}


\subsection{Recutting invariants of polygons closed up to isometry}\label{sec:inveuc}
In the previous section we constructed recutting invariants on the space  $\mathcal P_n^{T}$ of polygons closed up to translation. It turns out that those invariants do not extend to single-valued functions on the space  $\mathcal P_n^{E}$ of polygons closed up to isometry. To get well-defined invariants, we consider a double covering space
$$
\tilde{\mathcal P}_n^{{E}} := \{ ((v_i), \alpha) \in \mathcal P_n^{E} \times S^1 \mid (v_i) \mbox{ has monodromy } z \mapsto \alpha^2 z + \beta \mbox{ for some } \beta \in \C \}.
$$
The projection map $\tilde{\mathcal P}_n^{{E}}  \to {\mathcal P}_n^{{E}} $ takes a pair $((v_i), \alpha)$ to $(v_i)$, so that elements of $\tilde{\mathcal P}_n^{{E}} $ can be thought of as polygons closed up  to  isometry with a chosen square root of the rotational part of the monodromy. Recuttings act on $\tilde{\mathcal P}_n^{{E}} $ by acting on the first component.
%
%
Consider $((v_i), \alpha) \in \tilde{\mathcal P}_n^{{E}} $, and let $z_i = v_i - v_{i-1}$ be the edge vectors of the polygon $(v_i)$. Let
\begin{equation}\label{eq:fpoly2}
f(t):= (1 + z_1 {\j}t)(1 + z_{i+1} {\j}t)\cdots(1 + z_{n} {\j}t)\alpha.
\end{equation}
\begin{proposition}[cf. Proposition \ref{prop:Lax}]\label{prop:Lax2}
The similarity class of the polynomial $f(t)$ in the skew-field $\tilde \H[[t]]$ is invariant under both the action of the group $E$ of isometries, and the recutting group action.
\end{proposition}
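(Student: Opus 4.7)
The plan is to mimic the proof of Proposition \ref{prop:Lax}, with two small modifications needed to accommodate (i) the extra factor of $\alpha$ at the end of $f(t)$ and (ii) the quasi-periodicity $z_{i+n}=\alpha^2 z_i$ in place of strict periodicity. Both modifications can be absorbed into a single ``twist'' identity
\[
(1+z_{i+n}\j t)\,\alpha = \alpha\,(1+z_i\j t),
\]
which I would establish first by unfolding the left side and using the quaternionic commutation rule $\j\beta=\bar\beta\,\j$ for $\beta\in\C$ together with $|\alpha|=1$ and $z_{i+n}=\alpha^2z_i$.

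With the twist identity in hand, I would introduce the cyclically shifted polynomials
\[
f_i(t) := (1+z_i\j t)(1+z_{i+1}\j t)\cdots(1+z_{i+n-1}\j t)\,\alpha,
\]
so that $f_1(t)=f(t)$. Peeling off the last factor of $f_{i+1}$ and rewriting it via the twist identity will yield
\[
f_{i+1}(t)=(1+z_i\j t)^{-1}\,f_i(t)\,(1+z_i\j t)
\]
in $\tilde\H[[t]]$ (where $1+z_i\j t$ is invertible as a power series). So the polynomials $f_i(t)$ are all mutually similar.

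Invariance under recutting will then follow exactly as in Proposition \ref{prop:Lax}: by Proposition \ref{prop:rqp} each $\rho_j$ preserves the product $(1+z_j\j t)(1+z_{j+1}\j t)$, hence literally preserves $f_j(t)$, whose first two factors realize this product; since $f_j$ is similar to $f=f_1$ by the previous step, the similarity class of $f$ is preserved by every generator of $\tilde S_n$. For the $E$-action, translations fix every $z_i$ and thus $f$ itself, while a rotation $\gamma\in S^1$ sends $z_i\mapsto\gamma z_i$ and fixes $\alpha$ (the rotational monodromy $\alpha^2$ being manifestly invariant). Choosing any square root $\gamma^{1/2}\in\C$, the identity $\gamma^{1/2}\j\gamma^{-1/2}=\gamma\,\j$ together with $[\gamma^{1/2},\alpha]=0$ will give $\gamma^{1/2}\,f(t)\,\gamma^{-1/2}=f'(t)$, so the similarity class of $f$ is again preserved.

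The only real obstacle I anticipate is the quaternionic bookkeeping needed to verify the twist identity (together with the analogous conjugation identity used for the rotation action); these are routine once the commutation rule $\j\beta=\bar\beta\,\j$ is made explicit, and after that the argument is a direct transcription of the proof of Proposition \ref{prop:Lax}.
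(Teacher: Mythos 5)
Your proof is correct and follows essentially the same approach as the paper's: show the cyclically shifted polynomials $f_i(t)$ are all similar to $f_1 = f$ via conjugation by the linear factors, then observe that the recutting $\rho_j$ literally fixes $f_j$ by Proposition~\ref{prop:rqp}. The only cosmetic difference is that the paper isolates your twist identity as a gauge-equivalence lemma (Lemma~\ref{lemma:gauge}, which it reuses later for cyclic invariance of the Poisson bracket), whereas you apply the identity $(1+z_{i+n}\j t)\alpha = \alpha(1+z_i\j t)$ directly.
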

We begin with a lemma, which will also be useful later on. Define the \textit{gauge action} of $(\C^*)^n$ on $(\tilde \H[t])^n$ by
\begin{equation}\label{eq:ga}
(g_i \in \tilde \H[t])_{i=1}^n  \mapsto (\lambda_i g_i \lambda_{i+1}^{-1})_{i=1}^n 
\end{equation}
where $\lambda_i \in \C^*$, and the indices are understood cyclically, i.e. the index $n+1$ is equivalent to the index $1$. Clearly, if two $n$-tuples $g_i(t) \in \tilde \H[t]$ and $\tilde g_i(t) \in \tilde \H[t]$ are gauge-equivalent, then the products $g_1(t)  \cdots  g_n(t)$ and $\tilde g_1(t)  \cdots  \tilde g_n(t)$ are similar. 

\begin{lemma}\label{lemma:gauge}
Let $((v_i), \alpha) \in \tilde{\mathcal P}_n^{{E}} $, and let $z_i = v_i - v_{i-1}$ be the edge vectors of the polygon $(v_i)$. 
Then the $n$-tuples 
$$
g_1 := 1 + z_{i+1} {\j}t, \quad \dots \quad g_{n-2} := 1 + z_{i+n-2} {\j}t, \quad g_{n-1} := (1 + z_{i+n-1} {\j}t)\alpha,  \quad g_n := 1 + z_{i} {\j}t$$ and $$\tilde g_1 := 1 + z_{i+1} {\j}t, \quad \dots\quad \tilde g_{n-1}:= 1 + z_{i+n-1} {\j}t, \quad \tilde g_n:= (1 + z_{i+n} {\j}t)\alpha
$$
are gauge-equivalent.
\end{lemma}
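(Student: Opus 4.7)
My plan is to exhibit explicit scalars $\lambda_1,\dots,\lambda_n \in \C^*$ witnessing the gauge equivalence and then verify the $n$ identities $\tilde g_k = \lambda_k g_k \lambda_{k+1}^{-1}$ (indices cyclic modulo $n$) one by one. The only nontrivial inputs should be (i) the monodromy condition on $(v_i)$, which on edge vectors reads $z_{i+n} = \alpha^2 z_i$, (ii) the anti-commutation $\j c = \bar c\,\j$ for $c \in \C$, and (iii) the identity $\bar\alpha = \alpha^{-1}$, which holds because $\alpha \in S^1$.

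The natural guess is $\lambda_1 = \lambda_2 = \cdots = \lambda_{n-1} = 1$ and $\lambda_n = \alpha$. For $k = 1,\dots,n-2$ one has $g_k = \tilde g_k = 1 + z_{i+k}\j t$ and $\lambda_k = \lambda_{k+1} = 1$, so the required identity is a tautology. For $k = n-1$ the identity will reduce to
\[
\lambda_{n-1} g_{n-1} \lambda_n^{-1} = (1 + z_{i+n-1}\j t)\alpha\cdot \alpha^{-1} = 1 + z_{i+n-1}\j t = \tilde g_{n-1},
\]
which is again immediate.

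All of the real content will sit in the cyclic identity for $k = n$, namely $\tilde g_n = \lambda_n g_n \lambda_1^{-1}$, i.e.
\[
(1 + z_{i+n}\j t)\alpha = \alpha(1 + z_i\j t).
\]
Pushing $\j$ past the complex scalar $\alpha$ via $\j\alpha = \bar\alpha\,\j$ turns the right-hand side into $\alpha + \alpha z_i\j t$, while the left-hand side expands as $\alpha + z_{i+n}\bar\alpha\,\j t$. Equality of the coefficients of $\j t$ is then equivalent to $z_{i+n} = \alpha z_i/\bar\alpha = \alpha^2 z_i$, which is precisely the monodromy hypothesis on $(v_i)$. I do not expect any genuine obstacle; the only thing that needs care is the bookkeeping of left versus right scalar multiplications combined with $\j c = \bar c\,\j$, so I would write that single computation out in full and leave the remaining verifications as they are. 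Incidentally, this choice of gauge makes the products of the two tuples literally equal rather than merely conjugate, which will be useful when applying the lemma to establish invariance of similarity classes.
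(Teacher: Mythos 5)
Your proposal is correct and takes essentially the same approach as the paper: the same choice of gauge $\lambda_1 = \dots = \lambda_{n-1} = 1$, $\lambda_n = \alpha$, with the $k=n$ identity verified by pushing $\j$ past the complex scalar and invoking $z_{i+n}=\alpha^2 z_i$ and $\bar\alpha=\alpha^{-1}$.
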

\begin{proof}
Take $\lambda_1 = \dots = \lambda_{n-1} = 1$, $\lambda_n = \alpha$.  Then one clearly has $\lambda_j g_j \lambda_{j+1}^{-1} = \tilde g_j$ for  $j = 1, \dots n-1$. 
Furthermore,
$$
\lambda_ng_n\lambda_1^{-1} = \alpha + z_{i}\alpha {\j}t =  \alpha + z_{i+n}\alpha^{-1} {\j}t = \alpha + z_{i+n}\bar \alpha {\j}t = (1 + z_{i+n} {\j}t)\alpha = \tilde g_n,
$$
where on the second step we used that $z_{i+n} = \alpha^2 z_i$, on the third step we used that $\alpha \in S^1$ and hence $\alpha^{-1} = \bar \alpha$, and on the second last step we used that $\bar \alpha \j = \j \alpha$. So we see that $\lambda_j g_j \lambda_{j+1}^{-1} = \tilde g_j$ for all $j$, as needed.
\end{proof}

\begin{proof}[Proof of Proposition {\ref{prop:Lax2}.}]
Let $
f_i(t) := (1 + z_i{\j}t)\cdots(1 + z_{i+n - 1} {\j}t)\alpha$. Then $f_i(t)$ is similar to the polynomial $\tilde f_i(t) := (1 + z_{i+1}{\j}t)\cdots(1 + z_{i+n - 1} {\j}t)\alpha (1 + z_i{\j}t)$, which, by Lemma \ref{lemma:gauge}, is similar to $f_{i+1}(t)$. So all $f_i$ are similar to each other and in particular to $f_1 = f$. The rest of the proof is the same as for Proposition \ref{prop:Lax}.
\end{proof}

\begin{proposition}
For a polygon closed up to isometry, one has
\begin{equation}\label{eq:eucinv0}
\begin{gathered}
\bar f(t) f(t)= 1 + \sum_k E_k t^{2k} ,\\
\mathrm{Re}\, f(t) = \sum_{k = 0}^{\lfloor n/2 \rfloor} (-1)^k I_{k}t^{2k},
\end{gathered}
\end{equation}
where
\begin{equation}\label{eq:eucinv}
\begin{gathered}
E_k :=\sum_{i_1 <  \dots < i_{k}} |z_{i_1}|^2 \dots |z_{i_k}|^2,  \\
I_{k} := \mathrm{Re} \left(\,\alpha\!\!\!\sum_{i_1 <  \dots < i_{2k}} z_{i_1}\bar z_{i_2} \dots z_{i_{2k - 1}} \bar z_{i_{2k}}\right).
\end{gathered}
\end{equation}
\end{proposition}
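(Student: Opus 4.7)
The plan is to reduce the claim for the polygon closed up to isometry to a direct expansion of the product defining $f(t)$, parallel to the proof of Proposition \ref{prop:inv} but keeping track of the extra trailing factor $\alpha \in S^1$. The two formulas in \eqref{eq:eucinv0} are essentially independent: the one for $\bar f f$ follows from multiplicativity of the companion polynomial, and the one for $\mathrm{Re}\, f$ follows from a term-by-term expansion using $\j z = \bar z \j$ for $z \in \C$.

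For the first formula, I would observe that for any unilateral quaternionic polynomials $g,h$, the identity $\overline{gh}\cdot gh = \bar h(\bar g g)h = (\bar g g)(\bar h h)$ holds because $\bar g g$ is a real polynomial and hence central. Applying this inductively to $f(t) = (1+z_1\j t)\cdots(1+z_n\j t)\alpha$ reduces the computation to the factor-by-factor companion polynomials. A one-line computation gives $(1 - z_i\j t)(1+z_i\j t) = 1 + |z_i|^2 t^2$, using $\j z_i\j = -\bar z_i$, while $\bar\alpha\alpha = 1$ since $\alpha\in S^1$. Multiplying these together produces $\prod_i(1+|z_i|^2t^2)$, whose expansion in elementary symmetric polynomials of the $|z_i|^2$ is exactly $1 + \sum_k E_k t^{2k}$.

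For the second formula, I would expand the product defining $f(t)$ as a sum over subsets $S=\{i_1<\cdots<i_k\}\subseteq\{1,\dots,n\}$ of terms $z_{i_1}\j z_{i_2}\j\cdots z_{i_k}\j\, t^k\,\alpha$. Repeated application of $\j z = \bar z \j$ collapses this monomial into $(-1)^{k/2}\,z_{i_1}\bar z_{i_2}\cdots z_{i_{k-1}}\bar z_{i_k}$ when $k$ is even and into $(-1)^{(k-1)/2}\,z_{i_1}\bar z_{i_2}\cdots z_{i_k}\,\j$ when $k$ is odd. After including the trailing $\alpha$ and using that $\alpha\in\C$ commutes with any complex number, the even-$k$ contribution to $f(t)$ lies in $\C\subset\H$, while the odd-$k$ contribution has the form $(\text{complex})\cdot\j\cdot\alpha = (\text{complex})\bar\alpha\,\j\in\mathrm{span}_\R\langle\j,\k\rangle$. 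Taking real parts kills all the odd-$k$ terms and leaves precisely the sum $\sum_{k=0}^{\lfloor n/2\rfloor}(-1)^k I_k t^{2k}$ with $I_k$ as defined in \eqref{eq:eucinv}.

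There is no real obstacle here: this is essentially a bookkeeping exercise, and the only slightly delicate step is organizing signs and the position of $\alpha$. The $k=0$ term in the even sum contributes $\mathrm{Re}(\alpha) = I_0$, which explains why the expansion of $\mathrm{Re}\,f(t)$ now starts at $k=0$ rather than at $k=1$ as in \eqref{eq:trinv0}; this is the only substantive difference from the translation-closed case of Proposition \ref{prop:inv}, and it accounts naturally for the presence of the monodromy square root $\alpha$ in $f$.
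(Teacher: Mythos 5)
Your proof is correct and follows essentially the same route as the paper, which simply cites the proof of Proposition~\ref{prop:inv} (multiplicativity of the companion polynomial for the first identity, direct expansion for the second). Your bookkeeping with $\j z = \bar z\j$, the sign $(-1)^{k/2}$ from $\j^k$, and the commutation of the trailing $\alpha\in S^1\subset\C$ past complex coefficients all check out, including the observation that $\j\alpha = \bar\alpha\j$ keeps odd-degree coefficients in $\mathrm{span}_\R\langle\j,\k\rangle$ so they vanish under $\mathrm{Re}$.
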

\begin{proof}
See the proof of Proposition \ref{prop:inv}.
\end{proof}

So recutting action on $n$-gons closed up to isometry has has $\lfloor 3n/2\rfloor + 1$ invariants, namely $n$ elementary symmetric polynomials $E_1, \dots, E_n$ of squared side lengths $|z_i|^2$ (whose invariance is obvious from the geometric definition of recutting), and $\lfloor n/2 \rfloor$ additional invariants $I_0, \dots, I_{\lfloor n/2 \rfloor}$. Note that $I_0 = \mathrm{Re}\, \alpha$, which is trivially invariant since by definition of the recutting action on $\tilde{\mathcal P}_n^{{E}}/E$ it does not change $\alpha$. 
\subsection{Poisson geometry of special quaternionic polynomials}\label{subsec:pg}

In this section we show that the algebra $\tilde \H[t]$ of special quaternionic polynomials admits a multiplicative Poisson structure with nice properties. This structure can be obtained by extending the algebra $\tilde \H[t]$ to the algebra of special Laurent series in $t$ and endowing the latter with an $r$-matrix of trigonometric type. Here we use a different approach based on representing special quaternionic polynomials  as difference operators and then using a known Poisson structure on such operators.
\par

\begin{proposition}\label{prop:sppb}
There exists a Poisson structure on the algebra $\tilde \H[t]$ of special quaternionic polynomials  with the following properties:
\begin{enumerate}
\item It is multiplicative, i.e. the multiplication map $\tilde \H[t] \times \tilde \H[t] \to \tilde \H[t]$ is Poisson.
\item Fixed degree polynomials form a Poisson subspace. 
\item The Poisson structure vanishes on constant (i.e. degree $0$) polynomials.
\item On linear polynomials $a + b \j t$, where $a,b \in \C$, the Poisson structure has the form
\begin{equation}\label{eq:pblsqp}
\{a, b\} = -\textstyle\frac{1}{2}ab, \quad \{a,\bar b\} = \frac{1}{2}a \bar b, \quad \{\bar a, \bar b\} = -\textstyle\frac{1}{2}\bar a\bar b, \quad \{\bar a, b\} = \frac{1}{2}\bar a  b, \quad \{a, \bar a\} = 0, \quad \{b, \bar b\} = 0.
\end{equation}
\item Central functions on $\tilde \H[t]$ Poisson commute (we say that a function $\chi\colon\tilde \H[t] \to \R$ is central if $\chi(f) = \chi(g)$ for any similar $f,g \in \tilde \H[t]$).
\item The function $\tilde \H[t] \to \R$ mapping $f(t)$ to $|f(0)|$ is a Casimir.
\end{enumerate}
\end{proposition}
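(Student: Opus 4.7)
My plan is to import the Poisson bracket from a more symmetric ambient structure, namely matrix polynomials. Via the standard embedding $\H \hookrightarrow M_2(\C)$ sending $\j \mapsto \bigl(\begin{smallmatrix} 0 & 1 \\ -1 & 0 \end{smallmatrix}\bigr)$, a special quaternionic polynomial $f \in \tilde\H[t]$ of degree $n$ becomes a $2\times 2$ matrix polynomial $L(t)$ of degree $n$ cut out by linear reality conditions, and multiplication in $\tilde\H[t]$ becomes matrix polynomial multiplication. I would then pull back the well-known multiplicative Poisson--Lie bracket on $M_2(\C)[t]$ coming from the trigonometric $r$-matrix (the Sklyanin bracket). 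General $r$-matrix theory guarantees that this bracket is multiplicative, preserves the filtration by degree, and has the property that spectral invariants Poisson commute --- exactly properties (1), (2), and (5). Property (3) follows because the $r$-matrix annihilates scalar matrices, and property (6) reflects the fact that $L(0)$ is a Casimir of the trigonometric (as opposed to rational) bracket.

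The step that I expect to require care is the reality descent: one must verify that the complex bracket on $M_2(\C)[t]$ actually restricts to a real bracket on the real subspace corresponding to $\tilde\H[t]$. This is a compatibility between the trigonometric $r$-matrix and the antiholomorphic involution on $M_2(\C)$ whose fixed-point algebra is $\H$. The argument parallels Lemma \ref{lemma:cr} above, and reduces to checking that the real form is preserved by the adjoint action of the $r$-matrix. Once the bracket is available on $\tilde\H[t]$, property (4) would follow by a direct calculation with degree-$1$ polynomials, which also pins down the overall normalization.

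I expect the reality check to be the principal obstacle, because not every standard $r$-matrix on $\mathfrak{gl}_2$ is automatically compatible with the quaternionic real form. If the trigonometric $r$-matrix does not descend cleanly, an alternative --- which the excerpt indicates the author will follow --- is to represent $\tilde\H[t] \simeq \C[x,\sigma]$ (where $\sigma$ is complex conjugation and $xc = \bar c\, x$) as an algebra of twisted-shift difference operators on $\C$-valued sequences, and to import a Poisson structure from the known theory of Poisson brackets on lattice Lax operators. Either route yields the same conclusion; a direct coefficient-by-coefficient construction, while possible in principle, becomes awkward because a short calculation using multiplicativity shows that for degree $\geq 2$ the bracket is no longer log-canonical in the coefficient coordinates.
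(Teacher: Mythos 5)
Your proposal correctly identifies the two available routes --- the trigonometric $r$-matrix/Sklyanin bracket pulled back through the embedding $\H \hookrightarrow M_2(\C)$, and the twisted difference-operator route --- and correctly anticipates that the paper takes the second. It also isolates the crux, the reality descent via an argument parallel to Lemma~\ref{lemma:cr}. But that crux is not carried out: you flag the reality check as the principal obstacle and then assert, rather than verify, that it reduces to compatibility of the $r$-matrix with the quaternionic real form. That is a genuine gap, since (as you yourself note) a chosen $r$-matrix need not be compatible with a chosen real form. The paper's proof succeeds by outsourcing exactly this step: it invokes \cite[Proposition~3.9]{izosimov2018pentagram}, which already provides a multiplicative Poisson structure on $2$-periodic real-coefficient difference operators together with the statement that $\D \mapsto \mathcal{T}\D\mathcal{T}^{-1}$ is a Poisson automorphism; Lemma~\ref{lemma:cr} then delivers the descent to the locus $\mathcal{T}\D\mathcal{T}^{-1} = \bar\D$, which by Proposition~\ref{prop:h0DO} is $\tilde\H[t]$. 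Properties~1, 2, 3, 5 are read off that cited proposition, property~4 from an explicit formula in the same reference, and property~6 from a further cited proposition together with the fact that the determinant is a Casimir of the standard Poisson--Lie structure on $GL_n$.

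A secondary imprecision: your claim that ``$L(0)$ is a Casimir of the trigonometric (as opposed to rational) bracket'' overshoots. The matrix $L(0)$ is not generically a Casimir entry by entry; what is true, and what the paper uses, is that $\det L(0)$ --- which under the embedding corresponds to $|f(0)|^2$ --- is a Casimir, because evaluation at a fixed point projects the Sklyanin bracket to the standard Poisson--Lie bracket on $GL_2$, whose central invariant is the determinant. So your roadmap is sound in outline and would likely succeed after substantial further work, but as written it names every step without executing the one that matters, and the paper's choice of the difference-operator route is precisely what lets it borrow that execution from the literature.
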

\begin{remark} One can prove that $|f(t)|$ is a Casimir for any real $t$. That is equivalent to saying that all coefficients of the companion polynomial are Casimirs. \end{remark}
To prove Proposition \ref{prop:sppb} we recall the definition of a difference operator. Let $\mathbb K$ be a field, and let $\mathbb K^\infty = (\xi_i \in \mathbb K)_{i \in \Z}$ be the vector space of bi-infinite sequences valued in $\mathbb K$. A degree $d$ \textit{difference operator} over $\mathbb K$ is a linear map $\D \colon \mathbb K^\infty \to \mathbb K^\infty$ of the form
$
\D = \sum_{i= 0}^d a_i \mathcal T^i,
$
where $\mathcal T \colon  \mathbb K^\infty \to \mathbb K^\infty$ is the left shift operator $(\mathcal T(\xi))_i := \xi_{i+1}$, while each $a_i$ is an element of $\mathbb K^\infty $ acting on $\mathbb K^\infty$ by term-wise multiplication. A difference operator $\D$ is \textit{$n$-periodic} if its coefficients $a_i$ are $n$-periodic sequences, i.e. $(a_i)_{j+n} = (a_i)_j$.

\begin{proposition}\label{prop:h0DO}
As a graded associative algebra over reals, $\tilde \H[t]$ is isomorphic to the algebra of $2$-periodic difference operators $\D$ with complex coefficients such that $\mathcal T \D  \mathcal T^{-1} = \bar \D$.
\end{proposition}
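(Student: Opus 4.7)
The plan is to present both $\R$-algebras as generated by two elements subject to matching relations, then exhibit the isomorphism explicitly and verify bijectivity grade-by-grade.

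First I would package $\tilde \H[t]$ as the $\R$-algebra generated by the imaginary unit $i \in \C$ and the element $u := \j t$, subject to $i^2 = -1$ and $ui + iu = 0$. The second relation comes from $\j c = \bar c \j$ in $\H$ (applied to $c = i$), together with the fact that $t$ is central in $\H[t]$. The grading is by degree in $u$ (equivalently, in $t$), and each homogeneous component $\C \cdot u^k$ is a $\C$-line: even pieces sit in $\C \subset \H$ and odd pieces sit in $\mathrm{span}_\R\langle \j, \k\rangle$. Both conditions of being ``special'' are encoded in these relations.

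Next I would unpack the DO subalgebra. Writing $\D = \sum_k a_k \mathcal T^k$ with $2$-periodic complex coefficients, the shift rule $\mathcal T a \mathcal T^{-1} = a'$ with $(a')_j = a_{j+1}$ turns the condition $\mathcal T \D \mathcal T^{-1} = \bar \D$ into $(a_k)_{j+1} = \overline{(a_k)_j}$, so each $a_k$ is parametrized by a single complex number $\alpha_k$ via $a_k = (\alpha_k, \bar\alpha_k, \alpha_k, \bar\alpha_k, \dots)$. Setting $I := (i, -i, i, -i, \dots)$, the same shift rule yields $\mathcal T I + I \mathcal T = 0$, and $I^2 = -1$ is immediate. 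The identity $a_k = \mathrm{Re}(\alpha_k) + \mathrm{Im}(\alpha_k)\,I$ shows that $I$ and $\mathcal T$ generate the subalgebra over $\R$, with each $\mathcal T$-degree piece being a copy of $\C$ as an $\R$-vector space.

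With both presentations in hand, I would define $\varphi \colon \tilde \H[t] \to \mathrm{DO}$ by $i \mapsto I$ and $u \mapsto \mathcal T$. Since the target generators satisfy the same defining relations as the source generators, $\varphi$ extends uniquely to a graded $\R$-algebra homomorphism. Bijectivity is then verified piece by piece: on the source, degree $k$ is $\C u^k$; on the target, degree $k$ is $\{(\alpha, \bar\alpha, \alpha, \dots)\mathcal T^k : \alpha \in \C\}$; and $\varphi$ restricts to an $\R$-linear isomorphism between them sending $cu^k \mapsto (c, \bar c, c, \dots)\mathcal T^k$.

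The only mildly delicate point is ensuring that the relation $\mathcal T I + I \mathcal T = 0$ genuinely captures the specialness condition on the DO side; this is exactly the content of the shift rule $\mathcal T a \mathcal T^{-1} = a'$ applied to the particular sequence $I$, and it is the structural mirror of the twisted commutation $\j c = \bar c \j$ used in the presentation of $\tilde \H[t]$. Beyond that, the argument is a formal verification of matching generators and relations, with no deep obstacle.
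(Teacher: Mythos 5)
Your proof is correct and follows essentially the same route as the paper: both identify the isomorphism by sending $\j t$ to the shift $\mathcal T$ and the complex scalars to the alternating sequences $(\dots, z, \bar z, z, \bar z, \dots)$, then observe that generators and relations match. The only cosmetic difference is that you reduce the generating set on the quaternionic side to just $\i$ and $\j t$ (rather than all of $\C$ and $\j t$) and spell out the two relations explicitly, which makes the ``one easily checks the relations agree'' step slightly more transparent.
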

\begin{proof}[Proof of Proposition \ref{prop:h0DO}]
The $\R$-algebra $\tilde \H[t]$ is generated by complex numbers $z \in \C$ and the polynomial $\j t$. The $\R$-algebra  of $2$-periodic difference operators $\D$  satisfying $\mathcal T \D \mathcal T^{-1} = \bar \D$ is generated by $2$-periodic bi-infinite sequences of the form $a_z := (\dots, z, \bar z, \dots)$ and the operator $\mathcal T$. In terms of these generators, the isomorphism between these two algebras is given by
$
z \mapsto a_z$,  $\j t \mapsto \mathcal T$. 
One easily checks that the relations between the generators on both sides are the same.
\end{proof}
\begin{proof}[Proof of Proposition \ref{prop:sppb}]
Consider the algebra of $2$-periodic difference operators with real coefficients. By \cite[Proposition 3.9]{izosimov2018pentagram}, this algebra carries a multiplicative Poisson structure such that the map $\D \mapsto \mathcal  T\D \mathcal T^{-1}$ is a Poisson automorphism. From the latter and Lemma \ref{lemma:cr} it follows that the extension of this structure to operators with complex coefficients restricts to operators such that $\mathcal T \D  \mathcal T^{-1} = \bar \D$. This gives a multiplicative bracket on $\tilde \H[t]$. The desired properties of that bracket follow from properties of the bracket on difference operators. Namely, properties 1, 2, 3, 5 follow from the corresponding parts of \cite[Proposition 3.9]{izosimov2018pentagram}, property~4 follows from \cite[equation (14)]{izosimov2018pentagram}, while property 6 follows from  \cite[Proposition 3.19]{izosimov2018pentagram} combined with the fact that the determinant is a Casimir of the standard Poisson structure on $GL_n$.
\end{proof}
\begin{remark}
We changed the sign of the bracket from \cite{izosimov2018pentagram} for conformance with our cluster bracket \eqref{eq:cpb}.
\end{remark}

\subsection{A recutting-invariant Poisson structure}\label{subsec:ps}
In this section we describe a Poisson bracket on the double cover $\tilde{\mathcal P}_n^{{E}}/E$ of the space ${\mathcal P}_n^{{E}}/E$ of polygons closed up to isometry and considered modulo isometries.  This bracket is preserved by the recutting and has a property that the invariants defined in the Section \ref{sec:inveuc} Poisson commute. Furthermore, this bracket descends to the space ${\mathcal P}_n^{{E}}/E$ and is taken by the natural map ${\mathcal P}_n^{{E}}/E \to { \mathcal P_n^{S}}/{ S}$ to the cluster bracket~\eqref{eq:cpb}.
\par

Let $\tilde \H^*[t] = \{ g(t) \in \tilde \H[t] \mid g(0) \neq 0 \}$ be the space  of special quaternionic polynomials with non-vanishing free term. Also, let   $\tilde \H_k^*[t] $ be its subset consisting of polynomials of degree strictly equal to $k$. Then, by part~2 of Proposition~\ref{prop:sppb},  the space $\tilde \H_k^*[t] $ is a Poisson submanifold of $\tilde \H[t] $. Let $ d = (d_1, \dots, d_n) \in \Z_+^n$. Then $Z_{ d} := \tilde \H_{d_1}^*[t] \times \dots \times \tilde \H_{d_n}^*[t]$ carries a product Poisson structure. Furthermore, by part 6 of Proposition~\ref{prop:sppb}, that Poisson structure restricts to $X_{ d} := \{(g_1(t), \dots, g_n(t)) \in  X_{ d} \mid |g_1(0) \cdots g_n(0)| = 1\}$. Also note that since the Poisson structure on special quaternionic polynomials vanishes on $\tilde \H_0^*[t] = \C^* $, the gauge action \eqref{eq:ga} of $(\C^*)^n$ on $X_d$ is Poisson, so the Poisson structure descends to $X_d / (\C^*)^n$.

We now show that for $d = (1, \dots, 1) \in \Z_+^n$,  the space $X_{d} / (\C^*)^n$ can be identified with $\tilde{\mathcal P}_n^{{E}}/E$. The following is straightforward:
\begin{proposition}\label{prop:gaugeQ0}
For any $d \in \Z_+^n$, every orbit of the gauge action of $(\C^*)^n$ on $X_d$ has a representative $(g_i(t))_{i=1}^n$ with $g_1(0) = \dots = g_{n-1}(0) = 1$ and $|g_n(0)| = 1$. Such a representative is unique up to a transformation of the form $(g_i(t)) \mapsto (\lambda g_i(t) \lambda^{-1}) $ where $\lambda \in S^1$.
\end{proposition}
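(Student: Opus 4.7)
The plan is a direct construction for existence together with a short normalization argument for uniqueness; the only observation beyond bookkeeping is that the subgroup $\R_{>0}\subset\C^*$ acts trivially on quaternions by conjugation, which accounts for the reduction from $\C^*$ down to $S^1$ in the uniqueness statement.

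For existence, given $(g_i(t))\in X_d$, each constant term $g_i(0)$ lies in $\C^*$ (complex because $g_i$ is special, and nonzero because $g_i\in\tilde\H^*_{d_i}[t]$). I would set $\lambda_1:=1$ and recursively $\lambda_{i+1}:=\lambda_i\,g_i(0)\in\C^*$ for $i=1,\dots,n-1$. The gauge-transformed tuple $\tilde g_i:=\lambda_i g_i\lambda_{i+1}^{-1}$ then satisfies $\tilde g_i(0)=1$ for $i<n$ by construction, while
$$
\tilde g_n(0)\;=\;\lambda_n g_n(0)\lambda_1^{-1}\;=\;g_1(0)\cdots g_n(0),
$$
which has absolute value $1$ by the defining condition of $X_d$.

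For uniqueness, suppose $(g_i)$ and $(g_i')$ are two normalized representatives of the same gauge orbit, so $g_i'=\lambda_i g_i\lambda_{i+1}^{-1}$ for some $(\lambda_i)\in(\C^*)^n$. Evaluating at $t=0$ for $i=1,\dots,n-1$ forces $1=\lambda_i\cdot 1\cdot\lambda_{i+1}^{-1}$, so $\lambda_i=\lambda_{i+1}$, and thus all the $\lambda_i$ coincide with a single $\lambda\in\C^*$. Writing $\lambda=|\lambda|\,\mu$ with $\mu:=\lambda/|\lambda|\in S^1$, the positive real factor $|\lambda|$ is central in $\H$ and cancels in the conjugation, giving $\lambda g_i\lambda^{-1}=\mu g_i\mu^{-1}$ for every $i$. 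This realizes the relation between $(g_i)$ and $(g_i')$ as the announced $S^1$-transformation.

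I do not anticipate any serious obstacle: the whole argument amounts to solving $n-1$ telescoping equations plus one centrality observation. The only point that would slow someone down is remembering that the apparent $\C^*$ ambiguity collapses onto $S^1$ because positive real scalars are central in $\H$.
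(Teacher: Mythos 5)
Your proof is correct and is exactly the natural argument one would supply for what the paper marks as ``straightforward'' and leaves unproved: for existence, the telescoping choice $\lambda_1 = 1$, $\lambda_{i+1} = \lambda_i g_i(0)$ works because each $g_i(0)$ is a nonzero complex number and commutes with the $\lambda_j$'s, and the residual product $|g_1(0)\cdots g_n(0)| = 1$ is precisely the defining condition of $X_d$; for uniqueness, evaluating the gauge relation at $t=0$ on the first $n-1$ coordinates forces all $\lambda_i$ to coincide, and factoring $\lambda = |\lambda|\mu$ with $|\lambda|$ central in $\H$ collapses the ambiguity to conjugation by $\mu\in S^1$. Nothing is missing.
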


\begin{corollary}\label{prop:gaugeQ}
Every orbit of the gauge action of $(\C^*)^{n}$ on $X_{1, \dots, 1}$ has a representative of the form \begin{equation}\label{eq:gaugeSection}(1 + z_1 \j t, \dots, (1 + z_n \j t)\alpha),\end{equation} where $z_i \in \C^*, \alpha \in S^1$. Such a representative is unique up to simultaneous rotation of all $z_i$.
\end{corollary}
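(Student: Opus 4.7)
The plan is to reduce the corollary to Proposition \ref{prop:gaugeQ0} by rewriting the normalized representative it provides in the form claimed, then carry out a short commutation computation to pin down the residual ambiguity.

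First I would apply Proposition \ref{prop:gaugeQ0} with $d = (1, \dots, 1)$ to obtain a representative $(g_1, \dots, g_n)$ of the given gauge orbit with $g_1(0) = \cdots = g_{n-1}(0) = 1$ and $|g_n(0)| = 1$. Since each $g_i$ lies in $\tilde\H_1^*[t]$, it is a linear special quaternionic polynomial of the form (complex constant) $+$ (complex number)$\cdot \j t$, with nonzero leading coefficient. For $i < n$, the normalization gives $g_i(t) = 1 + z_i \j t$ with $z_i \in \C^*$ directly. For $i = n$, write $g_n(t) = \alpha + c \j t$, where $\alpha := g_n(0) \in S^1$ and $c \in \C^*$. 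Using the commutation relation $\j \alpha = \bar\alpha \j$ valid for $\alpha \in \C$, one computes
\begin{equation*}
(1 + z_n \j t)\alpha \;=\; \alpha + z_n \j t \alpha \;=\; \alpha + z_n \bar\alpha \j t,
\end{equation*}
so setting $z_n := c\alpha \in \C^*$ realizes $g_n$ in the desired form $(1 + z_n \j t)\alpha$.

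For the uniqueness statement, Proposition \ref{prop:gaugeQ0} asserts that the normalized representative is unique up to the action of $S^1$ by simultaneous conjugation $(g_i) \mapsto (\lambda g_i \lambda^{-1})$, $\lambda \in S^1$. I would then compute how this conjugation acts on the parameters $z_i$. For $i < n$, using $\lambda \j = \j \bar\lambda$ and $\bar\lambda = \lambda^{-1}$ for $\lambda \in S^1$, one gets
\begin{equation*}
\lambda(1 + z_i \j t)\lambda^{-1} \;=\; 1 + \lambda z_i \bar\lambda^{-1} \j t \;=\; 1 + \lambda^2 z_i \j t,
\end{equation*}
and similarly $\lambda(1 + z_n \j t)\alpha\lambda^{-1} = (1 + \lambda^2 z_n \j t)\alpha$, the factor $\alpha$ being fixed by conjugation by the complex scalar $\lambda$. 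Hence the residual ambiguity is exactly $z_i \mapsto \mu z_i$ with $\mu = \lambda^2$, and since squaring is surjective on $S^1$ this is precisely simultaneous rotation of all $z_i$ by an arbitrary element of $S^1$.

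There is no real obstacle here; the statement is essentially a bookkeeping reformulation of Proposition \ref{prop:gaugeQ0}, with the only mild subtlety being the careful handling of the commutation $\j \alpha = \bar \alpha \j$ to absorb the monodromy factor $\alpha$ into the last slot while keeping the first $n-1$ polynomials in the clean form $1 + z_i \j t$.
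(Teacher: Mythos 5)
Your proposal is correct and takes exactly the route the paper intends: the paper states this as an immediate corollary of Proposition \ref{prop:gaugeQ0} without written proof, and your computation, absorbing $\alpha$ into the last slot via $\j\alpha = \bar\alpha\j$ and tracking how conjugation by $\lambda \in S^1$ rescales each $z_i$ by $\lambda^2$, is precisely the bookkeeping that makes it immediate.
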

It follows that the quotient $X_{1, \dots, 1} / (\C^*)^{n}$ can be identified with the space of $\tilde{\mathcal P}_n^{{E}}/E$. Namely, given a gauge equivalence class in $X_{1, \dots, 1} / (\C^*)^{n}$, one finds its representative of the form \eqref{eq:gaugeSection} and then maps it to a pair $((v_i), \alpha)$, where $v_i$ is a polygon whose sequence of edge vectors $z_i$ is obtained from numbers entering~\eqref{eq:gaugeSection} by imposing quasi-periodicity condition $z_{i+n} = \alpha^2 z_i$. Since the canonical form \eqref{eq:gaugeSection} is defined up to simultaneous rotation of all $z_i$, this gives a well-defined map $X_{1, \dots, 1} / (\C^*)^{n} \to \tilde{\mathcal P}_n^{{E}}/E$. Moreover, it follows from Proposition~\ref{prop:zphicoords} that the so-defined map is a bijection. In particular, we get a Poisson bracket on the space $ \tilde{\mathcal P}_n^{{E}}/E$. The following proposition summarizes its properties.
\begin{proposition}\label{prop:pbpoly}
\begin{enumerate}\item 
The Poisson bracket on $ \tilde{\mathcal P}_n^{{E}}/E$ is invariant under recutting.
\item The invariants $E_1, \dots, E_n, I_{0}, \dots, I_{\lfloor n/2 \rfloor}$ Poisson commute.
\item Moreover, $E_1, \dots, E_n, I_{0}$ are Casimirs. 
\item The bracket descends to the space  $ {\mathcal P}_n^{{E}}/E$.
\item In coordinates $|z_i|, \phi_i$, the bracket on $ {\mathcal P}_n^{{E}}/E$ has the following form: $|z_i|$ are Casimirs, while the bracket of $\phi_i$'s is given by \eqref{eq:phibr}. The function $\sum \phi_i$ is also a Casimir. For even $n$ there is, in addition, a Casimir $\phi_1 + \phi_3 + \dots + \phi_{n-1}$. Joint level sets of $|z_i|$ and these Casimirs are symplectic leaves.
\item The map ${\mathcal P}_n^{{E}}/E \to { \mathcal P_n^{S}}/{ S}$ takes the bracket on ${\mathcal P}_n^{{E}}/E$ to the cluster bracket~\eqref{eq:cpb}.
\end{enumerate}
\end{proposition}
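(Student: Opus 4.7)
The strategy is to work on the unreduced Poisson manifold $X_{1,\dots,1} \subset \tilde\H_1^*[t]^n$ with its product Poisson structure (each factor carrying the bracket \eqref{eq:pblsqp}), and to descend to $\tilde{\mathcal P}_n^E/E = X_{1,\dots,1}/(\C^*)^n$ via Corollary \ref{prop:gaugeQ}. All six claims are then verified by combining direct computation with the general properties of the multiplicative Poisson structure listed in Proposition \ref{prop:sppb}.

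Parts (2) and (3) I would address first, exploiting that $E_k, I_k$ are all similarity invariants of $f(t) = g_1(t)\cdots g_n(t)$. The $E_k$ are coefficients of $\bar f f$, hence Casimirs by the remark following Proposition \ref{prop:sppb} (one checks $\bar f f$ is preserved by $f \mapsto hfh^{-1}$ using that $\bar h h$ is a central real polynomial). The $I_k$ for $k \geq 1$ are, up to the factor of $\alpha$, coefficients of $\mathrm{Re}\, f$, and are central because $\mathrm{Re}(hqh^{-1}) = \mathrm{Re}\,q$ for any quaternion $q$. Property (5) of Proposition \ref{prop:sppb} then gives pairwise Poisson commutativity. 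For $I_0 = \mathrm{Re}\,\alpha$ with $\alpha = \prod_i c_i$ a Casimir, a short Leibniz computation using \eqref{eq:pblsqp} and Property (3) shows $\{\alpha, c_k\} = \{\alpha, \bar c_k\} = 0$ and $\{\alpha, d_k\bar d_k\} = 0$ for every $k$, so $\alpha$ commutes with every gauge-invariant generator $|z_k|^2 = |d_k|^2/|c_k|^2$; the same computation shows each $|z_i|$ is a Casimir.

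Part (5) is the computational heart. The crucial observation is that although the edge vectors $z_i$ are not individually gauge-invariant on $X_d$, the ratios
\[
y_i \;=\; \frac{z_{i+1}}{z_i} \;=\; \frac{c_i\, d_{i+1}}{d_i\, \bar c_{i+1}}
\]
lift cyclically (mod $n$) to gauge-invariant functions on $X_d$, as one verifies directly from $c_i \mapsto \lambda_i\lambda_{i+1}^{-1} c_i$, $d_i \mapsto \lambda_i\bar\lambda_{i+1}^{-1} d_i$ together with $|\alpha| = 1$ at the wrap-around. Consequently $\phi_i = \arg c_i + \arg c_{i+1} + \arg d_{i+1} - \arg d_i$. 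Within a single factor, \eqref{eq:pblsqp} gives $\{\arg c, \arg d\} = \tfrac{1}{2}$ while $\{\arg c, \arg c\} = \{\arg d, \arg d\} = 0$; across factors the product structure makes everything commute. Bookkeeping the Kronecker deltas in the four-term expression for each $\phi_i$ then yields exactly \eqref{eq:phibr}, and the Casimirs $\sum \phi_i$ (and, for even $n$, $\phi_1 + \phi_3 + \cdots + \phi_{n-1}$) follow from the combinatorial form of that bracket. Part (4) is similarly direct: the non-trivial deck transformation $\alpha \mapsto -\alpha$ of the double cover lifts to $(g_1,\dots, g_n) \mapsto (g_1,\dots, -g_n)$ on $X_{1,\dots,1}$, and negation preserves \eqref{eq:pblsqp} since each bracket there is homogeneous of bidegree two. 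Part (6) is then immediate: the map $\mathcal P_n^E/E \to \mathcal P_n^S/S$ preserves the $\phi_i$ and sends each $|z_i|$ to a function of the Casimirs $|y_j|$, so the bracket from Part (5) pushes forward to the cluster bracket \eqref{eq:cpb}.

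I would prove Part (1) last and expect it to be the main obstacle. With Parts (5) and (6) in hand, a clean route is available: recutting $\rho_j$ swaps the Casimirs $|z_j| \leftrightarrow |z_{j+1}|$ (trivially Poisson) and acts on the $\phi$'s via the cluster $Y$-mutation of Proposition~\ref{prop:cluster}; since $Y$-mutations preserve the log-canonical bracket \eqref{eq:cpb}, which by Part (6) matches the $\phi$-bracket of Part (5), recutting is Poisson on $\mathcal P_n^E/E$, and hence also on $\tilde{\mathcal P}_n^E/E$ because it fixes $\alpha$. A more direct alternative would be to show that the refactorization map $(g_j, g_{j+1}) \mapsto (g_j', g_{j+1}')$ with $g_jg_{j+1} = g_j'g_{j+1}'$ is a Poisson map on $\tilde\H_1^*[t] \times \tilde\H_1^*[t]$, in the spirit of Semenov-Tian-Shansky's theorem on factorization for multiplicative Poisson structures; I would favor the cluster route since it avoids setting up the full Poisson-Lie machinery on $\tilde\H[t]$ from scratch.
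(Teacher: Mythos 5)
Your Parts (2), (4), (5) and (6) follow essentially the same line as the paper: commutativity from centrality of the $E_k,I_k$ together with the multiplicativity and property (5) of Proposition \ref{prop:sppb}; descent via the Poisson involution $(g_1,\dots,g_n)\mapsto(g_1,\dots,-g_n)$; and a direct bracket computation on the gauge-invariant lifts $\tilde y_i = c_i d_{i+1}/(d_i\bar c_{i+1})$ and $\zeta_i = |d_i|/|c_i|$ using \eqref{eq:pblsqp}. Your Part (3) is a small variant: you check $\alpha = \prod_i c_i$ against generators by a Leibniz computation, whereas the paper notices that $\alpha^2$ pushes forward to $\exp(\i\sum\phi_i)$, already known to be a Casimir by Part (5). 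Your verification as written only tests $\alpha$ against the $|z_k|^2$; you should also check $\{\alpha,\tilde y_j\}=0$ (it holds -- the two contributions from factors $j$ and $j+1$ cancel), otherwise you have not shown $\alpha$ Poisson-commutes with all gauge-invariant functions.

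The genuine divergence is Part (1). The paper does \emph{not} route through Section \ref{sec:cluster}: it reduces to $\rho_1$ via Lemma \ref{lemma:cyclic}, then considers the Poisson map $\mathcal M'\colon X_{(1,\ldots,1)}/(\C^*)^n\to X_{(2,1,\ldots,1)}/(\C^*)^{n-1}$ given by multiplying the first two factors, observes that by Remark \ref{rem:rqp} this is generically a $2$-to-$1$ cover whose nontrivial deck transformation is $\rho_1$, and concludes $\rho_1$ is Poisson as a deck transformation of a Poisson covering. This is exactly the lightweight ``refactorization is Poisson'' argument you mention as an alternative, and it uses only the multiplicativity already established, not any Poisson-Lie theorem. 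Your preferred cluster route is viable but as stated has a gap: recutting does not split into a map of Casimirs and a $Y$-mutation of $\phi$'s, since by \eqref{eq:yrels} the new $\phi$'s depend on the old $|z_i|$ as well. To make it rigorous you need the extra observation that the Poisson bivector on $\mathcal P_n^E/E$ lies entirely in $\bigwedge^2\mathrm{span}\{\partial_{\phi_i}\}$ (because the $|z_i|$ are Casimirs), that $d\pi$ is an isomorphism on this subbundle, and that $\rho_j$ preserves the foliation by $|z_i|$-level sets; only then does commutation with the Poisson map $\pi$ and Poissonness of $\rho_j$ downstairs imply $(\rho_j)_*\Pi_E = \Pi_E$. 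This works, but the paper's covering argument is shorter and self-contained within Section \ref{sec:ial}.
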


We begin with a lemma. Consider a relabeling of vertices map on ${\mathcal P}_n^{{E}}$ given by $(v_i) \mapsto (\tilde v_i)$ where $\tilde v_i := v_{i+1}$. This map induces an order $n$ map $\mathcal S$ on ${\mathcal P}_n^{{E}}/E$ and $ \tilde{\mathcal P}_n^{{E}}/E$. 

\begin{lemma}\label{lemma:cyclic}
The Poisson bracket on  $ \tilde{\mathcal P}_n^{{E}}/E$ is invariant under the map  $\mathcal S$.
\end{lemma}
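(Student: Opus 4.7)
The strategy is to realize the cyclic relabeling $\mathcal S$ on $\tilde{\mathcal P}_n^{E}/E \simeq X_{1,\dots,1}/(\C^*)^n$ as the descent of the cyclic shift
\begin{equation*}
C \colon (g_1, \dots, g_n) \longmapsto (g_2, g_3, \dots, g_n, g_1)
\end{equation*}
on $X_{1,\dots,1} = (\tilde\H_1^*[t])^n$, and then to observe that $C$ is automatically Poisson.

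To carry out the first step, I would take $((v_i), \alpha) \in \tilde{\mathcal P}_n^{E}$ with edge vectors $z_i$ and its canonical gauge representative from Corollary \ref{prop:gaugeQ}, namely $(1 + z_1 \j t, \dots, 1 + z_{n-1} \j t, (1 + z_n \j t)\alpha)$. Under $\mathcal S$ the polygon becomes $\tilde v_i = v_{i+1}$, whose edge vectors are $\tilde z_i = z_{i+1}$ with $\tilde z_n = z_{n+1} = \alpha^2 z_1$ by quasi-periodicity, so its canonical representative is $(1 + z_2 \j t, \dots, 1 + z_n \j t, (1 + \alpha^2 z_1 \j t)\alpha)$. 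Applying Lemma \ref{lemma:gauge} with $i = 1$ says exactly that this shifted canonical representative is gauge-equivalent to $C\bigl(1 + z_1 \j t, \dots, 1 + z_{n-1}\j t, (1 + z_n \j t)\alpha\bigr)$. Hence $\mathcal S$ is the descent of $C$ along the gauge projection.

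It remains to show that $C$ descends to a Poisson map. First, $C$ is Poisson on $X_{1,\dots,1}$ because the ambient structure is the product of the Poisson brackets on the factors $\tilde\H_1^*[t]$, and any permutation of factors is a Poisson automorphism of a product Poisson manifold. Second, $C$ sends gauge orbits to gauge orbits: a direct check gives $C \circ (\lambda_1, \dots, \lambda_n) = (\lambda_2, \dots, \lambda_n, \lambda_1) \circ C$. Together with the fact that the gauge quotient is Poisson (as noted in the paragraph preceding Proposition \ref{prop:gaugeQ0}), this implies that the descended map, which equals $\mathcal S$, is Poisson.

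The only point requiring attention is recognizing that the content of Lemma \ref{lemma:gauge} is precisely the gauge-equivalence needed to match the shifted canonical form with $C$ applied to the original one; after that, the proof reduces to the elementary observation that a factor permutation of a product Poisson manifold is Poisson. I therefore do not expect any substantive obstacle.
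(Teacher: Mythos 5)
Your proof is correct and takes essentially the same route as the paper: identify $\mathcal S$ with the descent of the cyclic shift $C$ on $X_{1,\dots,1}$ via the gauge quotient, invoke Lemma \ref{lemma:gauge} with $i=1$ for the gauge-equivalence of canonical representatives, and conclude from the fact that $C$ permutes factors of a product Poisson manifold. You spell out a couple of steps (that $C$ sends gauge orbits to gauge orbits, and why $C$ is Poisson) that the paper compresses into ``Clearly, this map is Poisson,'' but the argument is the same.
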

\begin{proof}
Consider the map $\hat{\mathcal S}$ on $X_{1, \dots, 1} / (\C^*)^{n}$ induced by the cyclic shift map $(g_1, \dots, g_n) \mapsto (g_2, \dots, g_n, g_1)$. Clearly, this map is Poisson. So it suffices to show that the identification between the spaces $X_{1, \dots, 1} / (\C^*)^{n}$ and $\tilde{\mathcal P}_n^{{E}}/E$ intertwines the maps $\hat{\mathcal S}$ and $\mathcal S$. This amounts to saying that the cyclic shift of \eqref{eq:gaugeSection} to the left is gauge equivalent to $(1 + z_2 \j t, \dots, (1 + z_{n+1} \j t)\alpha)$. But this is exactly the content  of Lemma \ref{lemma:gauge} for $i = 1$.
\end{proof}

\begin{proof}[Proof of Proposition \ref{prop:pbpoly}] We prove the parts of the proposition in a convenient order. First we prove part~1. 
By Lemma \ref{lemma:cyclic}, the bracket on  $ \tilde{\mathcal P}_n^{{E}}/E$ is invariant under relabeling of vertices, so it suffices demonstrate the invariance of that bracket under recutting $\rho_{1}$. 
Let $d:= (1, \dots, 1) \in \Z_+^n$ and $ d' := (2,1,\dots, 1) \in \Z^{n-1}$. Consider the map $\mathcal M \colon  X_{d} \to X_{d'}$ given by $(g_1, \dots, g_{n}) \mapsto (g_1g_2, g_3, \dots, g_n)$. It is a Poisson map due to multiplicativity of the Poisson structure on $\tilde \H[t]$. 
So it descends to a Poisson map $\mathcal M'$ between gauge quotients $X_d / (\C^*)^{n} \to X_{d'} / (\C^*)^{n-1}$. 
Using Corollary \ref{prop:gaugeQ}, identify elements of $X_d / (\C^*)^{n}$ with $n$-tuples of the form~\eqref{eq:gaugeSection} where $z_n$ is positive real. Using Proposition \ref{prop:gaugeQ0}, identify $X_{d'} / (\C^*)^{n-1}$ with the subspace of $X_{d'}$ which consists of  $(g_1, \dots, g_{n-1}) \in X_{d'}$ such that $g_i(0) = 1$ for all $i$, and $g_{n-1} = (1 + \beta \j t)\alpha$ with $\beta$ positive real and $\alpha \in S^1$. With these identifications, the map $\mathcal M' \colon X_d / (\C^*)^{n} \to X_{d'} / (\C^*)^{n-1}$ is given by
\begin{equation}\label{eq:cov}(1 + z_1 \j t, \dots, (1 + z_n \j t)\alpha) \mapsto ((1 + z_1 \j t)(1 + z_2 \j t),1 + z_3 \j t, \dots, (1 + z_n \j t)\alpha).\end{equation} 
According to Proposition \ref{prop:rqp} and Remark \ref{rem:rqp}, the mapping \eqref{eq:cov} is generically a $2$-to-$1$ covering whose only non-trivial deck transformation is given by the recutting $\rho_1$.  So $\rho_1$ is Poisson as a deck transformation of a Poisson covering. Thus, part 1 is proved. \par
  
  We now prove part 2. Recall that the invariants $E_1, \dots, E_n, I_{0}, \dots, I_{\lfloor n/2 \rfloor}$ on the space $X_{1, \dots, 1} / (\C^*)^{n} \simeq \tilde{\mathcal P}_n^{{E}}/E$ are defined as central functions on $\H_0[t]$ applied to the polynomial \eqref{eq:fpoly2}. Therefore, the pullbacks of those invariants to $X_{1, \dots, 1}$  
 coincide with pullbacks of central functions  $\tilde \H[t]$ by the product map   $(g_1, \dots, g_n) \in X_{1, \dots, 1}  \mapsto g_1 \cdots  g_n \in \tilde \H[t]$. Since the product map is Poisson, it follows that the invariants Poisson commute, as desired.\par
Next we prove part 4. Observe that the projection $\tilde{\mathcal P}_n^{{E}}/E \to {\mathcal P}_n^{{E}}/E$ can be viewed as quotient by the involution $ \alpha \mapsto -\alpha$. That involution lifts to a Poisson involution of  $X_{1, \dots, 1}$ given by $(g_1, \dots, g_n) \mapsto (g_1, \dots, -g_n)$ and is, therefore, Poisson as well. So the space  ${\mathcal P}_n^{{E}}/E$ inherits a Poisson structure from  $\tilde{\mathcal P}_n^{{E}}/E$ as a quotient by Poisson involution. 
 
 Now prove parts 5 and 6. Consider functions $\tilde y_1, \dots, \tilde y_n$ on $X_{1, \dots, 1}$ defined by
 $$
 \tilde y_i(a_1 + b_1 \j t, \dots, a_n + b_n \j t) := \frac{a_i b_{i+1}}{b_i \bar a_{i+1}}
 $$
 where the index $i$ is understood modulo $n$. It is easy too see that $\tilde y_i$ are gauge-invariant and thus descend to the quotient $X_{1, \dots, 1} / (\C^*)^{n} \simeq \tilde{\mathcal P}_n^{{E}}/E$. To compute their pushforward to the quotient one just needs to restrict them to $X_{1, \dots, 1}$ elements of the form \eqref{eq:gaugeSection}. By doing so one finds that the pushforward of $\tilde y_i$ is the ratio $y_i$ of consecutive edge vectors. That allows one to find Poisson brackets of $y_i$ by computing brackets of $\tilde y_i$ and then pushing the result forward to the quotient. The brackets of $\tilde y_i$ can be found using formulas \eqref{eq:pblsqp}. As a result, one finds that the brackets of $y_i$'s are given by \eqref{eq:cpb}, which precisely means that the map ${\mathcal P}_n^{{E}}/E \to { \mathcal P_n^{S}}/{ S}$ (equivalently, $\tilde{\mathcal P}_n^{{E}}/E \to { \mathcal P_n^{S}}/{ S}$) is Poisson. Thus part 6 is proved. In view of Remark \ref{rem:phibr}, this also proves that the brackets of $\phi_i$ are of the form \eqref{eq:phibr}. So to complete the proof of part 5 it suffices to show that $|z_i|$ are Casimirs (the description of additional Casimirs and symplectic leaves easily follows using the constant form of the bracket~\eqref{eq:phibr}). In the same way as we showed that the pullbacks of $y_i$ to the quotient are given by $\tilde y_i$, one shows that the pull-backs of $|z_i|$ are 
$$
 \zeta_i(a_1 + b_1 \j t, \dots, a_n + b_n \j t) := \frac{|b_i|}{|a_i|}.
 $$
 These are easily seen to be Casimirs using formulas \eqref{eq:pblsqp}. Thus, part 5 is proved. 
 
 Finally, we prove part 3. The functions $E_i$ are clearly Casimirs, since they are symmetric functions of the Casimirs $|z_i|^2$. Also notice that function $\alpha$ on  $\tilde{\mathcal P}_n^{{E}}/E$ is a  (complex-valued) Casimir, because the pushforward of $\alpha^2$ to ${\mathcal P}_n^{{E}}/E$ is the function $\exp(\i (\phi_1 + \dots + \phi_n))$, which is a Casimir. Therefore, $I_0 = \mathrm{Re}\,\alpha$ is also a Casimir. Finally, for even $n $, the function $I_{n/2} = \sqrt{E_n}\mathrm{Re}(\alpha \exp(\i (\phi_1 + \phi_3 + \dots + \phi_{n-1})) )$ is a Casimir since so are $E_n$, $\alpha$, and $\phi_1 + \phi_3 + \dots + \phi_{n-1}$. 
 Thus, the proposition is proved.
   \end{proof}
   \begin{remark}[cf. Remark \ref{rem:manyCas}]
The submanifold of ${\mathcal P}_n^{{E}}/E$ defined by $\sum \phi_i = 0 \mod 2\pi$ is the space ${\mathcal P}_n^{{T}}/E$ of isometry classes of polygons closed up to translation. So, since $\sum \phi_i$ is a Casimir, it follows that ${\mathcal P}_n^{{T}}/E$ is a Poisson submanifold of ${\mathcal P}_n^{{E}}/E$. As for the space ${\mathcal P}_n/E$ os isometry classes of closed polygons, it is not a Poisson submanifold of  ${\mathcal P}_n^{{E}}/E$ because  ${\mathcal P}_n/S$ is not a Poisson submanifold of ${\mathcal P}_n^{{S}}/S$.
\end{remark}
   
%

\subsection{Integrability}\label{subsec:int}
In this section we prove Theorem \ref{thm1} on integrability of recutting on the space $\mathcal P_n^E / E$ of isometry classes of polygons closed up to isometry. Given the result of Proposition \ref{prop:pbpoly}, it suffices to show that the integrals $E_1, \dots, E_n$, $I_0, \dots, I_{\lfloor n/2 \rfloor}$ are functionally independent. This is provided by the following:
\begin{lemma}\label{lemma:ind1}
The mapping $(\C^*)^n \times S^1 \to \R^{\lfloor 3n/2 \rfloor + 1}$ taking $(z_1, \dots, z_n, \alpha) \in (\C^*)^n \times S^1$ to the values of $E_1, \dots, E_n, I_0, \dots, I_{\lfloor n/2 \rfloor}$ defined by \eqref{eq:eucinv} is a submersion away from a set of measure zero. 
\end{lemma}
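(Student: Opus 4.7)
The plan is to introduce polar coordinates, exploit the block structure of the differential, and reduce the question to a concrete linear-algebra statement about certain polynomials $Q_j$ associated to the coordinates.

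Write $z_j = r_j e^{\i \psi_j}$ and $\alpha = e^{\i \theta}$, yielding real coordinates $(r_1, \dots, r_n, \theta, \psi_1, \dots, \psi_n)$ on $(\C^*)^n \times S^1$. Then $E_k = e_k(r_1^2, \dots, r_n^2)$ depends only on $r$, $I_0 = \cos\theta$ depends only on $\theta$, and each $I_k$ with $k \geq 1$ depends on all the coordinates. The Jacobian of $(E_1,\dots,E_n,I_0,I_1,\dots,I_{\lfloor n/2 \rfloor})$ in these coordinates therefore has a block-lower-triangular form; its top-left $n \times n$ block is the Jacobian of elementary symmetric polynomials in $r_i^2$ and is nonsingular whenever the $r_i$ are distinct and nonzero, and the $(I_0,\theta)$-entry equals $-\sin\theta$. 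After clearing the $r$- and $\theta$-columns in the $I_k$ rows using these blocks, the rank of the full Jacobian becomes $n + 1 + \mathrm{rank}(\partial I_k / \partial \psi_j)$, so it suffices to prove that the $\lfloor n/2 \rfloor \times n$ matrix $(\partial I_k / \partial \psi_j)_{k,j}$ achieves the maximum possible rank $\lfloor n/2 \rfloor$ at some point.

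I would then evaluate at $\psi = 0$. A direct differentiation, using that the factor in the $p$-th position of each monomial of $I_k$ carries the sign $(-1)^{p-1}$, yields
\begin{equation*}
\left.\frac{\partial I_k}{\partial \psi_j}\right|_{\psi = 0} = -\sin\theta \cdot r_j \cdot [s^{2k-1}] Q_j(s), \qquad Q_j(s) := \prod_{i < j}(1 - r_i s) \prod_{i > j}(1 + r_i s),
\end{equation*}
where $[s^m]$ denotes the coefficient of $s^m$. So the task reduces to showing that, for generic $r$, the odd-degree coefficients of $r_1 Q_1, \dots, r_n Q_n$ at exponents $1, 3, \dots, 2\lfloor n/2 \rfloor - 1 \leq n-1$ span $\R^{\lfloor n/2 \rfloor}$.

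The heart of the argument is the stronger claim that the polynomials $Q_1(s), \dots, Q_n(s) \in \R[s]_{\leq n-1}$ are linearly independent for generic $r$. To see this, consider the $n \times n$ evaluation matrix $M_{jl} := Q_j(1/r_l)$: for $l < j$ the factor $(1 - r_l s)$ appearing in $Q_j$ vanishes at $s = 1/r_l$, forcing $M_{jl} = 0$, while for $l \geq j$ none of the factors of $Q_j$ vanish at $s = 1/r_l$ for generic $r$, so the diagonal entries $M_{jj}$ are nonzero. Hence $M$ is upper-triangular with nonzero diagonal, invertible, and since $\deg Q_j \leq n - 1$ this forces $Q_1, \dots, Q_n$ to be linearly independent. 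Consequently the linear map $\R^n \to \R[s]_{\leq n-1}$ sending $\lambda \mapsto \sum_j \lambda_j r_j Q_j(s)$ is an isomorphism, and its composition with projection onto the coefficients of $s, s^3, \dots, s^{2\lfloor n/2 \rfloor - 1}$ is surjective onto $\R^{\lfloor n/2 \rfloor}$. This gives $\mathrm{rank}(\partial I_k / \partial \psi_j) = \lfloor n/2 \rfloor$ at $\psi = 0$ with generic $r$ and $\sin\theta \neq 0$, so the Jacobian attains rank $\lfloor 3n/2 \rfloor + 1$ on an open set; since the rank-drop locus is defined by the vanishing of polynomials in $r$, $\cos\theta$, $\sin\theta$, $\cos\psi_j$, $\sin\psi_j$, it is a proper algebraic subset, hence of measure zero. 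I expect the main obstacle to be crafting a clean proof of the linear independence of the $Q_j$; the evaluation-matrix observation seems to provide one.
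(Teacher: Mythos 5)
Your proof is correct, and it follows a genuinely different route from the paper's. The paper factors the map as $\Psi = \Psi_2 \circ \Psi_1$, where $\Psi_1$ sends $(z_1,\dots,z_n,\alpha)$ to the special quaternionic polynomial $f(t)$ of \eqref{eq:fpoly2} and $\Psi_2$ sends $f$ to the pair $(\bar f f, \mathrm{Re}\, f)$; it then shows each factor is generically a submersion by exhibiting an open subset of its image. For $\Psi_1$ this comes from the factorization criterion (Proposition~\ref{prop:facspec}), and for $\Psi_2$ from an explicit sum-of-squares construction (writing a strictly positive even polynomial as $u^2 + v^2$, with a separate treatment of the odd and even $n$ cases). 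The paper's argument avoids any explicit Jacobian computation but needs the machinery of Section~\ref{subsec:qp}. Your argument is more computational and self-contained: the polar-coordinate block decomposition reduces the problem to showing that the $\lfloor n/2\rfloor\times n$ matrix $\partial I_k/\partial\psi_j$ has full rank at one point, the identity
\begin{equation}
\left.\frac{\partial I_k}{\partial\psi_j}\right|_{\psi=0} = -\sin\theta\cdot r_j\cdot [s^{2k-1}]Q_j(s),\qquad Q_j(s)=\prod_{i<j}(1-r_i s)\prod_{i>j}(1+r_i s),
\end{equation}
is correct (the sign $(-1)^{p-1}$ in the derivative of the cosine matches the sign pattern of $Q_j$), and your evaluation-matrix proof that $Q_1,\dots,Q_n$ are linearly independent for distinct positive $r_i$ is clean and valid. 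The final step (the composed map $\R^n\to\R[s]_{\le n-1}\to\R^{\lfloor n/2\rfloor}$ is surjective because $2\lfloor n/2\rfloor-1\le n-1$) closes the argument, and the measure-zero conclusion via algebraicity of the critical locus is standard. In short: both proofs are sound; the paper buys brevity at the cost of invoking its quaternionic toolkit, while yours buys transparency at the cost of a more hands-on Jacobian computation, and the linear-independence lemma about the $Q_j$ is a nice stand-alone observation.
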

\begin{proof}
The cases $n = 1, 2$ are straightforward so from this point on assume $n \geq 3$. 
Let $\mathcal A:= (\C^*)^n \times S^1$ and let $\mathcal C$ be the set of pairs of real polynomials $g ,h$ with the following properties: both are even, the degree of $g$ is exactly $2n$, the degree of $h$ is at most $2[n/2]$, and $g(0) = 1$.
Then the desired statement can be reformulated as follows. Consider the map $\Psi \colon \mathcal A \to \mathcal C $ that takes  $(z_1, \dots, z_n, \alpha) \in (\C^*)^n \times S^1$ to the polynomials \eqref{eq:eucinv0}. Then $\Psi$  is a submersion away from a set of measure zero. We prove this result by representing $\Psi$ as a composition of two maps. Let $\mathcal B$ be the set of special quaternionic polynomials $f \in \tilde \H[t]$ of degree strictly $n$ and such such that $|f(0)| = 1$. Then we have a map  $\Psi_1 \colon \mathcal A \to \mathcal B$ that sends  $(z_1, \dots, z_n, \alpha) \in \mathcal A$ to the special quaternionic polynomial $f(t)$ given by \eqref{eq:fpoly2}. Further, we have a map $\Psi_2 \colon \mathcal B \to \mathcal C  $ given by $f(t) \mapsto \bar f(t) f(t), \mathrm{Re}\,f(t)$. Clearly $\Psi = \Psi_2 \circ \Psi_1$. So it suffices to show that both maps $\Psi_1, \Psi_2$ are submersions away from a set of measure zero. Since both maps are polynomial, that is equivalent to saying that the images of both maps have non-empty interior. First consider $\Psi_1$. Its image consists of those polynomials $f \in \mathcal B $ which can be factored into linear factors. According to Proposition~\ref{prop:facspec}, that happens if and only if all complex roots of the companion polynomials of $f$ are on the imaginary axis. Since the companion polynomial of a special quaternionic polynomial is necessarily even, the set of $f$ with the desired property has non-empty interior, as needed. \par
Now consider $\Psi_2$. First assume that $n$ is odd, $n = 2k+1$. Consider $(g, h) \in \mathcal C$. By definition of $\mathcal C$, both $g,h$ are even, have degrees at most $4k+2$ and $2k$ respectively (with degree of $g$ being exactly $4k + 2$), and $g(0) = 1$. Therefore, the function $g(t) - h(t)^2 - 1 + h(0)^2$ is an even polynomial of degree at most $4k + 2$ vanishing at the origin. So, there exists a polynomial $\zeta(s)$ of degree at most $2k$  such that
\begin{equation}\label{eq:zetapoly}
g(t) - h(t)^2 - 1 + h(0)^2 = t^2\zeta(t^2).
\end{equation}
Say that $\zeta(s)$ is \textit{strictly positive} if its coefficient of $s^{2k}$ is positive, and $\zeta(s) > 0$ for any $s \in \R$. The set of strictly positive polynomials is open in the space of all real polynomials of degree at most $2k$. Now consider the subset $\Sigma$ of $\mathcal C$ consisting of pairs $g(t) ,h(t)$ such that $|h(0)| < 1$ and the associated polynomial $\zeta(s)$ defined by~\eqref{eq:zetapoly} is strictly positive. That is an open subset of $\mathcal C$. Furthermore, $(t^{2n} + t^2 + 1, 0)\in \Sigma$, so $\Sigma$ is non-empty. Let us show that $\Sigma$ is contained in the image of $\Psi_2$. Assume $(g,h) \in \Sigma$. Consider the polynomial  $\zeta(s)$ defined by~\eqref{eq:zetapoly}. Since $\zeta$ is positive of degree $2k$, there exist real polynomials $u, v$ of degree at most $k$ such that $u^2 + v^2 = \zeta$. Let also $a := \sqrt{1 - h(0)^2}$ (this is a real number since $|h(0)| < 1$). Consider a special quaternionic polynomial 
$
f(t) := h(t) + \i a + \j t u (t^2) + \k t v (t^2).
$
Then $\mathrm{Re}\, f(t) = h(t)$, and
$
\bar f(t) f(t) = g(t).
$
The latter in particular means that the degree of $f$ is exactly $n$. Furthermore, $f(0) = h(0) +  \i a = h(0) + \i \sqrt{1 - h(0)^2} $ so $f \in  \mathcal B $. 
Thus, $f \in \mathcal B$ and $\Psi_2(f) = (g,h)$, as needed.

Now assume $n$ is even, $n = 2k$. Consider  $(g, h) \in \mathcal C$. Then the polynomial $\zeta(s)$ defined by~\eqref{eq:zetapoly} has degree at most $ 2k - 1$. Consider the subset $\Sigma$ of $\mathcal C$ consisting of pairs $g(t) ,h(t)$ such that $|h(0)| < 1$, the associated polynomial $\zeta(s)$  has positive coefficient $b^2$ of $s^{2k-1}$, and the polynomial \begin{equation}\label{eq:xipoly}\xi(s) := \zeta(s) - b^2 s^{2k-1} - 2abs^{k-1} \end{equation} is strictly positive of degree $2k - 2$ (here, as before, we define $a := \sqrt{1 - h(0)^2}$). The set $\Sigma$ is open in $ \mathcal C$, and non-empty since  $( t^{2n} +t^{2n-2} + 2t^2 + 1,  0)\in \Sigma$. Let us show that $\Sigma$ is contained in the image of $\Psi_2$. Let $(g,h) \in \Sigma$. Consider the polynomial $\xi(s)$ defined by~\eqref{eq:xipoly}. Since $\xi$ is positive of degree $2k -2$, there exist  real polynomials $u,v$ of degree at most $k-1$ such that $u^2 +v^2 = \xi$. Let
$
f(t) := h(t) +\i(a + bt^{2k}) + \j t u(t^2) + \k t v (t^2).$ Then $f(t) \in   \mathcal B$, and $\Psi_2(f) = (g,h)$. Thus, the lemma is proved. \end{proof}
\begin{proof}[Proof of Theorem \ref{thm1}] We begin with part 1 of the theorem. Consider $E_1, \dots, E_n$, $I_0, \dots, I_{\lfloor n/2 \rfloor}$ as functions on the double covering of the space $ {\mathcal P}_n^E / E$. The functions~$E_j$ are symmetric polynomials of squared lengths of sides and hence descend to $ {\mathcal P}_n^E / E$. The functions $I_j$ are defined on $ {\mathcal P}_n^E / E$ up to sign. So $E_1, \dots, E_n$, $I_0^2, \dots, I_{\lfloor n/2 \rfloor}^2$ are well-defined functions on $ {\mathcal P}_n^E / E$ preserved by recutting. They Poisson commute by Proposition~\ref{prop:pbpoly} and are independent by Lemma \ref{lemma:ind1}. The number of those functions is $\lfloor 3n/2 \rfloor + 1$. The functions $E_1, \dots, E_n$, $I_0^2$ are Casimirs. For even $n$, the function $I_{n/2}^2$ is a Casimir too. So the total number $2 \lfloor n/2 \rfloor + 2$ of Casimirs coincides with the codimension of symplectic leaves. Therefore, the remaining first integrals $I_1^2, \dots, I^2_{\lceil n/2 \rceil - 1}$ are independent on generic symplectic leaves. Their number is exactly one half of the dimension of the leaves. So, the recutting dynamics on $ {\mathcal P}_n^E / E$ is indeed Arnold-Liouville integrable.

We now prove the second part. Since the squared lengths of sides considered as functions on $ {\mathcal P}_n^E / E$ are Casimirs, it follows that the symplectic leaves of  $ {\mathcal P}_n^E / E$ are compact. Therefore, by Arnold-Liouville theorem, connected components of generic joint level sets of first integrals are tori. Furthermore, since $E_1, \dots, E_n$ are symmetric functions of squared side lengths, their joint level sets are compact too. So any joint level set  of the recutting invariants $E_1, \dots, E_n$, $I_0^2, \dots, I_{\lfloor n/2 \rfloor}^2$ is compact and hence has finitely many connected components. For that reason, the subgroup of the recutting group preserving a given component must have finite index. That subgroup acts by translations by the discrete version of Arnold-Liouville theorem \cite{Veselov}. Thus, Theorem \ref{thm1} is proved. \end{proof}
\section{Recutting of closed polygons: non-Hamiltonian integrability}\label{sec:closed}
In this section we prove Theorem \ref{thm2} on integrability of recutting on the space $\mathcal P_n / E$ of isometry classes of closed polygons. Here is a version of Lemma \ref{lemma:ind1} adapted to closed polygons:
\begin{lemma}\label{lemma:ind2}
Assume that $n \geq 3$. Then the mapping $\{ (z_1, \dots, z_n) \in (\C^*)^n \mid \sum z_i = 0\} \to \R^{\lfloor 3n/2 \rfloor - 1}$ taking $(z_1, \dots, z_n)$ to the values of $E_1, \dots, E_n, I_2, \dots, I_{\lfloor n/2 \rfloor}$ defined by \eqref{eq:trinv} is a submersion away from a set of measure zero. 
\end{lemma}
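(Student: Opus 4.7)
The plan is to adapt the proof of Lemma \ref{lemma:ind1} to the closed-polygon setting. Set
\[
\mathcal{A}' := \{(z_1, \ldots, z_n) \in (\C^*)^n : \textstyle\sum_i z_i = 0\},
\]
let $\mathcal{B}' \subset \tilde\H[t]$ consist of special quaternionic polynomials of degree exactly $n$ with $f(0) = 1$ and vanishing coefficient of $t$, and let $\mathcal{C}'$ be the variety of pairs $(g,h)$ of even real polynomials with $\deg g = 2n$, $\deg h \leq 2\lfloor n/2 \rfloor$, $g(0) = h(0) = 1$, and the coefficient of $t^2$ in $g$ equal to twice that in $h$ (this last relation encodes the identity $I_1 = -\tfrac{1}{2}E_1$ from Proposition \ref{prop:geom} for closed polygons). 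Factor the map of the lemma as $\Psi' = \Psi_2' \circ \Psi_1'$, with
\[
\Psi_1'(z_1, \ldots, z_n) := \prod\nolimits_i(1 + z_i \j t) \in \mathcal{B}', \qquad \Psi_2'(f) := (\bar f f,\, \mathrm{Re}\,f) \in \mathcal{C}',
\]
so that the coefficients of $g$ and $h$ give precisely $E_1, \ldots, E_n, I_2, \ldots, I_{\lfloor n/2 \rfloor}$ (with $I_0 = 1$ and $I_1 = -E_1/2$ automatic). A dimension count yields $\dim \mathcal{A}' = \dim \mathcal{B}' = 2n - 2$ and $\dim \mathcal{C}' = \lfloor 3n/2 \rfloor - 1$, so since both maps are polynomial, it suffices (as in Lemma \ref{lemma:ind1}) to show each has image with nonempty interior.

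For $\Psi_1'$, since source and target are of equal dimension, this amounts to producing a regular value. By Proposition \ref{prop:facspec}, $f \in \mathcal{B}'$ lies in the image iff all complex roots of $\bar f f$ lie on the imaginary axis, which is an open condition. I would choose $f_0 \in \mathcal{B}'$ for which this condition is strict and all linear factors are isolated and distinct (e.g.\ coming from a small generic perturbation of a regular $n$-gon, which lies in $\mathcal{A}'$ since $\sum_i \omega^i = 0$), and then check that any factorization $f_0 = \prod(a_i + b_i \j t)$ of such an $f_0$ can be normalized to the form $\prod(1 + z_i \j t)$ with $\sum z_i = 0$ by redistributing the complex scalars $a_i$ across the factors (using that $\prod a_i = f_0(0) = 1$ and that the $t$-coefficient in $\prod(1+z_i \j t)$ is $\j \sum z_i$, which must match the $t$-coefficient of $f_0$).

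For $\Psi_2'$, write $f = P(t^2) + Q(t^2)\i + t\j R(t^2) + t\k S(t^2)$ with real polynomials $P, Q, R, S$; the defining conditions of $\mathcal{B}'$ become $P(0) = 1$ and $Q(0) = R(0) = S(0) = 0$. Then $h(t) = P(t^2)$ and $g(t) - h(t)^2 = Q(t^2)^2 + t^2\bigl(R(t^2)^2 + S(t^2)^2\bigr)$, which vanishes to order at least $t^4$ precisely because of the $[t^2]g = 2[t^2]h$ constraint defining $\mathcal{C}'$. Writing $g - h^2 = t^4 \xi(t^2)$ and substituting $Q(s) = s\tilde Q(s)$, $R(s) = s\tilde R(s)$, $S(s) = s\tilde S(s)$ reduces the existence question to finding a decomposition
\[
\xi(s) = \tilde Q(s)^2 + s\bigl(\tilde R(s)^2 + \tilde S(s)^2\bigr)
\]
with degree bounds depending on the parity of $n$. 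I would then run the same even/odd case analysis as in Lemma \ref{lemma:ind1}: select $\tilde Q$ so that $\tilde Q(0)^2 = \xi(0)$ (hence $\xi - \tilde Q^2$ is divisible by $s$) and so that the quotient $(\xi - \tilde Q^2)/s$ is strictly positive, and then invoke the one-variable case of Hilbert's theorem to write that strictly positive polynomial as $\tilde R^2 + \tilde S^2$. This exhibits an explicit open subset of $\mathcal{C}'$ in the image of $\Psi_2'$.

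The main obstacle is the second step. Unlike Lemma \ref{lemma:ind1}, we now have three additional vanishing conditions $Q(0) = R(0) = S(0) = 0$ that must be compatibly maintained, which forces delicate bookkeeping of the degrees of $\tilde Q, \tilde R, \tilde S$ and of $\xi$ in each parity case; and one must verify that the open set of $(g,h) \in \mathcal{C}'$ for which a suitable $\tilde Q$ making $(\xi-\tilde Q^2)/s$ strictly positive can be chosen is genuinely nonempty (for instance by exhibiting a specific base point and appealing to openness).
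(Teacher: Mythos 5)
Your proposal matches the paper's argument essentially step for step: the same factorization $\Psi_2'\circ\Psi_1'$ through $\mathcal B'$, the same characterization of the image of $\Psi_1'$ via Proposition~\ref{prop:facspec}, and the same strategy for $\Psi_2'$ of writing $f = P(t^2) + Q(t^2)\i + t\j R(t^2) + t\k S(t^2)$, pulling out the forced $t^4$ factor, and decomposing the quotient using positivity and a sum of two squares. The ``delicate bookkeeping'' you flag at the end is precisely what the paper's even/odd case analysis supplies: for odd $n$ one takes $\tilde Q$ to be the constant $\sqrt{\xi(0)}$, and for even $n$ one takes $\tilde Q(s) = a + bs^{k-1}$ and subtracts the corresponding correction before applying the two-squares decomposition, with non-emptiness of the resulting open set checked on explicit base points. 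One small simplification you can make: you do not need to argue case by case that a factorization of $f_0$ can be normalized with $\sum z_i = 0$ --- once $f_0$ factors into linear special polynomials at all, scalars can always be pushed to the left to obtain $\prod(1+z_i\j t)$, and then $\sum z_i = 0$ and $\prod$-over-$\C$ of constants $=1$ are automatic from $f_0(0)=1$ and the vanishing $t$-coefficient, so the image-of-$\Psi_1'$ step really is verbatim the same as in Lemma~\ref{lemma:ind1}, as the paper asserts.
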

\begin{proof}
The cases $n = 3, 4$ are straightforward so from this point on assume $n \geq 5$. 
The proof is a modification of that of Lemma \ref{lemma:ind1}. The sets $\mathcal A ,\mathcal B, \mathcal C$ and the maps $\Psi_1, \Psi_2$ are now defined as follows. The set $\mathcal A$ is $\{ (z_1, \dots, z_n) \in (\C^*)^n \mid \sum z_i = 0\}$. The set $\mathcal B$ consists of special quaternionic polynomials $f \in \tilde \H[t]$ of degree strictly $n$ and such such that $f(0) = 1$ and the coefficient of $t$ is equal to $0$. The map $\Psi_1 \colon \mathcal A \to \mathcal B$ takes  $(z_1, \dots, z_n) \in \mathcal A$ to the special quaternionic polynomial $f(t)$ given by \eqref{eq:fpoly}. The set $\mathcal C$ consists of pairs of real polynomials $g ,h$ of the form
$$
g(t) = 1 + \sum_{i=1}^n E_i t^{2i} ,\quad 
h(t) = 1 + \sum_{i = 1}^{\lfloor n/2 \rfloor} J_{i}t^{2i},
$$
where $E_1 = 2J_1$, and $E_n \neq 0$. The map $\Psi_2 \colon \mathcal B \to \mathcal C  $ is again given by $f(t) \mapsto \bar f(t) f(t), \mathrm{Re}\,f(t)$. The statement of the lemma is equivalent to saying that $\Psi_2 \circ \Psi_1$ is a submersion away from a set of measure zero. The proof that $\Psi_1$ has this property is exactly the same as in  Lemma \ref{lemma:ind1}. So it suffices to show that the image of $\Psi_2$ has non-empty interior.\par
 First assume that $n$ is odd, $n = 2k+1$. Consider $(g, h) \in \mathcal C$. Then  the polynomial $g(t)-h(t)^2$ is even, has degree at most $4k + 2$, and can be written as $(E_2 - J_1^2 - 2J_2 )t^4 + O(t^6)$. Therefore, there is a polynomial $\zeta(s)$ of degree at most $2k-2$ such that
\begin{equation}\label{eq:zetapoly2}
 g(t)-h(t)^2 - (E_2 - J_1^2 - 2J_2 )t^4 = t^6\zeta(t^2).
\end{equation}
 Consider the subset $\Sigma \subset \mathcal C$ consisting of pairs $g ,h$ such that $E_2 - J_1^2 - 2J_2 > 0$, and the polynomial $\zeta(s)$ is strictly positive. Then $\Sigma$ is open in $ \mathcal C$, and non-empty since $(t^{2n}+t^6+t^4 +1, 1) \in \Sigma$. Let us show that $\Sigma$ is contained in the image of $\Psi_2$. Let $(g,h) \in \Sigma$. Let also $a := \sqrt{E_2 - J_1^2 - 2J_2}$. 
 Since the polynomial $\zeta$ is strictly positive, there exist polynomials $u, v$ of degree at most $k-1$ such that $u^2 + v^2 = \zeta$. Let  
$
f(t) := h(t) + \i at^2 + \j t^3 u (t^2) + \k t^3 v (t^2).
$
Then  $f \in  \mathcal B $, and $\Psi_2(f) = (g,h)$, as needed.

Now assume $n$ is even, $n = 2k$. Consider  $(g, h) \in \mathcal C$. Then the polynomial $\zeta(s)$ defined by~\eqref{eq:zetapoly2} has degree at most $2k - 3$ and its coefficient of $s^{2k - 3}$ is equal to $E_{2k} - J_k^2$. Consider the subset $\Sigma$ of $\mathcal C$ consisting of pairs $g(t) ,h(t)$ such that $E_2 - J_1^2 - 2J_2 = a^2 > 0$, $E_{2k} - J_k^2 = b^2 > 0$, and the polynomial \begin{equation}\label{eq:xipoly2}\xi(s) := \zeta(s) - b^2 s^{2k-3} - 2 ab s^{k-2}\end{equation} is strictly positive of degree $2k - 4$ (note that $k > 1$ so $\xi$ is indeed a polynomial). The set $\Sigma$ is open in $ \mathcal C$, and non-empty since $(t^{2n} + t^{2n-2}+2t^6 + t^4 +1,1)\in \Sigma$. Let us show that $\Sigma$ is contained in the image of $\Psi_2$. Let $(g,h) \in \Sigma$. Consider the polynomial $\xi(s)$ defined by~\eqref{eq:xipoly2}. Since $\xi$ is positive of degree $2k -4$, there exist  real polynomials $u,v$ of degree at most $k-2$ such that $u^2 +v^2 = \xi$. Let
$
f(t) := h(t) + \i (at^2 + bt^{2k} ) + \j t^3 u (t^2) + \k t^3 v (t^2).$ Then $f \in   \mathcal B$, and $\Psi_2(f) = (g,h)$. Thus, the lemma is proved. \end{proof}
\begin{proof}[Proof of Theorem \ref{thm2}]
We begin with the first part of the theorem. It is easy to see that  $\mathcal P_n / E$ is a codimension~$3$, and hence dimension $2n - 3$, submanifold of  $\mathcal P_n^E / E$.  By Lemma \ref{lemma:ind2}, the restrictions of the functions $E_1, \dots, E_n, I^2_2, \dots, I^2_{\lfloor n/2 \rfloor}$ on  $\mathcal P_n^E / E$ to $\mathcal P_n / E$ are functionally independent. Since on $\mathcal P_n / E$ we have $I_0^2 = 1$, this in particular means that the differentials of $I_0^2, E_1, \dots, E_n, I^2_2, \dots, I^2_{\lfloor n/2 \rfloor}$, considered as $1$-forms on the ambient manifold $\mathcal P_n^E / E$, are independent at generic points of $\mathcal P_n / E$. From this it follows that the commuting Hamiltonian vector fields $X_2, \dots, X_{\lceil n/2 \rceil - 1} $ generated by $I_2^2, \dots, I^2_{\lceil n/2 \rceil - 1}$ are linearly independent almost everywhere on $\mathcal P_n / E$. Since the submanifold $\mathcal P_n / E$ of $\mathcal P_n^E / E$ is locally defined by the equations $I_0^2 = 1$ and \eqref{eq:cc}, and all $I_j$'s and $E_j$'s Poisson commute, it follows that the vector fields $X_2, \dots, X_{\lceil n/2 \rceil - 1} $ are tangent to $\mathcal P_n / E$. Moreover, those vector fields preserve the invariants $E_1, \dots, E_n, I^2_2, \dots, I^2_{\lfloor n/2 \rfloor}$ (again, because $I_j$'s and $E_j$'s Poisson commute). So, the recutting action on $\mathcal P_n / E$ has $\lfloor 3n/2 \rfloor$ independent first integrals  $E_1, \dots, E_n, I^2_2, \dots, I^2_{\lfloor n/2 \rfloor}$ and a complementary number $\lceil n/2 \rceil - 2$ of commuting invariant vector fields $X_2, \dots, X_{\lceil n/2 \rceil - 1} $ which are also independent and tangent to joint level sets of first integrals. Thus, the first part of Theorem \ref{thm2} is proved. Furthermore, the second part now follows from the non-Hamiltonian version of the Arnold-Liouville theorem, see e.g. \cite{bogoyavlenskij1998extended}. Thus, Theorem \ref{thm2} is proved.
\end{proof}

\section*{Appendix: The braid relation}
\addcontentsline{toc}{section}{Appendix: The braid relation}
\setcounter{lemma}{0}
\renewcommand{\thesection}{A}
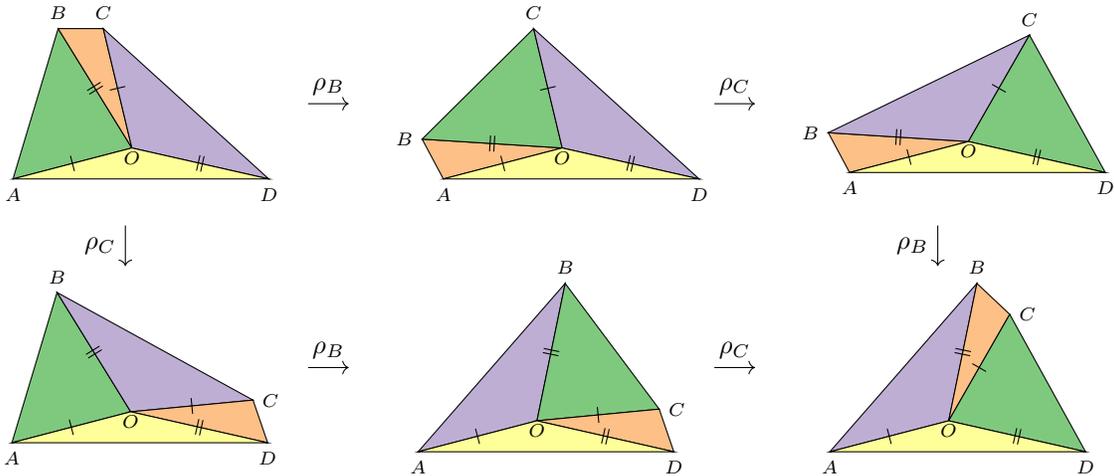
\begin{figure}[b!]
\centering
\begin{tikzpicture}[scale = 0.27]

\draw [->] (8,0) -- (10,0) node[midway, above] {$\rho_B$}; ;
\draw [->] (28,0) -- (30,0) node[midway, above] {$\rho_{C}$}; ;
\draw [->] (8,-13) -- (10,-13) node[midway, above] {$\rho_B$};
\draw [->] (28,-13) -- (30,-13) node[midway, above] {$\rho_{C}$}; ;
\draw [->] (-1,-6) -- (-1,-8) node[midway, left] {$\rho_{C}$}; ;
\draw [->] (39,-6) -- (39,-8) node[midway, left] {$\rho_{B}$}; ;
\node at (0,0)
{
\begin{tikzpicture}[scale = 0.2]
\coordinate [label=below:{$\scriptstyle{A}$}]  (A) at (0,-10) {};
\coordinate [label=above:{$\scriptstyle{B}$}]  (B) at (3,0) {};
\coordinate [label=above:{$\scriptstyle{C}$}](C) at (6,0) {};
\coordinate[label=below:{$\scriptstyle{D}$}] (D) at (17,-10) {};
\coordinate  (O) at (7.9,-7.94) {};

\fill [color = color1, draw = black] (A) -- (O) -- (B) -- cycle ;
\fill [color = color3, draw = black] (B) -- (O) -- (C) -- cycle;
\fill [color =  color2, draw = black] (C) -- (O) -- (D) -- cycle;
\fill [ color =  color4, draw = black] (A) -- (O) -- (D) -- cycle;

  \tkzMarkSegment[size = 3pt, pos=.5,mark=|](A,O);
  \tkzMarkSegment[size = 3pt, pos=.5,mark=|](O,C);
  \tkzMarkSegment[size = 3pt, pos=.5,mark=||](B,O);
  \tkzMarkSegment[size = 3pt, pos=.5,mark=||](O,D);
\node    () at (7.9,-8.6) {$\scriptstyle{O}$};
\end{tikzpicture}
\quad
};
\node at (0,-13)
{
\begin{tikzpicture}[scale = 0.2]

\coordinate [label=below:{$\scriptstyle{A}$}]  (A) at (0,-10) {};
\coordinate [label=above:{$\scriptstyle{B}$}]  (B) at (3,0) {};
\coordinate [label=right:{$\scriptstyle{C}$}](C) at (16.027, -7.1622) {};
\coordinate[label=below:{$\scriptstyle{D}$}] (D) at (17,-10) {};
\coordinate  (O) at (7.9,-7.94) {};

\fill [color = color1, draw = black] (A) -- (O) -- (B) -- cycle ;
\fill [color = color2, draw = black] (B) -- (O) -- (C) -- cycle;
\fill [color =  color3, draw = black] (C) -- (O) -- (D) -- cycle;
\fill [ color =  color4, draw = black] (A) -- (O) -- (D) -- cycle;

  \tkzMarkSegment[size = 3pt, pos=.5,mark=|](A,O);
  \tkzMarkSegment[size = 3pt, pos=.5,mark=|](O,C);
  \tkzMarkSegment[size = 3pt, pos=.5,mark=||](B,O);
  \tkzMarkSegment[size = 3pt, pos=.5,mark=||](O,D);
\node    () at (7.9,-8.6) {$\scriptstyle{O}$};
\end{tikzpicture}
\quad
};
\node at (20,-13)
{
\begin{tikzpicture}[scale = 0.2]
\coordinate [label=below:{$\scriptstyle{A}$}]  (A) at (0,-10) {};
\coordinate [label=above:{$\scriptstyle{B}$}]  (B) at (9.7758,1.1998) {};
\coordinate [label=right:{$\scriptstyle{C}$}](C) at (16.027, -7.1622) {};
\coordinate[label=below:{$\scriptstyle{D}$}] (D) at (17,-10) {};
\coordinate  (O) at (7.9,-7.94) {};

\fill [color = color2, draw = black] (A) -- (O) -- (B) -- cycle ;
\fill [color = color1, draw = black] (B) -- (O) -- (C) -- cycle;
\fill [color =  color3, draw = black] (C) -- (O) -- (D) -- cycle;
\fill [ color =  color4, draw = black] (A) -- (O) -- (D) -- cycle;

  \tkzMarkSegment[size = 3pt, pos=.5,mark=|](A,O);
  \tkzMarkSegment[size = 3pt, pos=.5,mark=|](O,C);
  \tkzMarkSegment[size = 3pt, pos=.5,mark=||](B,O);
  \tkzMarkSegment[size = 3pt, pos=.5,mark=||](O,D);
\node    () at (7.9,-8.6) {$\scriptstyle{O}$};
\end{tikzpicture}
\quad
};
\node at (20,0) {
\begin{tikzpicture}[scale = 0.2]
\coordinate [label=below:{$\scriptstyle{A}$}]  (A) at (0,-10) {};
\coordinate [label=above:{$\scriptstyle{C}$}](C) at (6,0) {};
\coordinate[label=below:{$\scriptstyle{D}$}] (D) at (17,-10) {};
\coordinate   (O) at (7.9,-7.94) {};
\coordinate   [label=left:{$\scriptstyle{B}$}] (Bp) at (-1.4118,-7.3529) {};

\fill [color = color3, draw = black] (A) -- (O) -- (Bp) -- cycle ;
\fill [color = color1, draw = black] (Bp) -- (O) -- (C) -- cycle;
\fill [color =  color2, draw = black] (C) -- (O) -- (D) -- cycle;
\fill [ color =  color4, draw = black] (A) -- (O) -- (D) -- cycle;

  \tkzMarkSegment[size = 3pt, pos=.5,mark=|](A,O);
  \tkzMarkSegment[size = 3pt, pos=.5,mark=|](O,C);
  \tkzMarkSegment[size = 3pt, pos=.5,mark=||](Bp,O);
  \tkzMarkSegment[size = 3pt, pos=.5,mark=||](O,D);
\node    () at (7.9,-8.6) {$\scriptstyle{O}$};
\end{tikzpicture}
};
\node at (40,0) {
\begin{tikzpicture}[scale = 0.2]
\coordinate [label=below:{$\scriptstyle{A}$}]  (A) at (0,-10) {};
\coordinate [label=above:{$\scriptstyle{C}$}](Cp) at (11.9599,-0.8569) {};
\coordinate[label=below:{$\scriptstyle{D}$}] (D) at (17,-10) {};
\coordinate   (O) at (7.9,-7.94) {};
\coordinate   [label=left:{$\scriptstyle{B}$}] (Bp) at (-1.4118,-7.3529) {};

\fill [color = color3, draw = black] (A) -- (O) -- (Bp) -- cycle ;
\fill [color = color2, draw = black] (Bp) -- (O) -- (Cp) -- cycle;
\fill [color =  color1, draw = black] (Cp) -- (O) -- (D) -- cycle;
\fill [ color =  color4, draw = black] (A) -- (O) -- (D) -- cycle;

  \tkzMarkSegment[size = 3pt, pos=.5,mark=|](A,O);
  \tkzMarkSegment[size = 3pt, pos=.5,mark=|](O,Cp);
  \tkzMarkSegment[size = 3pt, pos=.5,mark=||](Bp,O);
  \tkzMarkSegment[size = 3pt, pos=.5,mark=||](O,D);
\node    () at (7.9,-8.6) {$\scriptstyle{O}$};
\end{tikzpicture}
};
\node at (40,-13) {
\begin{tikzpicture}[scale = 0.2]
\coordinate [label=below:{$\scriptstyle{A}$}]  (A) at (0,-10) {};
\coordinate [label=right:{$\scriptstyle{C}$}](Cp) at (11.9599,-0.8569) {};
\coordinate[label=below:{$\scriptstyle{D}$}] (D) at (17,-10) {};
\coordinate   (O) at (7.9,-7.94) {};
\coordinate   [label=above:{$\scriptstyle{B}$}] (Bp) at (9.7758,1.1998) {};

\fill [color = color2, draw = black] (A) -- (O) -- (Bp) -- cycle ;
\fill [color = color3, draw = black] (Bp) -- (O) -- (Cp) -- cycle;
\fill [color =  color1, draw = black] (Cp) -- (O) -- (D) -- cycle;
\fill [ color =  color4, draw = black] (A) -- (O) -- (D) -- cycle;

  \tkzMarkSegment[size = 3pt, pos=.5,mark=|](A,O);
  \tkzMarkSegment[size = 3pt, pos=.5,mark=|](O,Cp);
  \tkzMarkSegment[size = 3pt, pos=.5,mark=||](Bp,O);
  \tkzMarkSegment[size = 3pt, pos=.5,mark=||](O,D);
\node    () at (7.9,-8.6) {$\scriptstyle{O}$};
\end{tikzpicture}
};

\end{tikzpicture}
\caption{ The braid relation.}\label{FigA}
\end{figure}
Recuttings at adjacent vertices satisfy the braid relation. Adler \cite{Adler2} provides an algebraic argument. Here we give a geometric proof. 
\begin{proposition}\label{prop:braid}
Recuttings at adjacent vertices satisfy the braid relation
 $
\rho_i \rho_{i+1}\rho_i = \rho_{i+1} \rho_{i}\rho_{i+1}.
$
\end{proposition}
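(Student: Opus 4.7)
My plan is to reduce the braid relation to a one-dimensional angular identity by introducing a single pivot point $O$ that is respected by both generators. Write $A=v_{i-1}$, $B=v_i$, $C=v_{i+1}$, $D=v_{i+2}$, and let $O$ be the intersection of the perpendicular bisector $\ell_{AC}$ of the segment $AC$ with the perpendicular bisector $\ell_{BD}$ of $BD$. In the generic situation where these lines are non-parallel, $O$ exists uniquely, and by construction $|OA|=|OC|$ and $|OB|=|OD|$. The recutting $\rho_i$, which reflects $B$ across $\ell_{AC}$, thus fixes $O$ and preserves $|OB|$; symmetrically $\rho_{i+1}$ fixes $O$ and preserves $|OC|$. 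This is essentially what the marked congruent segments in Figure~\ref{FigA} are recording.

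The crucial observation I would then verify by induction is that after any partial sequence of applications of $\rho_i$ and $\rho_{i+1}$, the four radial distances $|OA|$, $|OB|$, $|OC|$, $|OD|$ to the (possibly moved) vertices remain equal to their original values. Hence the perpendicular bisector of the current $AC$ chord still passes through $O$, as does that of the current $BD$ chord, so every recutting appearing in either word $\rho_i\rho_{i+1}\rho_i$ or $\rho_{i+1}\rho_i\rho_{i+1}$ is a reflection about a line through $O$. In particular the moving vertices remain on two concentric circles centered at $O$.

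I would then parameterize the vertices by their polar angles $\alpha,\beta,\gamma,\delta$ about $O$, where $\alpha$ and $\delta$ are constant throughout. In these coordinates the action is trivial to describe: $\rho_i$ sends $\beta\mapsto\alpha+\gamma-\beta$ (fixing $\alpha,\gamma,\delta$), and $\rho_{i+1}$ sends $\gamma\mapsto\beta+\delta-\gamma$ (fixing $\alpha,\beta,\delta$). A three-line calculation then shows that both compositions $\rho_i\rho_{i+1}\rho_i$ and $\rho_{i+1}\rho_i\rho_{i+1}$ produce the final angles $(\beta',\gamma')=(\alpha+\delta-\gamma,\,\alpha+\delta-\beta)$. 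Since each moving vertex is determined by its angle and its fixed radius, the two compositions produce the same quadrilateral, hence the same polygon.

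The only real obstacle is the degenerate case in which $\ell_{AC}\parallel\ell_{BD}$, so that the pivot $O$ fails to exist in the affine plane. Here the same scheme still applies with $O$ placed at infinity: the two reflections become reflections about parallel lines, the ``radial distances'' are replaced by signed distances to a common direction, and the angular computation is replaced by the analogous one-dimensional calculation with translations. Since recutting is continuous in the vertex positions and the generic case is dense, the braid relation extends to the degenerate locus by continuity, so I do not expect this to cause real difficulty; the heart of the proof is the identification of the pivot $O$ simultaneously preserved by both generators, which reduces the braid relation to three reflections of points on a line.
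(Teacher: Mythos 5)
Your proof is correct, and it rests on the same key insight as the paper's: identifying the pivot $O$ at the intersection of the perpendicular bisectors of the two diagonals $AC$ and $BD$, which is simultaneously respected by $\rho_i$ and $\rho_{i+1}$. Where you differ is in the mechanics after that. The paper's argument is synthetic: it cuts $ABCD$ into the four triangles $AOB$, $BOC$, $COD$, $DOA$, observes that each recutting is a detach-swap-reattach of the two triangles adjacent to that vertex, and concludes that both words $\rho_i\rho_{i+1}\rho_i$ and $\rho_{i+1}\rho_i\rho_{i+1}$ amount to the same global rearrangement (interchanging $AOB$ with $COD$) — essentially lifting the braid relation for adjacent transpositions to a relation among cut-and-paste operations. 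You instead pass to polar coordinates about $O$: after noting (by the invariant radii you establish inductively) that every reflection in either word is through a line containing $O$, you reduce the braid relation to the affine identity on polar angles $\beta\mapsto\alpha+\gamma-\beta$, $\gamma\mapsto\beta+\delta-\gamma$, and verify by direct computation that both three-fold compositions send $(\beta,\gamma)$ to $(\alpha+\delta-\gamma,\,\alpha+\delta-\beta)$. Your version is more computational but has the virtue of making the ``three reflections on a line'' structure completely explicit; the paper's version is slicker and visualizable but requires the reader to trust the orientation bookkeeping of the triangle swaps. Both handle non-generic quadrilaterals the same way (density plus continuity/analytic continuation), so there is no real gap either way.
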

\begin{proof}
It suffices to show that for any quadrilateral $ABCD$ one has $\rho_B\rho_C\rho_B = \rho_C\rho_B\rho_C$. Moreover, it is sufficient to consider the case of a convex quadrilateral. The general case follows by analytic continuation. \par Let $ABCD$ be a convex quadrilateral, and let $O$ be the intersection point of perpendicular bisectors to its diagonals. Consider the triangles $AOB$, $BOC$, $COD$, $DOA$ shown in Figure \ref{FigA}. Observe that recutting at any vertex is equivalent to detaching two of those  triangles (namely those that are adjacent to the given vertex), and then attaching them back but switched and with opposite orientations. As a result, both transformations $\rho_B\rho_C\rho_B$ and $\rho_C\rho_B\rho_C$ boil down to cutting $ABCD$ into four triangles $AOB$, $BOC$, $COD$, $DOA$ and then gluing them back interchanging~${AOB}$ with ${COD}$. So, we indeed have $\rho_B\rho_C\rho_B = \rho_C\rho_B\rho_C$.\end{proof}
 \begin{remark}
 This argument shows that the intersection point of perpendicular bisectors of diagonals is invariant under recuttings of a quadrilateral. For more general polygons, a point with this property is known as the \textit{circumcenter of mass}. Consider an arbitrarily triangulation of a polygon. Place point masses at circumcenters of the triangles, with each mass being proportional to the area of the corresponding triangle. Then their center of mass is independent of the triangulation and is called the {circumcenter of mass} of the polygon \cite{TT, Akopyan}. Its invariance under recutting was observed in \cite{Adler}. For quadrilaterals, the circumcenter of mass is precisely the intersection point of perpendicular bisectors of diagonals \cite[Remark 3.3]{TT}.
 
 
  \end{remark}
  \begin{remark}\label{rem:quad}
  Since recutting of a quadrilateral amounts to switching colored triangles in Figure \ref{FigA}, there are only finitely many (isometry classes of) quads that can be obtained from a given one by means of a sequence of recuttings. 
  \end{remark}
\bibliographystyle{plain}
\bibliography{recut.bib}

\end{document}